\documentclass[onecolumn, draftcls, 11pt]{IEEEtran}

\usepackage{todonotes}

\usepackage[utf8]{inputenc} 
\usepackage[T1]{fontenc}
\usepackage{url}
\usepackage{ifthen}
\usepackage{cite}
\usepackage[cmex10]{amsmath} 
\usepackage{graphicx} 
\usepackage{amssymb}
\usepackage{amsthm}
\usepackage{mathtools}
\usepackage{hyperref}
\usepackage{bbm}
\usepackage{tikz,lipsum,lmodern}
\usetikzlibrary{patterns, calc, intersections}
\usepackage[most]{tcolorbox}
\newtheorem{theorem}{Theorem}
\theoremstyle{definition}
\newtheorem{remark}{Remark}

\newtheorem{lemma}{Lemma}
\newtheorem{example}{Example}

\usepackage{algorithm}
\usepackage{algpseudocode}

\usepackage{tikz}
\usepackage{amsmath}
\usetikzlibrary{shapes, shapes.misc, arrows.meta, calc}

\newcommand{\by}{\mathbf{y}}

\newcommand{\MSE}{\mathsf{MSE}}
\newcommand{\PA}{\mathsf{PA}}

\usepackage{amsmath} 

\newcommand{\moh}[1]{{\color{blue}#1}}

\begin{document}
\title{Game of Coding for Vector-Valued Computations\thanks{
  The work of Mohammad Ali Maddah-Ali, Hanzaleh Akbari Nodehi, and Parsa Moradi has been partially supported by the National Science Foundation under Grant CCF-2348638. The work of Soheil Mohajer is  supported in part by AFOSR under the grant FA9550-23-1-0057. 
  }} 


\author{}


\author{%
  \IEEEauthorblockN{Hanzaleh Akbari Nodehi, Parsa Moradi, Soheil Mohajer, and Mohammad Ali Maddah-Ali}
  \IEEEauthorblockA{University of Minnesota, Twin Cities\\
                    Minneapolis, MN, USA\\
                    Email: \{akbar066, moradi, soheil, maddah\}@umn.edu}
}

\maketitle
\begin{abstract}
Traditional coding theory guarantees valid decoding only if a minority of symbols are adversarially manipulated. In contrast, the game of coding framework ensures reliable decoding, even in the presence of an adversarial majority.
 This formulation is motivated by emerging permissionless applications, particularly decentralized machine learning (DeML), where computation tasks are outsourced to external volunteer nodes that are predominantly rational and reward-seeking.


Prior investigations have analyzed the game of coding in the scalar setting. Since the results of most major computations in machine learning are vectors (e.g., computing the gradient of the loss for a machine learning model), we extend the framework in this paper to the general multi-dimensional Euclidean space. As a first, yet fundamental step, in this paper, we study a two-repetition code in which at least one node is controlled by a rational adversary, and we fully characterize the equilibrium and the optimal strategies of the players. Similar to the scalar case, this result serves as a cornerstone for addressing more general scenarios.
 

\end{abstract}

\section{Introduction}
Consider a system comprising a data collector (DC) and a set of $M$ external worker nodes. The DC outsources a  (perhaps approximate) computational task, such as the calculation of the gradient of a loss function for a machine learning model, to these workers, who return their results to the DC for aggregation. The network consists of two disjoint sets of workers: a set of honest nodes, denoted by $\mathcal{H}$, who faithfully adhere to the protocol, and a set of adversarial nodes, denoted by $\mathcal{T}$. We assume that these sets partition the network, such that $\mathcal{H} \cap \mathcal{T} = \emptyset$ and $\mathcal{H} \cup \mathcal{T} = \{1, \dots, M\}$. The DC is unaware of the membership of each set.

In the presence of adversarial nodes,  outsourced tasks are intentionally designed with (coded) redundancy, allowing the DC to recover an (approximate) result by aggregating all received outputs. Repetition is a specific type of redundancy, where the DC asks the computing nodes to evaluate the same function. In classical coding theory, this setting corresponds to basic repetition coding, where  successful decoding relies on explicit trust assumption that $|\mathcal{H}| \ge |\mathcal{T}| + 1$. For more advanced and efficient codes, where $K$ computing tasks can be performed in parallel, the requirement becomes stricter: for instance, a Reed--Solomon $(K,M)$ codes \cite{SudanBook} require $|\mathcal{H}| \ge |\mathcal{T}| + K$, while Lagrange coded computing~\cite{yu2019lagrange} for a polynomial function of degree~$d$ requires $|\mathcal{H}| > |\mathcal{T}| + (K-1)d$. Similar hard thresholds characterize recoverability in analog coding schemes \cite{jahani2018codedsketch,ZamirCoded, roth2020analog,BACC}.
In all these cases, a fundamental trust assumption is imposed: the honest workers must outnumber the adversaries by a certain margin. Consequently, if the majority of the network is adversarial, classical approaches fail to 
decode a correct result.

This limitation is particularly problematic in the emerging landscape of Web3 \cite{bitcoin2008bitcoin, buterin2013ethereum, ruoti2019sok}, specifically in decentralized machine learning (DeML). In DeML, training or inference is often coordinated by smart contracts (as the DC) on a blockchain to ensure transparency and accountability \cite{shafay2023blockchain, ding2022survey, kayikci2024blockchain, taherdoost2023blockchain, taherdoost2022blockchain, tian2022blockchain, salah2019blockchain}.  Given that blockchains cannot handle the heavy computational loads of modern AI, tasks must be outsourced to off-chain networks of volunteer workers \cite{zhao2021veriml}. However, the inherent permissionless nature of these systems allows unrestricted access to any contributor, making the conventional assumption of an honest majority difficult to guarantee. Thus, we cannot rely on classic coding theory to guarantee reliable decoding. 
 
On the other hand, these systems exhibit another important aspect. In decentralized (blockchain-based) environments, the behavior of worker nodes is primarily driven by economic incentives (e.g., cryptocurrency rewards). As a result, they act as rational agents optimizing their payoffs, rather than as purely malicious actors intent on system disruption. Recognizing that adversarial nodes behave as \textit{rational} players, rather than purely malicious ones, fundamentally changes the nature of the problem.

In this setting, the DC announces a reward policy: computations that satisfy specified acceptance criteria are rewarded, while others are rejected. These acceptance conditions are verifiable by the DC. However, a na\"ive criterion such as ``being correct'' is not feasible, without 
incurring the substantial overhead associated with cryptographic methods. On the other hand, the DC may require that any two reported results lie within a prescribed small distance of each other. If at least one of the nodes is honest, this can serve as a reasonable indication of (approximate) correctness.


In such scenarios, rational adversaries face a conflict of interest: they wish to maximize the error into the DC's final estimate of the result, but they also desire the financial reward, which is contingent on their results being accepted. This highlights a key characteristic of such systems, \emph{liveness}, defined as the probability that the DC's acceptance condition is satisfied and that it can (approximately) recover the desired computation.

Purely malicious actors do not care about liveness and may inject arbitrary errors into their inputs that violate the acceptance rule, as the DC's safeguard.
In contrast, rational players optimize their strategies by balancing their interest in the probability of acceptance (liveness) with the magnitude of the error they can successfully inject.
Conversely, the DC strategically optimizes its decision rule to increase the probability of acceptance while minimizing estimation error.
This interaction creates a game-theoretic scenario, formally introduced as the \textit{game of coding} framework in \cite{GoCJournal, GoDSybil, nodehi2025unknown, nodehi2026game}.

The game of coding framework offers a viable alternative to existing outsourcing solutions for DeML (see \cite{nodehi2026game} for a survey):
\begin{itemize}
    \item \textbf{Verifiable Computing:} This approach guarantees correctness of the results by requiring workers to generate cryptographic proof of correctness along with their results~\cite{thaler2022proofs,feng2021zen}. However, this method is often computationally prohibitive~\cite{liu2021zkcnn, xing2023zero, mohassel2017secureml, lee2024vcnn, weng2021mystique} and is restricted to exact computation~\cite{weng2021mystique, chen2022interactive, garg2022succinct, setty2012taking}, which conflicts with the approximate nature of AI.
    
    \item \textbf{Optimistic Verification:} This common approach assumes computations are correct by default and relies on a challenge-response mechanism to ensure correctness~\cite{bhat2023sakshi, conway2024opml}. In this model, the system assumes a result is correct unless a node acting as a challenger sends a fraud proof message to the blockchain claiming the computation is incorrect; the blockchain then initiates a judgment procedure to determine which party, either the worker who performed the computation or the challenger, is acting maliciously. The honest party is rewarded while the malicious one is punished. The primary failure of this method is that it suffers from delayed finality, because it requires a sufficiently large window of time to allow for the submission of a fraud proof message, and critically, this mechanism does not support approximate computing.
    
    \item \textbf{Classical Coded Computing:} This method utilizes algorithmic redundancy to manage latency and approximation~\cite{yu2017polynomial, jahani2018codedsketch, yu2019lagrange}. While effective, it lacks resilience against an adversarial majority, making it unsuitable for permissionless environments.
\end{itemize}

To overcome the limitations of the above approaches, the game of coding emerged as a powerful alternative. As established in \cite{GoCJournal}, this framework lies at the intersection of game theory and coding theory. The initial investigation in \cite{GoCJournal} laid the theoretical foundation by analyzing computation over scalar values. A key finding of \cite{GoCJournal} is that accurate estimation and reliable decodability are achievable even when the majority of the network is adversarial; a feat impossible under classical coding theory. Following this, subsequent research sought to capture critical practical considerations necessary for real-world deployment. Specifically, \cite{GoDSybil} addressed the threat of attackers masquerading as multiple workers to gain unfair influence, known as a Sybil attack; the work proved that the framework is inherently Sybil resistant, which means it maintains robustness even if an attacker creates numerous fake identities to manipulate the system. Furthermore, a bandit-based algorithm is proposed in~\cite{nodehi2025unknown} to handle scenarios where the DC does not know the adversary's strategy in advance; these are machine learning techniques that allow the system to learn the most effective reward policies over time by observing the adversary's behavior and adapting to it dynamically. A comprehensive summary of these motivations and comparisons is available in \cite{nodehi2026game}.


\subsection{Contributions of This Paper}
While all prior research on the game of coding was limited to scalar computations, in this paper we extend the framework to the general multi-dimensional Euclidean space. This extension is critical for practical applicability, since most real-world computations, such as gradient calculations in machine learning, involve vector-valued results rather than scalars. As a first, yet fundamental step, we study a two-repetition code in which one node is controlled by an adversary. Focusing on the two-node case is a deliberate choice; experience from the scalar setting has shown that this case represents the most fundamental and technically challenging part of the game of coding. By fully addressing the complexities of the two-node interaction, we establish a cornerstone that provides the necessary theoretical tools to solve more general and complex scenarios in higher dimensions.

In this work, we provide a rigorous problem formulation for the multi-dimensional setting; we formally define the class of utility functions that each player seeks to maximize and define the equilibrium of this game. In this strategic interaction, the DC first commits to a \emph{parametric} acceptance policy, comprising a specific decision rule governed by a tunable free parameter. For any given parameter setting, the adversary chooses a noise distribution that maximizes its own utility, balancing the trade-off between the probability of passing the acceptance policy and the magnitude of the injected error. The DC, anticipating this rational behavior and knowing the adversary's utility function, can effectively predict the adversary's optimal strategy, along with the resulting system state, for any choice of the parameter. By evaluating the expected outcome across the parameter space, the DC identifies and commits to the optimal parameter value that maximizes its own utility.

We assume very minimal and natural assumptions for these utility functions to ensure the framework captures a wide range of practical scenarios.
However, in this interaction, finding the equilibrium is directly related to the specific forms of these utility functions; it is a significant challenge to find the equilibrium if we stick to such minimal assumptions for the utilities. To resolve this issue, we define an intermediary optimization problem which is \emph{independent} of the specific utility functions of the players. Then, we prove that by having access to the result of that optimization problem, one can find the equilibrium of the game very readily using a two-dimensional searching procedure. This is a fundamentally important contribution, since it converts an optimization problem over infinitely-many dimension (the space of adversarial noise distributions and acceptance policies) to a problem with a two-dimensional feasible set.  
We present detailed numerical examples to clarify the theoretical findings and visualize the system dynamics. This work significantly extend the scope of the game of coding framework, capturing a critical aspect of real-world decentralized applications, where multi-dimensional data is the norm.

\subsection{Organization of The Paper}
The remainder of this paper is organized as follows. Section \ref{sec:Problem Formulation} formally introduces the problem formulation,  the utility functions for both the DC and the adversary, and the game-theoretic formulation of the problem. In Section~\ref{sec:Main Results}, we present the main theoretical findings of this work. The detailed mathematical proofs of the main theorems are provided in Section \ref{proof:theorem: equivalence_two_problem} and Section~\ref{proof:theorem:Main_Minimax_Result}. Section~\ref{sec:Illustrative_Examples} provides numerical examples across different cases to visualize the equilibrium and demonstrate the impact of different strategies. Finally, Section \ref{sec:conclusion} concludes the paper and discusses potential directions for future research. 

Several results in this paper are derived based on geometric structures of hypershperes and hyperspherical caps. We review some properties and characterize some required quantities on these $N$-dimensional objects in Appendices~\ref{sec:second_moment_n_ball},~\ref{app:moments_derivation},  and~\ref{sec:intersection_proof}. The proof of the main result is based on some  standalone lemmas, which are proved in Appendices~\ref{proof:lemma:prob_acceptance_general}, \ref{proof:lemma:mse_general}, \ref{proof:lemma:optimal_support}, and~\ref{proof:lemma:exact_acc_noise_existence}. Finally, Appendix~\ref{sec:case_N2} illustrates the main result of this work for the special case of $2$-dimensional vectors. 

\subsection{Notation}
 We denote random variables using uppercase letters and deterministic values (or realizations) using lowercase letters. Furthermore, we distinguish vectors from scalars by using boldface type for the former and standard type for the latter. For example, $\mathbf{X}$ represents a random vector, whereas $\mathbf{x}$ denotes a deterministic vector. Similarly, $X$ represents a scalar random variable, while $x$ denotes a deterministic scalar. Unless stated otherwise, all vectors are elements of the $N$-dimensional Euclidean space $\mathbb{R}^N$, and we denote the standard Euclidean ($\ell_2$) norm of a vector $\mathbf{x} = (x_1, \dots, x_N)$ by ${\|\mathbf{x}\|_2 = \sqrt{\sum_{i=1}^N x_i^2}}$. 

The symbol $\Gamma(\cdot)$ denotes the Euler Gamma function, which generalizes the factorial function to real and complex arguments. For any real number $x > 0$, it is defined by the integral
    \begin{align}\label{def:Gamma}
        \Gamma(x) = \int_{0}^{\infty} t^{x-1} e^{-t} \, dt.
    \end{align}
    If $n$ is non-negative integer, we know that $\Gamma(n+1) = n!$, and , $\Gamma(n + \frac{1}{2}) = (n - \frac{1}{2}) \cdot (n - \frac{3}{2}) \cdots \frac{1}{2} \cdot \sqrt{\pi}$.
 We define the $N$-dimensional closed ball of radius $r > 0$ centered at a point $\mathbf{c} \in \mathbb{R}^N$ as
\begin{equation} \label{eq:ball_general}
    \mathcal{B}_N( r, \mathbf{c}) \triangleq \left\{ \mathbf{x} \in \mathbb{R}^N : \|\mathbf{x} - \mathbf{c}\|_2 \leq r \right\}.
\end{equation}
For simplicity, when the center is at the origin (i.e., $\mathbf{c} = \mathbf{0}$), we denote the ball by $\mathcal{B}_N( r)$. The volume of an $N$-ball depends only on its radius and is independent of its center. We denote this volume by $V_N(r)$, which is given by
\begin{equation} \label{eq:Ball_Volume}
    V_N(r) =  C_N r^n = \frac{\pi^{N/2}}{\Gamma(\frac{N}{2} + 1)} r^N,
\end{equation}
where $C_N = \frac{\pi^{N/2}}{\Gamma(\frac{N}{2} + 1)}$. 
 Accordingly, we denote the uniform distribution over this ball by ${\mathbf{X} \sim \text{Unif}(\mathcal{B}_N(r))}$, which is the distribution characterized by the probability density function (PDF) ${f_{\mathbf{X}}(\mathbf{x}) = 1/V_N(r)}$ for $\mathbf{x} \in \mathcal{B}_N(r)$ and $0$ otherwise.
 
Let $\mathbb{R}^*$ denote the Euclidean space of arbitrary dimension. For any set $\mathcal{S} \subseteq \mathbb{R}^*$ and an arbitrary function $f: \mathcal{S} \to \mathbb{R}$, the notation $\underset{x \in \mathcal{S}}{\arg\max} ~f (x)$ represents the set comprising all elements $x$ in $\mathcal{S}$ that maximize $f (x)$. Similarly we define $\underset{x \in \mathcal{S}}{\arg\min} ~f (x)$. For $a, b \in \mathbb{R}$, the notation $[a, b]$ represents the closed interval $\{x \in \mathbb{R} : a \le x \le b\}$.

\section{Problem Formulation}\label{sec:Problem Formulation}

In this section, we establish the formal mathematical framework for the $N$-Dimensional game of coding. We consider a setting comprised of a data collector (DC) and a set of $K=2$ computational nodes\footnote{While a two-node system may appear structurally simple, it represents the fundamental unit of our strategic interaction; even in this minimal setting, the game-theoretic dynamics exhibit significant technical complexity and provide the necessary intuition for larger networks.}, denoted by $\mathcal{K} \triangleq \{1, 2\}$, operating in an $N$-dimensional Euclidean space $\mathbb{R}^N$. The system architecture is illustrated in Figure \ref{fig:Two_node_model}. Let $\mathbf{U} \in \mathbb{R}^N$ be a random vector representing the ground truth, which is characterized by a probability density function $f_{\mathbf{U}}(\mathbf{u})$. The ultimate goal of the DC is to compute/estimate $\mathbf{U}$, which can be found from the data available to the computing nodes. However, the DC does not have direct access to the realization of $\mathbf{U}$ and must instead rely on the reports provided by the nodes to estimate its value.

\begin{figure}[htbp]
    \centering
    \resizebox{0.7\columnwidth}{!}{%
        \begin{tikzpicture}[>=Stealth, thick]

            \tikzset{
                block/.style={draw, rectangle, minimum height=10mm, minimum width=22mm, align=center, fill=white},
                diamondblock/.style={draw, diamond, aspect=1.6, minimum width=35mm, minimum height=20mm, align=center, fill=white, inner sep=0pt},
                rejectblock/.style={draw, rounded rectangle, minimum height=10mm, minimum width=22mm, align=center, fill=white}
            }

            \def\boxL{0}       
            \def\boxR{9.3}     
            \def\boxT{2.2}     
            \def\boxB{-2.2}    
            
            \def\nodeX{-3.8}   
            \def\nodeY{1.4}    
            
            \def\diaX{3.0}     
            
            \def\estX{7.5}     
            \def\rejY{-3}    

            \draw[thick, fill=gray!5] (\boxL, \boxT) rectangle (\boxR, \boxB);
            \node[anchor=south east, font=\large] at (\boxR-0.1, \boxB+0.2) {Data Collector};

            \node[block] (node1) at (\nodeX, \nodeY) {Node 1};
            \node[block] (node2) at (\nodeX, -\nodeY) {Node 2};
            
            \node[diamondblock] (decision) at (\diaX, 0) {${||\mathbf{Y}_1 - \mathbf{Y}_2||}_2 \leq \eta\Delta$};
            
            \node[block] (estimation) at (\estX, 0) {Estimation};
            
            \node[rejectblock] (reject) at (\diaX, \rejY) {Reject};
            
            \node[font=\Large] (output) at (\boxR + 1.2, 0) {$\hat{\mathbf{U}}$};

            
            \draw[->] (node1.east) -- node[above] {$\mathbf{Y}_1 = \mathbf{U} + \mathbf{N}_h$} (\boxL, \nodeY);
            \draw[->] (\boxL, \nodeY) -- (decision.155);
            
            \draw[->] (node2.east) -- node[above] {$\mathbf{Y}_2 = \mathbf{U} + \mathbf{N}_a$} (\boxL, -\nodeY);
            \draw[->] (\boxL, -\nodeY) -- (decision.205);
            
            \draw[->] (decision.east) -- node[above] {Yes} (estimation.west);
            \draw[->] (decision.south) -- node[right] {No} (reject.north);
            
            \draw[->] (estimation.east) -- (output.west);

        \end{tikzpicture}%
    }
  \caption{System model for the $N$-Dimensional game of coding. The network consists of one honest node and one adversarial node. Each node reports a noisy version of the ground truth $\mathbf{U}$ to the DC. For the honest node, the noise $\mathbf{N}_h$ is uniformly distributed within $\mathcal{B}_N( \Delta)$, while for the adversarial node, the noise $\mathbf{N}_a$ follows an arbitrary distribution $g(\cdot)$ chosen by the adversary. Upon receiving the data, the DC decides whether to accept or reject the inputs based on a consistency threshold $\eta$. If accepted, the DC outputs an estimate of $\mathbf{U}$. In this game, the DC acts as the leader choosing $\eta$, and the adversary acts as the follower choosing $g(\cdot)$.}
    \label{fig:Two_node_model}
\end{figure}
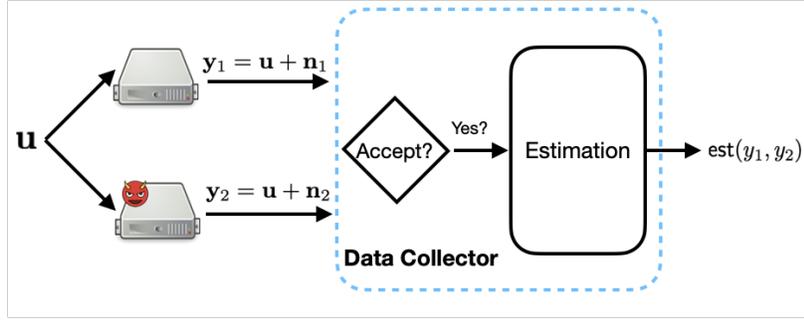


The set of nodes is partitioned into two disjoint singleton sets: an honest node $\mathcal{H}=\{h\}$ and an adversarial node $\mathcal{T}=\{a\}$. Thus, $\mathcal{K} = \{h, a\}$. The identity of the adversary is unknown to the DC, and we assume the adversary is selected uniformly at random from $\mathcal{K}$. Each node $k \in \mathcal{K}$ transmits a report $\mathbf{Y}_k \in \mathbb{R}^N$ to the DC. The honest node reports a noisy version of the ground truth, denoted by $\mathbf{Y}_h$, where
\begin{align}\label{eq:honest_model}
    \mathbf{Y}_h = \mathbf{U} + \mathbf{N}_h,
\end{align}
and $\mathbf{N}_h \in \mathbb{R}^N$. This noise represents inherent noise of approximate computing, measurement error, quantization and compression, or oracle inaccuracy. The noise is uniformly distributed within an $N$-dimensional ball of radius $\Delta$, denoted as $\mathcal{B}_N( \Delta)$, where $\Delta > 0$. Specifically, we have
\begin{align}
    \mathbf{N}_h \sim \text{Unif} \left( \mathcal{B}_N( \Delta) \right).
\end{align}
The parameter $\Delta$ is assumed to be universally known at all parties.
 This distribution implies that the honest node provides an unbiased approximation within a strictly defined accuracy radius.

Conversely, the adversarial node possesses knowledge of the exact realization of $\mathbf{U}$ and generates a report denoted by $\mathbf{Y}_a$, where 
\begin{align}\label{eq:adv_model}
    \mathbf{Y}_a = \mathbf{U} + \mathbf{N}_a.
\end{align}
The adversarial noise $\mathbf{N}_a$ is drawn from an arbitrary PDF $g(\cdot)$ chosen by the adversary, which is kept private from the DC. 
We assume that both noise components $\mathbf{N}_h$ and $\mathbf{N}_a$ are independent of the ground truth $\mathbf{U}$ and are also independent of each other.

The DC collects the reports into a tuple $\underline{\mathbf{Y}} \triangleq (\mathbf{Y}_1, \mathbf{Y}_2)$ and processes them in two stages: \textbf{Acceptance} and \textbf{Estimation}.

\begin{enumerate}
  \item \textbf{Acceptance via Consistency Check:} The DC accepts the computation if and only if the Euclidean distance between the reports does not exceed a threshold scaled by the honest noise bound $\Delta$. More precisely, the acceptance event, denoted by $\mathcal{A}_\eta$, occurs if
\begin{align}
    \mathcal{A}_\eta: \quad \left\| \mathbf{Y}_1 - \mathbf{Y}_2 \right\|_2 \leq \eta \Delta,
\end{align}
where $\eta$ is a scalar parameter controlling the strictness of the check. The probability of acceptance (PA) is defined as
\begin{align}\label{eq:PA_def}
    \mathsf{PA}(g(\cdot), \eta) \triangleq \Pr(\mathcal{A}_\eta) = \Pr \left( \left\| \mathbf{Y}_1 - \mathbf{Y}_2 \right\|_2 \leq \eta \Delta \right),
\end{align}
where the probability is evaluated over the randomness of $\mathbf{U}$, $\mathbf{N}_a$ and $\mathbf{N}_h$.
    \item \textbf{Estimation:} When the reported vectors are accepted, the DC estimates the ground truth using the average of the two reported vectors. More precisely, we have
\begin{align}
    \hat{\mathbf{U}} = \frac{\mathbf{Y}_1 + \mathbf{Y}_2}{2}.
\end{align}
The performance of this estimator is measured by the mean squared error (MSE), as
\begin{align}\label{eq:MSE_def}
    \mathsf{MSE}(g(\cdot), \eta) \triangleq \mathbb{E} \left[ \left\|\mathbf{U} - \frac{\mathbf{Y}_1 + \mathbf{Y}_2}{2}\right\|_2^2 \mathrel{\Big|} \mathcal{A}_\eta \right].
\end{align}
\end{enumerate}

The threshold parameter $\eta$ governs the fundamental compromise between the system's liveness, the probability to accept the computation and produce an output, and the accuracy of the final estimate. If $\eta$ is set to a very large value, the system achieves near-perfect liveness, but this allows the adversary to introduce unbounded error into the estimate of $\mathbf{U}$. On the other hand, setting a strict and small threshold for $\eta$  limits the error magnitude. However, this strictness makes the system vulnerable to denial-of-service (DoS) attacks. A rational adversary could intentionally provide data that slightly violates the threshold, causing the DC to reject the inputs and preventing the system from producing any estimate.

Furthermore, the choice of $\eta$ directly influences the adversary's behavior. In many decentralized applications, such as oracle networks and decentralized machine learning (DeML) \cite{eskandari2021sok, breidenbach2021chainlink, benligiray2020decentralized}, the adversary only receives rewards when their input is accepted. If the system rejects the data, the adversary gains no rewards and exerts no influence on the outcome. This structure creates a partial alignment of interests: to maximize the error, the adversary should choose a large noise; however, the adversary must first ensure that the system remains functional and its reported vector is accepted. Consequently, the adversary is incentivized to keep their induced noise within a range that satisfies the acceptance criteria, rather than simply forcing the system to shut down.

To rigorously capture this mechanism, we model the interaction between the DC and the adversary as a \textbf{Stackelberg game} \cite{von2010market}. In game theory, a Stackelberg model describes a sequential hierarchy where a \textbf{leader} commits to a strategy first, and a \textbf{follower} moves only after observing the leader's action. This stands in contrast to a standard Nash equilibrium in simultaneous games, where players act at the same time without observing the opponent's choice.

In our context, the DC acts as the \textbf{leader}. This role is mandated by the practical implementation of the system: the DC typically operates as a smart contract. Due to the inherent transparency of blockchain technology, the DC's acceptance policy, specifically the threshold parameter $\eta$, is a public code. The adversary, acting as the \textbf{follower}, can inspect the smart contract to see the exact value of $\eta$ before generating any data. Because the adversary chooses their strategy with full knowledge of the DC's commitment, the interaction is inherently sequential rather than simultaneous.

To formalize the game, we define the admissible action sets for both players. To choose the action set for the DC, we note that even in the hypothetical and optimistic case where both nodes are honest, the inherent approximate nature of the computation implies that each report $\mathbf{Y}_i$ may deviate from the ground truth by up to $\Delta$; consequently, the distance $\|\mathbf{Y}_1 - \mathbf{Y}_2\|_2$ can be as large as $2\Delta$. To ensure that the DC does not reject these honest reports, the threshold parameter $\eta$ must be at least 2.\footnote{While exploring $\eta < 2$ could offer an interesting trade-off between the risk of rejecting honest nodes and the potential for tighter error control, such an extension does not fundamentally alter the core analysis of this paper and can be viewed as a complementary direction for future research.} Thus, the DC's action set is defined as
\begin{align}
    \Lambda_{\text{DC}} \triangleq [2, \infty).
\end{align}
The adversary, in turn, selects a noise distribution from the action set $\Lambda_{\text{AD}}$, which consists of all valid probability density functions over the noise space $\mathbb{R}^N$, More precisely, we have
\begin{align}
    \Lambda_{\text{AD}} \triangleq \left\{ g: \mathbb{R}^N \to \mathbb{R}_{\ge 0} \middle| \int_{\mathbb{R}^N} g(\mathbf{x}) d\mathbf{x} = 1 \right\}.
\end{align}

The players aim to maximize their respective utility functions. These objectives are captured by the following utility functions
\begin{align}
    \mathsf{U}_{\text{DC}}(g(\cdot), \eta) &\triangleq Q_{\text{DC}}\left( \mathsf{MSE}(g(\cdot), \eta), \mathsf{PA}(g(\cdot), \eta) \right), \label{defeq:U_DC}\\
    \mathsf{U}_{\text{AD}}(g(\cdot), \eta) &\triangleq Q_{\text{AD}}\left( \mathsf{MSE}(g(\cdot), \eta), \mathsf{PA}(g(\cdot), \eta) \right), \label{defeq:U_AD}
\end{align}
where $Q_{\text{DC}}$ is monotonically non-increasing in MSE and non-decreasing in PA, while $Q_{\text{AD}}$ is strictly\footnote{To determine the game equilibrium, we utilize an intermediate optimization problem defined in \eqref{c_small_definition}, which is independent of $\mathsf{U}_{\text{DC}}$ and $\mathsf{U}_{\text{AD}}$. Theorem \ref{theorem: equivalence_two_problem} establishes that by solving \eqref{c_small_definition}, we can determine the optimal strategies for both players, specifically, the noise distribution for the adversary and the acceptance parameter for the DC. The strict monotonicity of $Q_{\text{AD}}$ is a necessary condition for the validity of this theorem (see Section \ref{proof:theorem: equivalence_two_problem} for details). Intuitively, this condition ensures that any adversarial best response must maximize the induced error for a given probability of acceptance, as the adversary would otherwise have a strict incentive to further increase the system error. In contrast, for the DC, we rely only on the natural assumption of non-decreasing monotonicity to encompass the broadest range of practical scenarios.} increasing in both arguments.  We assume that functions $Q_{\text{DC}}$ and $Q_{\text{AD}}$ are publicly known\footnote{even though we assume the utility functions are universally known, it turns out the adversary's strategy is independent of the DC’s. However, it is crucial for our solution that DC should know $Q_{\text{AD}}$. } by all parties.

The game is resolved via backward induction.  First, for any fixed threshold $\eta \in \Lambda_{\text{DC}}$ committed to by the leader, the follower identifies the set of optimal strategies to maximize its own utility function; this strategic response is captured by the adversary's best response set, which we define as
\begin{align}\label{eq:best_set_adv_definition}
    \mathcal{B}^{\eta}_{\text{AD}} \triangleq \underset{g(\cdot) \in \Lambda_{\text{AD}}}{\arg\max} ~ \mathsf{U}_{\text{AD}}(g(\cdot), \eta).
\end{align}

It is crucial to observe that the adversary is indifferent among all strategies within $\mathcal{B}^{\eta}_{\text{AD}}$, as they all yield the same maximal utility. However, these strategies may produce different utilities for the DC. To ensure a robust security guarantee, we adopt a conservative worst-case formulation. We assume that, among the adversary's optimal strategies, the specific $g(\cdot)$ chosen is the one most detrimental to the DC. We therefore define the set of worst-case adversarial responses as
\begin{align}\label{wrost_case_best_response}
    \bar{\mathcal{B}}^{\eta}_{\text{AD}} \triangleq \underset{g(\cdot) \in \mathcal{B}^{\eta}_{\text{AD}}}{\arg\min} ~ \mathsf{U}_{\text{DC}}(g(\cdot), \eta).
\end{align}
Note that the DC can also solve the optimization problem in~\eqref{wrost_case_best_response}, and hence, it knows that for every acceptance parameter $\eta$, what noise density function $g(\cdot)$ will be chosen by the adversary. Finally, the DC acts as the leader by selecting the optimal threshold $\eta^*$ that maximizes its utility under this worst-case noise, introduced by the adversary. More precisely, for any $\eta$, let $g^*_{\eta}(\cdot)$ be an arbitrary noise distribution in $\bar{\mathcal{B}}^{\eta}_{\text{AD}}$. Since every noise in $\bar{\mathcal{B}}^{\eta}_{\text{AD}}$ provides the same utility for the DC, we have
\begin{align}\label{eq:DC_optim}
    \eta^* = \underset{\eta \in \Lambda_{\text{DC}}}{\arg\max} ~  \mathsf{U}_{\text{DC}}(g^*_{\eta}(\cdot), \eta).
\end{align}

The Stackelberg equilibrium is therefore characterized by the pair $(\eta^*, g^*_{\eta^*}(\cdot))$, where $g^*_{\eta^*}(\cdot)$ is any noise in the set $\bar{\mathcal{B}}^{\eta^*}_{\text{AD}}$. The corresponding MSE, probability of acceptance, and utility values for this equilibrium are denoted by $\mathsf{MSE}^* = \mathsf{MSE}(g^*_{\eta^*}(\cdot), \eta^*)$, $\mathsf{PA}^* = \mathsf{PA}(g^*_{\eta^*}(\cdot), \eta^*)$, $\mathsf{U}^*_{\text{DC}} = \mathsf{U}_{\text{DC}}(g^*_{\eta^*}(\cdot), \eta^*)$, and $\mathsf{U}^*_{\text{AD}} = \mathsf{U}_{\text{AD}}(g^*_{\eta^*}(\cdot), \eta^*)$, respectively.

\section{Main Results}\label{sec:Main Results}

Based on \eqref{eq:best_set_adv_definition},  \eqref{wrost_case_best_response}, and \eqref{eq:DC_optim}, The DC's optimal threshold $\eta^*$ is determined by solving the following optimization problem
\begin{align} \label{eq:etastar}
    \eta^* = \underset{\eta \in \Lambda_{\mathsf{DC}}}{\arg\max} ~ \underset{g(\cdot) \in \mathcal{B}^{\eta}_{\text{AD}}}{\min} ~ Q_{\mathsf{DC}} \left( \mathsf{MSE}\left(g(\cdot), \eta \right), \mathsf{PA} \left( g(\cdot), \eta \right)\right).
\end{align}
The optimization problem in~\eqref{eq:etastar} is difficult to solve directly due to two fundamental challenges.

\begin{enumerate}
    \item \textbf{Utility Function Dependency and Minimal Assumptions:} We aim to solve the game for a broad class of utility functions. The optimization problem in \eqref{eq:etastar} is highly dependent on the specific forms of $Q_{\mathsf{DC}}$ and $Q_{\mathsf{AD}}$, defined in \eqref{defeq:U_DC} and \eqref{defeq:U_AD}, respectively. However, we make no restrictive mathematical assumptions, such as convexity or concavity, on these utility functions. Our only requirement is that they satisfy the intuitive, common sense monotonicity properties defined earlier (e.g., the adversary always prefers higher error). This strong dependency combined with our minimal assumptions precludes the use of standard  optimization tools that rely on specific functional forms.

     \item \textbf{Infinite Dimensional Strategy Space in Multiple Dimensions:} The adversary's optimization domain is vast. The inner minimization in \eqref{eq:etastar} requires searching over $\Lambda_{\mathsf{AD}}$, which contains \textit{every} possible probability density function on $\mathbb{R}^N$. Unlike previous papers of game of coding \cite{GoCJournal, GoDSybil, nodehi2025unknown, nodehi2026game} where the noise is scalar, here the noise is $N$ dimensional. This spatial multidimensionality fundamentally changes the problem since the adversary is free to shape the noise distribution arbitrarily across multiple dimensions, rather than shifting probability mass along a single line. 
\end{enumerate}

To circumvent the first challenge, we define an intermediate optimization problem that is \textbf{independent of the utility functions} $Q_{\mathsf{DC}}$ and $Q_{\mathsf{AD}}$. Consider a scenario where the adversary is constrained to maintain a specific level of system liveness. That is, for a fixed threshold $\eta$ and a minimum target acceptance probability $\alpha \in (0, 1]$, we determine the maximum MSE the adversary can strictly enforce. This defines the system's characteristic function, denoted by $c_{\eta}(\alpha)$. 
More precisely, for a fixed threshold $\eta \in \Lambda_{\text{DC}}$ and a given probability of acceptance $\alpha \in (0, 1]$, we define the intermediary optimization problem  as
\begin{align}\label{c_small_definition}
    c_{\eta}(\alpha) \triangleq \max_{g(\cdot) \in \Lambda_{\text{AD}}} \quad & \MSE(g(\cdot), \eta) \nonumber \\
    \text{subject to} \quad & \PA(g(\cdot), \eta) \geq \alpha.
\end{align}

\begin{remark}\label{remark_of_C_eta}
    We note that the optimization problem in \eqref{c_small_definition}, does not depend on the specific utility functions of the adversary (AD) or the data collector (DC). The function $c_\eta(\alpha)$ can be universally evaluated for every $\eta$ and $\alpha$. This decouples problem of optimum strategy of noise injection at the adversary from other party's utility function.  
\end{remark}

Intuitively, $c_{\eta}(\alpha)$ traces the Pareto frontier of the attack surface, representing the maximum damage (error) the adversary can inflict for any required probability of acceptance. We first note that $c_{\eta}(\alpha)$ is a non-increasing function of $\alpha$. This follows from the fact that if a noise distribution $g(\cdot)$ satisfies $\PA(g(\cdot), \eta) \geq \alpha_1$, it necessarily satisfies $\PA(g(\cdot), \eta) \geq \alpha_2$ for any $\alpha_2 < \alpha_1$; consequently, the optimization domain in \eqref{c_small_definition} for $\alpha_2$ is a superset of that for $\alpha_1$, implying $c_{\eta}(\alpha_2) \geq c_{\eta}(\alpha_1)$. 

As illustrated in Figure \ref{fig:pareto_frontier}, the $c_{\eta}(\alpha)$  curve demarcates the feasible region of attacks. Point A (in red) represents an inefficient strategy for a rational adversary; suppose that for a committed $\eta$, an adversarial noise $g_A(\cdot)$ achieves the outcome at A. By replacing it with the noise $g_B(\cdot)$ corresponding to point B (in black), the adversary maintains the same probability of acceptance while inducing a strictly higher MSE. Since the adversary's utility $\mathsf{U}_{\text{AD}}$, defined in \eqref{defeq:U_AD}, is strictly increasing with respect to the induced error, a rational follower will always prefer point B over point A. Conversely, point C (in gray) in Figure \ref{fig:pareto_frontier} represents an outcome that is strictly unattainable. By the definition of $c_{\eta}(\alpha)$ in \eqref{c_small_definition}, there exists no feasible noise distribution $g(\cdot) \in \Lambda_{\text{AD}}$ capable of inducing the level of MSE shown at C without violating the corresponding probability of acceptance constraint. Thus, a rational adversary will always restrict its strategy set to the frontier defined by $c_{\eta}(\alpha)$.

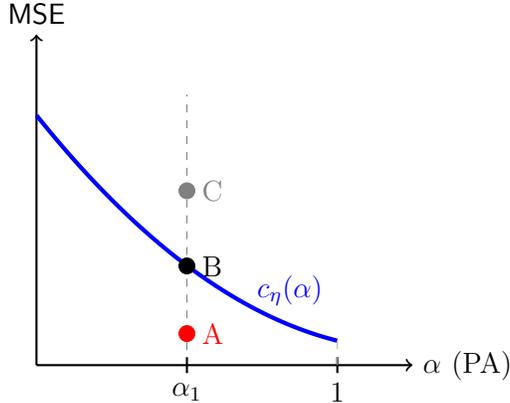
\begin{figure}[ht]
    \centering
    \begin{tikzpicture}[scale=2]
        \draw[->, thick] (0,0) -- (2.5,0) node[right] {$\alpha$ (PA)};
        \draw[->, thick] (0,0) -- (0,2.2) node[above] {$\MSE$};
        
        \draw[thick] (1, 1.5pt) -- (1, -1.5pt) node[below] {$\alpha_1$};
        \draw[thick] (2, 1.5pt) -- (2, -1.5pt) node[below] {$1$};
        
        \draw[ultra thick, blue, samples=100, domain=0:2] plot (\x, {0.25*(\x-2.5)*(\x-2.5) + 0.1});
        \node[blue, right] at (1.4, 0.5) {$c_{\eta}(\alpha)$};
        
        \draw[dashed, gray!60] (2, 0) -- (2, 0.16); 
        
        \def\px{1.0}
        \def\py{0.66} 
        \draw[dashed, gray] (\px, 0.1) -- (\px, 1.8);
        
        \filldraw[gray] (\px, \py + 0.5) circle (1.5pt) node[right, xshift=2pt] {C};
        
        \filldraw[black] (\px, \py) circle (1.5pt) node[right, xshift=2pt] {B};
        
        \filldraw[red] (\px, \py - 0.45) circle (1.5pt) node[right, xshift=2pt] {A};
        
    \end{tikzpicture}
    \caption{The Pareto frontier $c_{\eta}(\alpha)$ and adversarial rationality.  Point A (red) is inefficient compared to point B (black), while Point C (gray) lies in the unattainable region beyond the maximum possible error for $\alpha_1$.}
    \label{fig:pareto_frontier}
\end{figure}
Perhaps surprisingly, it can be shown that  characterizing \eqref{c_small_definition} is sufficient to resolve the entire game. More precisely, by leveraging $c_{\eta}(\alpha)$, we can collapse the complex, infinite-dimensional search over probability distributions in \eqref{eq:etastar}, into a tractable, finite-dimensional scalar optimization. This reduction is formalized in Algorithm \ref{Alg:finding_eta}, which takes the utility functions and the derived curve $c_{\eta}(\cdot)$ as inputs to efficiently compute the optimal strategy $\hat{\eta}$. The following theorem establishes that this scalar reduction is exact and that the output of Algorithm \ref{Alg:finding_eta} corresponds precisely to the Stackelberg equilibrium of the original game.

\begin{theorem}\label{theorem: equivalence_two_problem}
    The optimal threshold for the DC in the Stackelberg game formulated in \eqref{eq:etastar} is given by the output of Algorithm \ref{Alg:finding_eta}, denoted as $\hat{\eta}$. That is, $\eta^* = \hat{\eta}$.
\end{theorem}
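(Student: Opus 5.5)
The plan is to show that, for every fixed $\eta \in \Lambda_{\text{DC}}$, the inner problem $\min_{g(\cdot)\in\mathcal{B}^{\eta}_{\text{AD}}} Q_{\text{DC}}(\MSE(g(\cdot),\eta),\PA(g(\cdot),\eta))$ in \eqref{eq:etastar} equals a scalar problem built from $c_\eta(\cdot)$. I take Algorithm \ref{Alg:finding_eta} to do the following; the argument depends on its exact form, which is not reproduced above, so this reading is an assumption. For each $\eta$ it forms $\mathcal{L}_\eta \triangleq \arg\max_{\alpha\in(0,1]} Q_{\text{AD}}(c_\eta(\alpha),\alpha)$. It then takes the worst case $\min_{\alpha\in\mathcal{L}_\eta} Q_{\text{DC}}(c_\eta(\alpha),\alpha)$, and finally maximizes this over $\eta$. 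Once the inner values agree pointwise in $\eta$, the outer $\arg\max$ sets coincide, which gives $\eta^*=\hat{\eta}$.

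\textbf{Step 1 (attainability).} I will first establish that for every $\eta$ and $\alpha\in(0,1]$ the supremum in \eqref{c_small_definition} is attained by some $g_\alpha(\cdot)$ with $\PA(g_\alpha(\cdot),\eta)=\alpha$ \emph{exactly}. If a maximizer had $\PA>\alpha$, I would mix it with mass placed far away, where acceptance is impossible and the conditional MSE is unaffected. This lowers $\PA$ continuously down to $\alpha$ without decreasing the conditional MSE, because rejected realizations do not enter the conditional expectation. I expect this to be the content of the existence lemma referenced in the appendices, and I would invoke it here.

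\textbf{Step 2 (best responses lie on the frontier).} Let $g\in\mathcal{B}^{\eta}_{\text{AD}}$ and set $\alpha=\PA(g(\cdot),\eta)$.
\begin{itemize}
\item Feasibility gives $\MSE(g(\cdot),\eta)\le c_\eta(\alpha)$.
\item If the inequality were strict, $g_\alpha(\cdot)$ from Step 1 would have the same $\PA$ and a larger MSE. Since $Q_{\text{AD}}$ is strictly increasing in MSE, $g_\alpha$ would have strictly larger utility, contradicting optimality. Hence $\MSE(g(\cdot),\eta)=c_\eta(\alpha)$.
\end{itemize}
Next, for any $\alpha'$, the noise $g_{\alpha'}$ is admissible with utility $Q_{\text{AD}}(c_\eta(\alpha'),\alpha')$. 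It follows that $\max_{g} \mathsf{U}_{\text{AD}}=\sup_{\alpha'}Q_{\text{AD}}(c_\eta(\alpha'),\alpha')$, and the $\PA$ of every best response belongs to $\mathcal{L}_\eta$. Conversely, each $\alpha\in\mathcal{L}_\eta$ yields a best response $g_\alpha(\cdot)$ with outcome $(c_\eta(\alpha),\alpha)$. Therefore the set of outcome pairs $\{(\MSE,\PA)\}$ induced by $\mathcal{B}^{\eta}_{\text{AD}}$ is exactly $\{(c_\eta(\alpha),\alpha):\alpha\in\mathcal{L}_\eta\}$.

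\textbf{Step 3 (conclusion).} Since $\mathsf{U}_{\text{DC}}$ depends on $g$ only through the pair $(\MSE,\PA)$, the worst-case value $\mathsf{U}_{\text{DC}}(g^*_\eta(\cdot),\eta)$ equals $\min_{\alpha\in\mathcal{L}_\eta}Q_{\text{DC}}(c_\eta(\alpha),\alpha)$. This is exactly the quantity the algorithm maximizes over $\eta$, which finishes the argument.

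\textbf{Main obstacle.} I expect the hard part to be Step 1 together with attainment of the maxima over $\alpha$. Both the sup in $c_\eta(\alpha)$ and the sup over $\alpha$ in $\mathcal{L}_\eta$ must be genuinely attained, or $\mathcal{B}^{\eta}_{\text{AD}}$ could be empty. My first approach is to use the explicit characterization of $c_\eta$ from the later lemmas on optimal support, in particular that optimal noise is supported on a shell or cap geometry. From that characterization I would read off that $c_\eta$ is continuous on $(0,1]$. I would then argue that $\alpha\to0$ cannot be optimal, or handle that limit by the convention the algorithm adopts, so that the maxima are attained.
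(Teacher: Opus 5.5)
Your argument is correct and has the same skeleton as the paper's proof. The paper shows that the set $\mathcal{J}_\eta$ of $(\MSE,\PA)$ pairs induced by $\mathcal{B}^{\eta}_{\text{AD}}$ equals $\mathcal{K}_\eta=\{(c_\eta(\alpha),\alpha):\alpha\in\mathcal{L}_\eta\}$, using a frontier lemma followed by two inclusions, and your reading of Algorithm~\ref{Alg:finding_eta} is the right one.

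The real difference is the auxiliary maximizer you work with. The paper uses only some $g_b$ attaining $c_\eta(a)$ with $\PA(g_b(\cdot),\eta)\ge a$. For $\mathcal{K}_\eta\subseteq\mathcal{J}_\eta$ it therefore has to compare $g_b$ with an already existing best response, and then invoke strict monotonicity of $Q_{\text{AD}}$ in $\PA$ to force $\PA(g_b(\cdot),\eta)=a$. You instead require $\PA=\alpha$ exactly, which is the content of Lemma~\ref{lemma:exact_acc_noise_existence}. More directly, the one- or two-shell construction in Section~\ref{sec:achivable_noise} already has $\PA=\alpha$ exactly and does not rely on Theorem~\ref{theorem: equivalence_two_problem}, so there is no circularity. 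You then identify $\sup_g \mathsf{U}_{\text{AD}}$ with $\sup_\alpha Q_{\text{AD}}(c_\eta(\alpha),\alpha)$. This gives a slightly stronger result: you use only strict monotonicity in $\MSE$, never monotonicity in $\PA$, and you do not need $\mathcal{J}_\eta\neq\emptyset$ as a premise.

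On the obstacle you flagged, two separate issues are involved.
\begin{itemize}
\item Attainment in \eqref{c_small_definition} is genuinely needed; the paper takes it for granted when it asserts a $g'$ with $\MSE=c_\eta(\alpha)$. It comes from that shell construction, not from Lemma~\ref{lemma:optimal_support}. This requires admitting singular shell distributions into $\Lambda_{\text{AD}}$, as the paper implicitly does.
\item Attainment of the maximum over $\alpha$ is not needed. Your plan for it (continuity of $c_\eta$ plus excluding $\alpha\to0$) would not work in general anyway, because $Q_{\text{AD}}$ is not assumed continuous.
\end{itemize}
Your Step~2 argument already shows that $\sup_g \mathsf{U}_{\text{AD}}$ is attained if and only if $\sup_\alpha Q_{\text{AD}}(c_\eta(\alpha),\alpha)$ is, i.e.\ $\mathcal{B}^{\eta}_{\text{AD}}\neq\emptyset$ exactly when $\mathcal{L}_\eta\neq\emptyset$. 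Hence the inner problems in \eqref{eq:etastar} and \eqref{eq:alg_step2} are well defined for the same $\eta$ and coincide there, which is all the theorem requires.
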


\begin{algorithm}[t]
\caption{Determination of the Optimal Threshold $\eta^*$}
\label{Alg:finding_eta}
\begin{algorithmic}[1]
\State \textbf{Input:} Utility functions $\mathsf{U}_{\text{AD}}(\cdot, \cdot)$, $\mathsf{U}_{\text{DC}}(\cdot, \cdot)$, and the function $\{c_{\eta}(\cdot):\eta \in \Lambda_{\text{DC}}\}$.
\State \textbf{Output:} Optimal threshold $\hat{\eta}$.

\vspace{0.5em}

\State \textbf{Step 1: Follower's Rational Response}
\State For any fixed $\eta$, determine the set of optimal acceptance probabilities $\mathcal{L}_{\eta}$ that maximize the adversary's utility along the curve $c_{\eta}(\alpha)$:
\begin{align}\label{eq:alg_step1}
    \mathcal{L}_{\eta} = \arg\max_{0 < \alpha \leq 1 } \mathsf{U}_{\text{AD}}\left(c_{\eta}(\alpha), \alpha\right).
\end{align}

\State \textbf{Step 2: Leader's Strategic Choice}
\State The DC identifies $\hat{\eta}$ by maximizing its utility, accounting for the adversary's best response:
\begin{align}\label{eq:alg_step2}
    \hat{\eta} = \arg\max_{\eta \in \Lambda_{\text{DC}}} \left( \min_{\alpha \in \mathcal{L}_{\eta}} \mathsf{U}_{\text{DC}} \left(c_{\eta}(\alpha), \alpha\right) \right).
\end{align}
\end{algorithmic}
\end{algorithm}

The comprehensive proof of Theorem \ref{theorem: equivalence_two_problem} is provided in Section \ref{proof:theorem: equivalence_two_problem}; however, we outline the underlying intuition here. When the DC commits to a threshold $\eta$, a rational adversary responds by selecting a noise distribution that maximizes their utility, resulting in a $(\PA, \MSE)$ pair.  As discussed earlier and illustrated in Figure \ref{fig:pareto_frontier}, the curve $c_{\eta}(\alpha)$ serves as the boundary of the feasible attack space. Any point situated below this frontier, such as point A (red), is suboptimal for the adversary, as they could achieve a higher utility at point B (black) for the same acceptance probability. Conversely, points above the curve, such as point C (gray), are strictly unattainable. Consequently, for a fixed $\eta$, the adversary's optimal choice must lie on the frontier $c_{\eta}(\cdot)$, which allows the DC to characterize the adversary's behavior through the simplified optimization in \eqref{eq:alg_step1}. By anticipating this rational response, the DC can then optimize its own utility across all possible values of $\eta$ by solving \eqref{eq:alg_step2}, ensuring the equilibrium strategy is captured.

Theorem \ref{theorem: equivalence_two_problem} establishes that the original game is entirely determined by the characteristic function $c_{\eta}(\cdot)$. Consequently, finding the optimal strategy reduces to deriving the explicit form of this curve. The following theorem provides the exact analytical characterization of $c_{\eta}(\cdot)$ for any system dimension.

\begin{theorem}\label{theorem:Main_Minimax_Result}
    For any dimension $N \geq 1$, decoding threshold $\eta \in \Lambda_{\mathsf{DC}}$, and $\alpha \in (0, 1]$, we have
    \begin{align}\label{eq:c_eta_result}
       c_{\eta}(\alpha) = \frac{\tilde{\Psi}_N^*(\alpha)}{4 \alpha},
    \end{align}
    where $\tilde{\Psi}_N^*(q)$ denotes the upper concave envelope of the function $\tilde{\Psi}_N(q)$ over the domain $q \in [0, 1]$. The function $\tilde{\Psi}_N(q)$ is defined as
    \begin{align}\label{eq:Psi_tilde_def}
        \tilde{\Psi}_N(q) \triangleq \Psi_N\left( \Phi_N^{-1}(q) \right),
    \end{align}
    where $\Phi_N^{-1}(q)$ is the inverse of the function  
    \begin{align}
    \Phi_N(z) &= \frac{V_{\text{lens}}(\Delta, \eta\Delta, z)}{V_N(\Delta)}, \label{eq:Phi_n_def}
    \end{align}
    with
    \begin{align}
        V_{\text{lens}}(\Delta, \eta\Delta, z) &= K_N(\Delta, u_c(z)) + K_N(\eta\Delta, z - u_c(z)),  \label{eq:V_lens_def} \\
        K_N(r, c) &= \frac{\pi^{(N-1)/2} r^N}{\Gamma(\frac{N+1}{2})} \int_{c/r}^{1} (1 - t^2)^{\frac{N-1}{2}} \, dt, \label{eq:K_cap_def} \\
        u_c(z) &= \frac{z^2 + \Delta^2(1 - \eta^2)}{2z},\label{eq:uc_def}
    \end{align}
    for 
    $ z \in [(\eta-1)\Delta, (\eta+1)\Delta]$.
    Moreover, for the same range of $z$, we have
    \begin{align}\label{eq:Psi_n_def}
        \Psi_N(z) = \frac{1}{V_N(\Delta)} \Bigg( &\left[ J_N(\Delta, u_c(z)) + z^2 V_1 \right] \nonumber \\
        &+ \left[ J_N(\eta\Delta, z-u_c(z)) + 4z^2 V_2 - 2z Q_N(\eta\Delta, z-u_c(z)) \right] \Bigg),
    \end{align}
    where 
    \begin{align}
       V_1 &= K_N(\Delta, u_c(z)), \label{def:v1} \\
       V_2 &= K_N(\eta\Delta, z-u_c(z)), \label{def:v2}\\
       Q_N(r, d) &= \frac{r^2 - d^2}{N+1} V_{N-1}(\sqrt{r^2 - d^2}),\label{eq:Q_n_def} \\
       J_N(r, d) &= \frac{N r^2}{N+2} K_N(r, d) + \frac{2 d}{N+2} Q_N(r, d)\label{eq:J_n_def}.
    \end{align}
    Finally, the function $\Gamma(.)$ is  defined in \eqref{def:Gamma},  $V_N(.)$ is defined in \eqref{eq:Ball_Volume}.
\end{theorem}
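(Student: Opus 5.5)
We reduce the problem to an optimization over a single scalar measure: the distribution of the adversarial noise. Then we recognize the resulting problem as a concave-envelope problem.

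\textbf{Step 1: reduction to the relative noise.} Since the adversary is selected uniformly at random and both the acceptance event and the estimator are symmetric in $(\mathbf{Y}_1,\mathbf{Y}_2)$, we may take $\mathbf{Y}_1-\mathbf{Y}_2=\mathbf{N}_a-\mathbf{N}_h$ and $\mathbf{U}-\hat{\mathbf{U}}=-(\mathbf{N}_a+\mathbf{N}_h)/2$. Because $\mathbf{U}$ is independent of both noises, it drops out. Conditioning on $\mathbf{N}_a=\mathbf{x}$, write $p(\mathbf{x})=\Pr(\|\mathbf{x}-\mathbf{N}_h\|\le\eta\Delta)$ and $m(\mathbf{x})=\mathbb{E}[\|\mathbf{x}+\mathbf{N}_h\|^2\mathbbm{1}\{\|\mathbf{x}-\mathbf{N}_h\|\le\eta\Delta\}]$. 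Then
\[
\PA=\int g\,p \qquad\text{and}\qquad \MSE=\frac{\int g\,m}{4\int g\,p}.
\]
By rotational invariance of $\mathrm{Unif}(\mathcal{B}_N(\Delta))$, both $p$ and $m$ depend only on $z=\|\mathbf{x}\|$. So only the law of $Z=\|\mathbf{N}_a\|$ matters.

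For $z\le(\eta-1)\Delta$ the honest ball lies inside the acceptance ball $\mathcal{B}_N(\eta\Delta,\mathbf{x})$. Hence $p=1$, and $m=z^2+\frac{N}{N+2}\Delta^2$ is increasing in $z$, so this range is dominated by its endpoint $z=(\eta-1)\Delta$. For $z\ge(\eta+1)\Delta$ we have $p=0$ and $m=0$. The relevant range is therefore $z\in[(\eta-1)\Delta,(\eta+1)\Delta]$. Mass placed at $z\ge(\eta+1)\Delta$ acts as a free way to lower $\PA$ without contributing to the numerator; this is exactly what allows the point $q=0$ in the envelope.

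\textbf{Step 2: computing $p$ and $m$ geometrically.} The set $\{\mathbf{n}\in\mathcal{B}_N(\Delta):\|\mathbf{x}-\mathbf{n}\|\le\eta\Delta\}$ is a lens, the intersection of two balls. Its two boundary spheres meet on the hyperplane at distance $u_c(z)$ from the origin, which follows from solving $u^2-(z-u)^2=\Delta^2-\eta^2\Delta^2$. The lens therefore splits into two hyperspherical caps:
\begin{itemize}
\item a cap of $\mathcal{B}_N(\Delta)$ beyond height $u_c$, with volume $K_N(\Delta,u_c)$;
\item a cap of $\mathcal{B}_N(\eta\Delta,\mathbf{x})$ beyond height $z-u_c$, with volume $K_N(\eta\Delta,z-u_c)$.
\end{itemize}
Dividing by $V_N(\Delta)$ gives $p=\Phi_N(z)$.

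For $m$, expand $\|\mathbf{x}+\mathbf{n}\|^2$ over each cap using the cap second moment $J_N$ and the cap first moment $Q_N$ along the axis; these are the appendix results on moments of caps.
\begin{itemize}
\item On the first cap, the cross term $2\mathbf{x}\cdot\mathbf{n}$ integrates to $+2zQ_N$.
\item On the second cap, recenter at $\mathbf{x}$: write $\mathbf{n}=\mathbf{x}-\mathbf{w}$, so that $\mathbf{x}+\mathbf{n}=2\mathbf{x}-\mathbf{w}$. This produces $4z^2V_2+J_N-4zQ_N$.
\end{itemize}
Once the contributions are organized as in \eqref{eq:Psi_n_def}, this gives $m=\Psi_N(z)$. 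This bookkeeping is the step I expect to be most error-prone. The stated formula has $z^2V_1$ and $-2zQ_N$ on the second cap, so the signs and the $Q_N$ terms must be tracked carefully, probably with some cancellation between the two caps' first-moment terms across the shared hyperplane.

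\textbf{Step 3: reparametrization and the concave envelope.} Show that $\Phi_N$ is strictly decreasing on $[(\eta-1)\Delta,(\eta+1)\Delta]$, from $1$ to $0$; the lens shrinks as the centers separate. It is then invertible, and we set $q=\Phi_N(z)$. The adversary's choice becomes a probability measure $\mu$ on $q\in[0,1]$, and the problem reads
\[
\max_{\mu}\ \frac{\int\tilde\Psi_N\,d\mu}{4\int q\,d\mu}\qquad\text{subject to}\qquad \int q\,d\mu\ge\alpha .
\]
\emph{Upper bound.} For a fixed mean $\bar q=\int q\,d\mu$, the numerator is at most $\tilde\Psi_N^*(\bar q)$ by the definition of the concave envelope. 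Since $\tilde\Psi_N^*(0)\ge0$ and the envelope is concave, $\tilde\Psi_N^*(\bar q)/\bar q$ is non-increasing in $\bar q$. Hence the ratio is at most $\tilde\Psi_N^*(\alpha)/(4\alpha)$.

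\emph{Achievability.} By Carathéodory, the envelope value at $\alpha$ is attained by a mixture of at most two points. Taking such a two-point mixture in $q$, realized as two radii with arbitrary directions, meets $\PA=\alpha$ exactly and achieves the bound. This requires the exact-acceptance existence lemma, together with a density approximation if $g$ must be a density, where the maximum is approached by a supremum.

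The main obstacles are the monotonicity of $\Phi_N$ and the verification of the moment formula in Step 2.
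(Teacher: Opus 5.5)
Your proposal is correct and follows essentially the paper's proof: reduction to the law of $\|\mathbf{N}_a\|_2$ by rotational symmetry, the radical-hyperplane split of the lens into two caps with moments $K_N,Q_N,J_N$, restriction to $z\in[(\eta-1)\Delta,(\eta+1)\Delta]$, the reparametrization $q=\Phi_N(z)$, Jensen on the concave envelope, and one- or two-shell achievability. The cancellation you hedge on in Step 2 is exact: $Q_N(r,d)$ depends only on $r^2-d^2$, and the equation defining $u_c$ gives $\Delta^2-u_c^2=\eta^2\Delta^2-(z-u_c)^2$, so $+2zQ_N-4zQ_N=-2zQ_N$. Your one deviation is handling $\PA\ge\alpha$ through the monotonicity of $\tilde{\Psi}_N^*(q)/q$ (from concavity and $\tilde{\Psi}_N^*(0)\ge 0$) rather than through the paper's exact-acceptance lemma. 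This encodes the same idea of mixing in mass at $q=0$, so you never need that lemma, not even in achievability, where your two-point mixture already has mean exactly $\alpha$.
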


The detailed proof of Theorem \ref{theorem:Main_Minimax_Result} is provided in Section \ref{proof:theorem:Main_Minimax_Result}; however, the following intuitive interpretation of the theorem would be helpful to better understand the proof. As highlighted earlier, one of the primary challenges in characterizing $c_{\eta}(\cdot)$ via the optimization problem in \eqref{c_small_definition} is that the adversarial noise is an $N$ dimensional vector. To make this variational problem mathematically tractable, we try to reduce it to a one dimensional scalar optimization problem. This reduction is achievable by exploiting the inherent geometry of the interaction. 

Considering a scalar random variable $Z \sim f_Z(z)$ representing the magnitude of the adversarial noise, we can use the law of total probability to express $\mathsf{PA}(g(\cdot), \eta)$ as
\begin{align}
    \mathsf{PA}(g(\cdot), \eta) = \Pr(\mathcal{A}_\eta) = \int_{0}^{\infty} \Pr(\mathcal{A}_\eta \mid Z=z) f_Z(z) \, dz.
\end{align}
It can be analytically shown that, due to the spherical symmetry of the honest node noise and the specific structure of the acceptance policy, the conditional probability $\Pr(\mathcal{A}_\eta \mid Z=z)$ depends solely on the scalar magnitude $z$. Similarly, we can apply the exact same structural decomposition to $\mathsf{MSE}(g(\cdot), \eta)$ (see \eqref{eq:proof_integral_sub}) and prove that the conditional estimation error is also completely determined by $z$, a property that fundamentally relies on the symmetric characteristics of the honest node noise, the acceptance policy, and the estimation rule. More precisely, we establish in Lemmas \ref{lemma:prob_acceptance_general} and \ref{lemma:mse_general} that for both the probability of acceptance and the estimation error, we have 
\begin{align}
    \mathsf{PA}(g(\cdot), \eta) &= \int_{0}^{\infty} \Phi_N(z) f_Z(z) \, dz, \\
    \mathsf{MSE}(g(\cdot), \eta) &= \frac{1}{4 \mathsf{PA}(g(\cdot), \eta)} \int_{0}^{\infty} \Psi_N(z) f_Z(z) \, dz,
\end{align}
where $\Phi_N(z)$ and $\Psi_N(z)$ are the geometric kernels defined in \eqref{eq:Phi_n_def} and \eqref{eq:Psi_n_def}. Characterizing these kernels, which entails finding $\Phi_N(z)$ and $\Psi_N(z)$ for each value of $z$, is completely independent of the adversarial noise. Instead, this characterization reduces to a separate geometric problem, which is solved in detail within the proofs of Lemmas \ref{lemma:prob_acceptance_general} and \ref{lemma:mse_general} in Appendices \ref{proof:lemma:prob_acceptance_general} and \ref{proof:lemma:mse_general}, respectively. This scalar transformation is crucial, as it allows us to reformulate $c_{\eta}(\alpha)$ as an optimization over the one dimensional density $f_Z(z)$ instead of the vast $N$ dimensional density $g(\cdot)$.

Furthermore, Lemma \ref{lemma:optimal_support} proves that we lose no optimality by restricting the support of $Z$ to 
$z \in [(\eta-1)\Delta, (\eta+1)\Delta]$.
Similarly, Lemma \ref{lemma:exact_acc_noise_existence} establishes that simplifying the constraint from $\PA(g(\cdot), \eta) \geq \alpha$
in \eqref{c_small_definition} to 
$\PA(g(\cdot), \eta) = \alpha$,
does not result in any loss of optimality. Following these simplifications, we define the random variable 
$Q \triangleq \Phi_N(Z)$. According to the lemmas above, the constraints and the objective function in \eqref{c_small_definition} can be rewritten as
\begin{align}
    \PA(g(\cdot), \eta) &= \mathbb{E}[\Phi_N(Z)] = \mathbb{E}[Q] = \alpha, \\
    \MSE(g(\cdot), \eta) &= \frac{1}{4\alpha}\mathbb{E}[\Psi_N(Z)] = \frac{1}{4\alpha}\mathbb{E}[\tilde{\Psi}_N(Q)],
\end{align}
where we define\footnote{The function $\Phi_N(z)$ is strictly decreasing over the domain $[(\eta-1)\Delta, (\eta+1)\Delta]$, making it a bijection and thus invertible over the range $[0, 1]$.}
$\tilde{\Psi}_N(q) \triangleq \Psi_N(\Phi_N^{-1}(q))$.
Consequently, the optimization problem in \eqref{c_small_definition} turns to a maximization over the distribution of the random variable $Q$:
\begin{align}
    \text{Maximize: } & \frac{1}{4\alpha} \mathbb{E}[\tilde{\Psi}_N(Q)] \label{eq:opt_q_expectation} \\
    \text{Subject to: } & \mathbb{E}[Q] = \alpha, \quad Q \in [0, 1].
\end{align}

To solve \eqref{eq:opt_q_expectation}, let $\tilde{\Psi}_N^*(\cdot)$ denote the upper concave envelope of $\tilde{\Psi}_N(\cdot)$. By definition, we have $\frac{1}{4\alpha} \mathbb{E}[\tilde{\Psi}_N(Q)] \leq \frac{1}{4\alpha} \mathbb{E}[\tilde{\Psi}_N^*(Q)]$. Moreover, since the upper concave envelope is a concave function, applying Jensen's inequality yields 
\begin{align}\label{intuition_for_c_eta}
    \frac{1}{4\alpha} \mathbb{E}[\tilde{\Psi}_N^*(Q)] \leq \frac{1}{4\alpha} \tilde{\Psi}_N^*(\mathbb{E}[Q]) = \frac{\tilde{\Psi}_N^*(\alpha)}{4 \alpha}.
\end{align}
Equation \eqref{intuition_for_c_eta} demonstrates that $c_{\eta}(\alpha)$ is bounded from above by $\frac{\tilde{\Psi}_N^*(\alpha)}{4 \alpha}$. To rigorously complete the intuitive proof of Theorem \ref{theorem:Main_Minimax_Result}, we must establish that this theoretical upper bound is strictly achievable. To do so, we evaluate two distinct cases based on the geometry of the function. First, consider the regions where the function naturally coincides with its upper concave envelope, meaning $\tilde{\Psi}_N(\alpha) = \tilde{\Psi}_N^*(\alpha)$. In this scenario, we can simply set the random variable $Q$ to be exactly equal to the deterministic value $\alpha$. Practically, since $Q = \Phi_N(Z)$, this assignment dictates that $\Phi_N(Z) = \alpha$, which means we must construct a noise distribution in the $N$ dimensional space where the magnitude of the noise is exactly $z_1 = \Phi_N^{-1}(\alpha)$. To physically achieve this, the adversary samples the noise vector uniformly at random from the  surface of an $N$ dimensional ball with a radius of $z_1$.

Conversely, in regions where the original function $\tilde{\Psi}_N(\cdot)$ exhibits a convex dip, it strictly falls below its concave envelope. In these specific intervals, the adversary bridges the geometric gap using a linear chord, as depicted in Figure \ref{fig:envelope_achievability}. This mathematical chord physically represents a mixed strategy between two distinct noise magnitudes. More precisely, by strategically randomizing the noise distribution between the surfaces of two $N$ dimensional balls of different radii, the adversary perfectly interpolates and bridges the geometric gap to reach the upper concave envelope. This careful randomization ensures that the upper bound is completely achievable for all possible values of $\alpha$, thereby concluding the structural derivation of the worst case attack.

\begin{remark}
    In Theorem \ref{theorem:Main_Minimax_Result}, we characterize the function $c_{\eta}(\alpha)$ defined in \eqref{c_small_definition} for general dimensions $N \geq 1$. It is worth noting that if we choose $N=1$, the characterization of $c_{\eta}(\alpha)$ reduces to the one-dimensional case, which have been evaluated and analyzed previously in \cite{GoCJournal}. Specifically, the explicit functions for that specific case have been characterized in Appendix G of \cite{GoCJournal}, and one can verify the consistency of the general result. In addition, to provide a concrete example of the multidimensional setting, we explicitly evaluate this function for the case of $N=2$ in Appendix \ref{sec:case_N2}.
\end{remark}

\begin{remark}\label{rem:Numerical_Calculation}
    One might initially view the calculation of $c_{\eta}(\alpha)$ in \eqref{eq:c_eta_result} as analytically intractable, particularly because the function $\Phi_N(z)$ involves transcendental terms (e.g., for even $N$) or high-order polynomials (for odd $N$) that do not admit a closed-form inverse. Consequently, obtaining an explicit expression for the composite function $\tilde{\Psi}_N(q)$ is generally not possible. However, numerically evaluating the concave envelope is straightforward and does not require explicit inversion. Instead, one can adopt a parametric approach: by sweeping the variable $z$ across its domain $[(\eta-1)\Delta, (\eta+1)\Delta]$, we generate the locus of points $(q_z, y_z) = (\Phi_N(z), \Psi_N(z))$. The function $\tilde{\Psi}_N^*(q)$ is then simply the upper boundary of the convex hull of this set of points, which can be efficiently computed using standard numerical libraries. We have used this technique to derive these functions for different settings and finally determined the equilibrium for different cases, as described in Section \ref{sec:Illustrative_Examples}.

\end{remark}

    In the detailed proof of Theorem \ref{theorem:Main_Minimax_Result} provided in Section \ref{proof:theorem:Main_Minimax_Result}, we not only derive the worst-case error bound but also explicitly characterize the adversarial noise distribution that achieves this bound. This optimal noise density, denoted as $g_{\mathbf{N}_a}^*(\mathbf{x})$, is constructed in Algorithm \ref{Alg:finding_noise_N_dim}. The algorithm utilizes the geometric properties of $\Psi_N$ defined in \eqref{eq:Psi_n_def}, and $\Phi_N$ defined in \eqref{eq:Phi_n_def}, to determine whether a single spherical shell or a mixture of two spherical shells constitutes the optimal noise distribution.

\begin{remark}
    It is worth emphasizing that the results established in Theorems \ref{theorem: equivalence_two_problem} and \ref{theorem:Main_Minimax_Result}, as well as the procedures in Algorithms \ref{Alg:finding_eta} and \ref{Alg:finding_noise_N_dim}, do not rely on specific functional forms for the utilities of the DC or the adversary. We only impose the intuitive conditions that the adversary's utility is strictly increasing with respect to both arguments, whereas the DC's utility is non-increasing in its first argument and non-decreasing in its second. These broad and  common-sense assumptions ensure that our framework remains versatile enough to encompass a wide array of practical security and estimation scenarios without loss of generality.
\end{remark}

\begin{remark}\label{remark_for_unknown}
It is worth noting a subtle point regarding the robustness and universality of the proposed solution. On the one hand, the optimization problem for $c_{\eta}(\alpha)$ in \eqref{c_small_definition} is solved entirely irrespective of the utility functions. Specifically, both the exact value of $c_{\eta}(\alpha)$ and the specific noise distribution that achieves this bound and satisfies the required conditions in \eqref{c_small_definition} are characterized independently of any utility assumptions. On the other hand, characterizing the actual best response of the adversary for each given value of $\eta$, which consequently determines the final estimation error and probability of acceptance, strictly requires knowledge of the adversary's utility function. Furthermore, finding the optimal threshold $\eta^*$ for the DC requires full knowledge of the utility functions of both players. 
\end{remark}

As mentioned in Remark \ref{remark_for_unknown} the validity of Theorem~\ref{theorem: equivalence_two_problem} requires the underlying game to be complete, in the sense that both the DC has full knowledge of adversary's utility functions. Hence, the current solution may fail if there is a mismatch between the actual utility functions, and what the other party thinks. The game of coding with unknown adversary's utility function is solved in \cite{nodehi2025unknown} for the scalar case. While a similar approach may be generalizable for the $N$-dimensional case, a rigorous  analysis for the vector case remains for future work.

\begin{algorithm}[t]
\caption{Characterizing the Optimal $N$-Dimensional Adversarial Noise Distribution}
\label{Alg:finding_noise_N_dim}
\begin{algorithmic}[1]
\State \textbf{Input:} Dimension $N$, decoding threshold $\eta$, bound $\Delta$, the utility function $Q_{\mathsf{AD}}(\cdot, \cdot)$, and the derived function $c_{\eta}(\cdot)$ from Theorem \ref{theorem:Main_Minimax_Result}.
\State \textbf{Output:} The optimal noise distribution PDF $g_{\mathbf{N}_a}^*(\mathbf{x})$.

\State Define $\Phi_N(z)$ as in \eqref{eq:Phi_n_def}, and  $\Psi_N(z)$ as in \eqref{eq:Psi_n_def}.
\State Define $\tilde{\Psi}_N(q) \triangleq \Psi_N\left( \Phi_N^{-1}(q) \right)$ for $q \in [0, 1]$.
\State Let $\tilde{\Psi}_N^*(q)$ denote the upper concave envelope of $\tilde{\Psi}_N(q)$ over $q \in [0, 1]$.

\vspace{1em}
\State \textbf{Step 1: Optimal Operating Point Selection}
\State Calculate the optimal acceptance probability $\alpha^*$ that maximizes the adversary's objective
\begin{align*}
    \alpha^* = \underset{0 < \alpha \leq 1 }{\arg\max} ~Q_{\mathsf{AD}}(c_{\eta} (\alpha), \alpha).
\end{align*}

\vspace{1em}
\State \textbf{Step 2: Construction of Noise Distribution}
\If {$\tilde{\Psi}_N^*(\alpha^*) = \tilde{\Psi}_N(\alpha^*)$}
    \State \textit{// Case 1: The function lies on its concave envelope.}
    \State Calculate the optimal noise radius: $z^* = \Phi_N^{-1}(\alpha^*)$.
    \State Output the distribution uniform over a single $N$-sphere of radius $z^*$:
    \begin{align*}
        g_{\mathbf{N}_a}^*(\mathbf{x}) = \frac{1}{S_N(z^*)} \delta(\|\mathbf{x}\|_2 - z^*),
    \end{align*}
    where $S_N(r) = \frac{2\pi^{N/2}}{\Gamma(N/2)}r^{N-1}$.
\Else
    \State \textit{// Case 2: The function lies below its concave envelope.}
    \State Find probabilities $q_1 < \alpha^* < q_2$ such that the envelope touches the function at the endpoints:
    \begin{align*}
        \tilde{\Psi}_N^*(q_1) = \tilde{\Psi}_N(q_1) \quad \text{and} \quad \tilde{\Psi}_N^*(q_2) = \tilde{\Psi}_N(q_2),
    \end{align*}
    and is linear in between.
    \State Calculate the corresponding radii: $z_1 = \Phi_N^{-1}(q_1)$ and $z_2 = \Phi_N^{-1}(q_2)$.
    \State Calculate the mixing weights:
    \begin{align*}
        \beta_1 = \frac{q_2 - \alpha^*}{q_2 - q_1}, \quad \beta_2 = \frac{\alpha^* - q_1}{q_2 - q_1}.
    \end{align*}
    \State Output the mixture distribution uniform over two $N$-spheres:
    \begin{align*}
        g_{\mathbf{N}_a}^*(\mathbf{x}) = \beta_1 \frac{1}{S_N(z_1)} \delta(\|\mathbf{x}\|_2 - z_1) + \beta_2 \frac{1}{S_N(z_2)} \delta(\|\mathbf{x}\|_2 - z_2).
    \end{align*}
\EndIf
\end{algorithmic}
\end{algorithm}

\section{Proof of Theorem \ref{theorem: equivalence_two_problem}}\label{proof:theorem: equivalence_two_problem}

In this section, we establish the validity of Theorem \ref{theorem: equivalence_two_problem}. We begin by comparing the optimization performed in Algorithm \ref{Alg:finding_eta} with the theoretical definition of $\eta^*$ in \eqref{eq:etastar}.
Algorithm \ref{Alg:finding_eta} computes $\hat{\eta}$ by solving the following optimization problem
\begin{align}\label{eq:alg_optimization}
    \hat{\eta} = \underset{\eta \in \Lambda_{\mathsf{DC}}}{\arg\max} ~ \underset{\alpha \in \mathcal{L}_{\eta}}{\min} ~ Q_{\mathsf{DC}} \left(c_{\eta}(\alpha), \alpha\right),
\end{align}
where $\mathcal{L}_{\eta}$ is defined in Algorithm \ref{Alg:finding_eta} as
\begin{align}\label{def:L_eta}
    \mathcal{L}_{\eta} = \underset{0 < \alpha \leq 1}{\arg\max} ~ Q_{\mathsf{AD}}(c_{\eta}(\alpha), \alpha).
\end{align}

In contrast, based on \eqref{eq:etastar}, the value of $\eta^*$ can be reformulated in terms of the set of realizable performance pairs. More precisely, let $\mathcal{J}_{\eta}$ denote the set of operating points corresponding to the adversary's best responses
\begin{align}\label{J_definition}
    \mathcal{J}_{\eta} \triangleq \left\{ \left( \mathsf{MSE}(g(\cdot), \eta), \mathsf{PA}(g(\cdot), \eta) \right) \mathrel{\Big|} g(\cdot) \in \mathcal{B}^{\eta}_{\text{AD}} \right\}.
\end{align}
Using this set, based on \eqref{eq:etastar}, the value of $\eta^*$ is given by
\begin{align}\label{seprating_utilities}
    \eta^* = \underset{\eta \in \Lambda_{\mathsf{DC}}}{\arg\max} ~ \underset{(\beta, \alpha) \in \mathcal{J}_{\eta}}{\min} ~ Q_{\mathsf{DC}} (\beta, \alpha).
\end{align}
Comparing \eqref{eq:alg_optimization} and \eqref{seprating_utilities}, it is evident that to prove $\hat{\eta} = \eta^*$, it suffices to demonstrate that the set of best-response points $\mathcal{J}_{\eta}$ is identical to the set of points derived from the algorithm. More precisely, let us define the set $\mathcal{K}_{\eta}$ as
\begin{align}\label{KK_definition}
    \mathcal{K}_{\eta} \triangleq \left\{ (c_{\eta}(\alpha), \alpha) \mid \alpha \in \mathcal{L}_{\eta} \right\}.
\end{align}
Thus, the proof of Theorem \ref{theorem: equivalence_two_problem} reduces to showing that $\mathcal{J}_{\eta} = \mathcal{K}_{\eta}$. We establish this equality by proving mutual inclusion: first showing $\mathcal{J}_{\eta} \subseteq \mathcal{K}_{\eta}$, and subsequently $\mathcal{K}_{\eta} \subseteq \mathcal{J}_{\eta}$. The intermediate steps are formally shown the Sections~\ref{subsec:JinK} and~\ref{subsec:KinJ} below.

Before proceeding with the main inclusion arguments, we first state and prove the following lemma.

\begin{lemma}
\label{lemmaJSC}
Let define the set $\mathcal{C}_{\eta}$ as
\begin{align}\label{defining_the_Set_C_eta}
    \mathcal{C}_{\eta} \triangleq \left\{ \left( c_{\eta}(\alpha), \alpha \right) ~\middle|~ 0 < \alpha \leq 1 \right\}.
\end{align}
Then, for any threshold $\eta \in \Lambda_{\mathsf{DC}}$, the set of adversarial best responses $\mathcal{J}_{\eta}$ satisfies $\mathcal{J}_{\eta} \subseteq \mathcal{C}_{\eta}$.
\end{lemma}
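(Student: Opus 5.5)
We argue by contradiction, using the strict monotonicity of $Q_{\text{AD}}$ in its first argument. Fix $\eta \in \Lambda_{\mathsf{DC}}$ and take an arbitrary point of $\mathcal{J}_{\eta}$. By \eqref{J_definition} it has the form $(\beta,\alpha) = (\MSE(g(\cdot),\eta), \PA(g(\cdot),\eta))$ for some best response $g(\cdot) \in \mathcal{B}^{\eta}_{\text{AD}}$. The first step is to note that $\alpha \in (0,1]$. It is at most $1$ trivially. It is positive because if $\PA = 0$ the conditional MSE in \eqref{eq:MSE_def} is undefined (or we treat such $g$ as inadmissible). Alternatively, any $g$ with zero acceptance is strictly dominated, since the adversary can place its noise at the origin to obtain positive acceptance and, by strict increase of $Q_{\text{AD}}$ in PA, a larger utility. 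So $(c_{\eta}(\alpha),\alpha)$ is a well-defined element of $\mathcal{C}_{\eta}$, and it remains to show $\beta = c_{\eta}(\alpha)$.

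The second step is the easy inequality $\beta \le c_{\eta}(\alpha)$. This holds because $g(\cdot)$ itself is feasible for \eqref{c_small_definition} with target $\alpha$: its acceptance probability equals $\alpha \ge \alpha$.

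The third step proves the reverse inequality by contradiction. Suppose $\beta < c_{\eta}(\alpha)$. The maximum in \eqref{c_small_definition} is attained; Theorem \ref{theorem:Main_Minimax_Result} and Algorithm \ref{Alg:finding_noise_N_dim} give an explicit maximizer, a single or two-shell mixture. If one prefers not to rely on attainment, it suffices that the supremum exceeds $\beta$. In either case there exists $g'(\cdot) \in \Lambda_{\text{AD}}$ with $\PA(g'(\cdot),\eta) = \alpha' \ge \alpha$ and $\MSE(g'(\cdot),\eta) = \beta' > \beta$. Strict monotonicity of $Q_{\text{AD}}$ in both arguments then gives
\begin{align*}
\mathsf{U}_{\text{AD}}(g'(\cdot),\eta) = Q_{\text{AD}}(\beta',\alpha') \ge Q_{\text{AD}}(\beta',\alpha) > Q_{\text{AD}}(\beta,\alpha) = \mathsf{U}_{\text{AD}}(g(\cdot),\eta).
\end{align*}
This contradicts $g(\cdot) \in \mathcal{B}^{\eta}_{\text{AD}}$, as defined in \eqref{eq:best_set_adv_definition}. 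Hence $\beta = c_{\eta}(\alpha)$, so $(\beta,\alpha) \in \mathcal{C}_{\eta}$.

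The main obstacle is rigor at the boundaries rather than the core argument. One issue is whether the max in \eqref{c_small_definition} is attained or only a supremum; the strict-inequality version above avoids needing attainment. The other is excluding $\PA = 0$ so that $\alpha$ lies in the domain $(0,1]$. I would handle the latter with the domination argument: the zero-noise strategy yields acceptance probability $1$ for $\eta \ge 2$, hence strictly higher utility than any zero-acceptance strategy.
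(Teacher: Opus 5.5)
Your proof is correct and is the paper's argument. Feasibility of the best response $g$ gives $\beta \le c_\eta(\alpha)$, and any feasible $g'$ with higher MSE and $\PA \ge \alpha$ beats $g$ by strict monotonicity of $Q_{\text{AD}}$ in its first argument and monotonicity in its second. Your version needs only some $g'$ with $\MSE > \beta$ rather than attainment of the maximum, which is slightly more robust than the paper's proof.

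On $\PA = 0$, the ``inadmissible/undefined'' convention is the sound one. Your domination alternative compares against a strategy whose conditional MSE is undefined, so it only works once you fix a value for it.
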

\begin{proof}
    Consider an arbitrary operating point $(\beta, \alpha) \in \mathcal{J}_{\eta}$ resulting from an adversarial best-response $g^*(\cdot) \in \mathcal{B}^{\eta}_{\text{AD}}$. By the definition in \eqref{J_definition}, we have $\alpha = \mathsf{PA}(g^*(\cdot), \eta)$ and $\beta = \mathsf{MSE}(g^*(\cdot), \eta)$. That means $g^*(\cdot)$ satisfies the constraint of the optimization problem in~\eqref{c_small_definition}. Therefore, the value of the objective function in~\eqref{c_small_definition} at the feasible point $g^*(\cdot)$, i.e., $ \mathsf{MSE}(g^*(\cdot), \eta)=\beta$, cannot exceed the maximum of the objective function, which is $c_\eta(\alpha)$. This immediately implies $\beta \leq c_{\eta}(\alpha)$. 

    We prove that equality must hold by contradiction. Suppose that $\beta < c_{\eta}(\alpha)$, represented by point $A$ (red one) in Figure \ref{fig:lemma_visual}. By the definition of the characteristic function $c_{\eta}(\alpha)$ in \eqref{c_small_definition}, there must exist an alternative distribution $g'(\cdot)$, corresponding to point $B$ (black one) in Figure \ref{fig:lemma_visual}, such that
    \begin{align}
        \mathsf{MSE}(g'(\cdot), \eta) = c_{\eta}(\alpha), \label{eq:MSEE_new} \\
        \mathsf{PA}(g'(\cdot), \eta) \geq \alpha. \label{eq:PAAA_new}
    \end{align}
    
    Comparing the utilities, we observe that
    \begin{align}\label{eq:utility_contradiction}
        \mathsf{U}_{\mathsf{AD}}(g'(\cdot), \eta) 
        &= Q_{\mathsf{AD}}\left(\mathsf{MSE}(g'(\cdot), \eta), \mathsf{PA}(g'(\cdot), \eta)\right) \nonumber \\
        &\overset{(a)}{=} Q_{\mathsf{AD}}\left(c_{\eta}(\alpha), \mathsf{PA}(g'(\cdot), \eta)\right) \nonumber \\
        &\overset{(b)}{\geq} Q_{\mathsf{AD}}\left(c_{\eta}(\alpha), \alpha\right) \nonumber \\
        &\overset{(c)}{>} Q_{\mathsf{AD}}(\beta, \alpha) \nonumber \\
        &= \mathsf{U}_{\mathsf{AD}}(g^*(\cdot), \eta),
    \end{align}
    where (a) follows from \eqref{eq:MSEE_new}; (b) follows from \eqref{eq:PAAA_new} and the fact that $Q_{\mathsf{AD}}$ is non-decreasing in its second argument; and (c) holds because $Q_{\mathsf{AD}}$ is strictly increasing in its first argument and $c_{\eta}(\alpha) > \beta$. This strictly higher utility for $g'(\cdot)$ contradicts our initial assumption that $g^*(\cdot)$ is a best response in~$\mathcal{B}^{\eta}_{\text{AD}}$. Consequently, we must have $\beta = c_{\eta}(\alpha)$, which implies the point $(\beta, \alpha)$ lies within $\mathcal{C}_{\eta}$ defined in \eqref{defining_the_Set_C_eta}.
\end{proof}
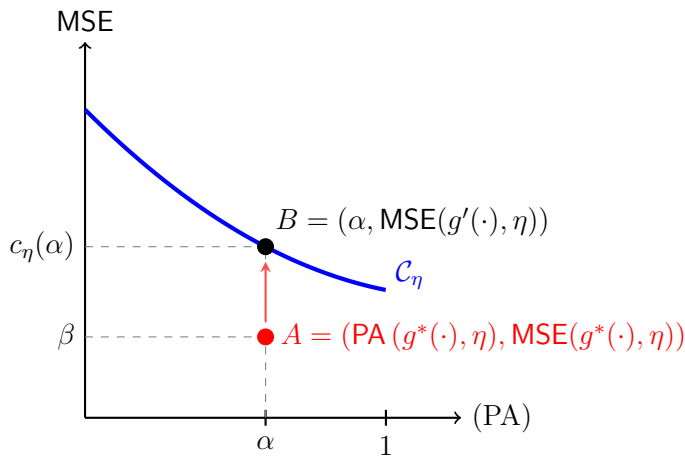
\begin{figure}[ht]
    \centering
    \begin{tikzpicture}[scale=2]
        \draw[->, thick] (0,0) -- (2.5,0) node[right] { (PA)};
        \draw[->, thick] (0,0) -- (0,2.5) node[above] {$\MSE$};
        
        \draw[thick] (1.2, 1.5pt) -- (1.2, -1.5pt) node[below] {$\alpha$};
        \draw[thick] (2, 1.5pt) -- (2, -1.5pt) node[below] {$1$};
        
        \draw[ultra thick, blue, samples=100, domain=0:2] plot (\x, {0.2*(\x-2.5)*(\x-2.5) + 0.8});
        \node[blue, right] at (2, 0.95) {$\mathcal{C}_{\eta}$};
        
        \def\px{1.2}
        \def\py{1.138} 
        
        \draw[dashed, gray] (\px, 0) -- (\px, \py);
        \draw[dashed, gray] (0, \py) -- (\px, \py);
        \node[left] at (0, \py) {$c_{\eta}(\alpha)$};
        
        \draw[dashed, gray] (0, \py-0.6) -- (\px, \py-0.6);
        \node[left] at (0, \py-0.6) {$\beta$};

        \filldraw[black] (\px, \py) circle (1.5pt) node[above right] {$B=(\alpha, \MSE(g'(\cdot),\eta))$};
        
        \filldraw[red] (\px, \py-0.6) circle (1.5pt) node[right, xshift=2pt] {$A=(\mathsf{PA}\left(g^*(\cdot), \eta), \MSE(g^*(\cdot),\eta)\right)$};
        
        \draw[->, >=stealth, thick, red!70] (\px, \py-0.5) -- (\px, \py-0.1);
        
    \end{tikzpicture}
    \caption{Geometric proof of Lemma \ref{lemmaJSC}. Point A represents a suboptimal response where $\beta < c_{\eta}(\alpha)$. By choosing $g'$ rather than $g^*$ to move to Point B on the boundary $\mathcal{C}_{\eta}$ (where the probability of acceptance is at least $\alpha$), the adversary increases their MSE and potentially their probability of acceptance, leading to strictly higher utility.}
    \label{fig:lemma_visual}
\end{figure}

We now prove $\mathcal{J}_{\eta} = \mathcal{K}_{\eta}$ via double inclusion, in the following sections.

\subsection{\texorpdfstring{Proof of $\mathcal{J}_{\eta} \subseteq \mathcal{K}_{\eta}$}{Proof of J subset K}}
\label{subsec:JinK}
Consider an arbitrary pair $(\beta, \alpha) \in \mathcal{J}_{\eta}$, denoted as point $A$ in Figure \ref{fig:J_subset_K_visual}. We claim that $(\beta, \alpha) \in \mathcal{K}_{\eta}$, and we prove this by contradiction.

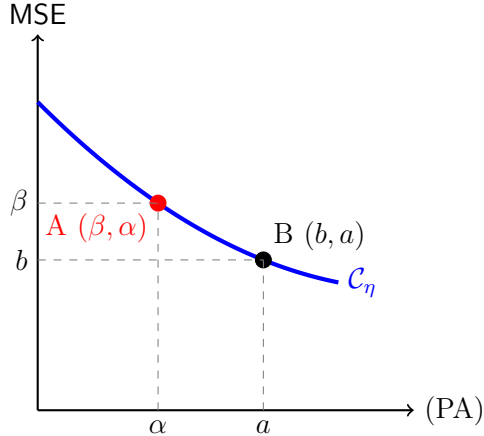
\begin{figure}[ht]
    \centering
    \begin{tikzpicture}[scale=2]
        \draw[->, thick] (0,0) -- (2.5,0) node[right] {(PA)};
        \draw[->, thick] (0,0) -- (0,2.5) node[above] {$\MSE$};
        
        \draw[ultra thick, blue, samples=100, domain=0:2] plot (\x, {0.2*(\x-2.5)*(\x-2.5) + 0.8});
        \node[blue, right] at (2, 0.85) {$\mathcal{C}_{\eta}$};
        
        \def\ax{0.8}
        \def\ay{1.378} 
        \filldraw[red] (\ax, \ay) circle (1.5pt) node[below left] {A $(\beta, \alpha)$};
        \draw[dashed, gray] (\ax, 0) -- (\ax, \ay);
        \draw[dashed, gray] (0, \ay) -- (\ax, \ay);
        
        \def\bx{1.5}
        \def\by{1.0} 
        \filldraw[black] (\bx, \by) circle (1.5pt) node[above right] {B $(b, a)$};
        \draw[dashed, gray] (\bx, 0) -- (\bx, \by);
        \draw[dashed, gray] (0, \by) -- (\bx, \by);
        
        \node[below] at (\ax, 0) {$\alpha$};
        \node[left] at (0, \ay) {$\beta$};
        \node[below] at (\bx, 0) {$a$};
        \node[left] at (0, \by) {$b$};
        
    \end{tikzpicture}
    \caption{Visual representation of the contradiction for $\mathcal{J}_{\eta} \subseteq \mathcal{K}_{\eta}$. Point A is a best response and thus lies on $\mathcal{C}_{\eta}$ by Lemma \ref{lemmaJSC}. If A is not in $\mathcal{K}_{\eta}$, there must exist a point B on the same boundary $\mathcal{C}_{\eta}$ that provides strictly higher utility, contradicting the optimality of A.}
    \label{fig:J_subset_K_visual}
\end{figure}

Assume, as a contradictory hypothesis, that $A = (\beta, \alpha) \notin \mathcal{K}_{\eta}$. By Lemma \ref{lemmaJSC}, we know that every adversarial best response lies on the boundary, so $A \in \mathcal{C}_{\eta}$, i.e., $\beta = c_\eta(\alpha)$. Note that if $\alpha \in \mathcal{L}_{\eta}$, then from the definition of $\mathcal{K}_{\eta}$ in \eqref{KK_definition}, we would have $A\in \mathcal{K}_{\eta}$. Since we assumed $A \notin \mathcal{K}_{\eta}$, we can conclude that $\alpha \notin \mathcal{L}_{\eta}$. Then, from the definition of $\mathcal{L}_{\eta}$ in \eqref{def:L_eta}, there must exists another $a\in \mathcal{L}_{\eta}$ and another point $B = (b,a) \in \mathcal{K}_{\eta}$ with $b=c_\eta(a)$ that yields a strictly higher adversarial utility than $A$. That is,
\begin{align}\label{JnsL_contradiction_assuption}
    Q_{\mathsf{AD}}(b,a) > Q_{\mathsf{AD}}(\beta, \alpha).
\end{align}
Since $(\beta, \alpha) \in \mathcal{J}_{\eta}$, based on the definition in \eqref{J_definition}, there exists a noise distribution $g_{\alpha}(\cdot) \in \mathcal{B}^{\eta}_{\mathsf{AD}}$ where
\begin{align}\label{definition_of_beta_alpha}
    (\beta, \alpha) = \left( \mathsf{MSE}(g_{\alpha}(\cdot), \eta), \mathsf{PA}(g_{\alpha}(\cdot), \eta) \right).
\end{align}
Similarly, for the point $(b,a) \in \mathcal{K}_{\eta}$, since  $b=c_\eta(a)$, from the optimization problem in~\eqref{c_small_definition}, there exists a noise distribution $g_b(\cdot) \in \Lambda_{\mathsf{AD}}$ such that
\begin{align}\label{b_to_MMSE_relation}
    b = \mathsf{MSE}(g_{b}(\cdot), \eta),
\end{align}
and
\begin{align}\label{a_to_PA_relationship}
    \mathsf{PA} \left( g_b(\cdot), \eta \right) \geq a.
\end{align}
Specifically, $g_b(\cdot)$ is an optimal solution to the following maximization problem
\begin{align}\label{gb_to_opt}
    \max_{g(\cdot) \in \Lambda_{\mathsf{AD}}} \left\{ \mathsf{MSE}(g(\cdot), \eta) ~\middle| ~\mathsf{PA}(g(\cdot), \eta) \geq a \right\}.
\end{align}
We can now evaluate the utility of the distribution $g_b(\cdot)$ as follows
\begin{align}\label{JnsL_contradiction_point}
    \mathsf{U}_{\mathsf{AD}}(g_b(\cdot), \eta) &= Q_{\mathsf{AD}}\left( \mathsf{MSE}(g_{b}(\cdot), \eta), \mathsf{PA}(g_b(\cdot), \eta) \right) \nonumber \\
    &\overset{(a)}{=} Q_{\mathsf{AD}}\left( b, \mathsf{PA}(g_b(\cdot), \eta) \right) \nonumber \\
    &\overset{(b)}{\geq} Q_{\mathsf{AD}}(b, a) \nonumber \\
    &\overset{(c)}{>} Q_{\mathsf{AD}}(\beta, \alpha) \nonumber \\
    &\overset{(d)}{=} Q_{\mathsf{AD}}\left( \mathsf{MSE}(g_{\alpha}(\cdot), \eta), \mathsf{PA}(g_{\alpha}(\cdot), \eta) \right) \nonumber \\
    &= \mathsf{U}_{\mathsf{AD}}(g_{\alpha}(\cdot), \eta),
\end{align}
where (a) follows from \eqref{b_to_MMSE_relation}; (b) follows from \eqref{a_to_PA_relationship} and the non-decreasing property of $Q_{\mathsf{AD}}$; (c) follows from the contradictory assumption in \eqref{JnsL_contradiction_assuption}; and (d) follows from \eqref{definition_of_beta_alpha}.

The result of \eqref{JnsL_contradiction_point} implies $\mathsf{U}_{\mathsf{AD}}(g_b(\cdot), \eta) > \mathsf{U}_{\mathsf{AD}}(g_{\alpha}(\cdot), \eta)$. This is in contradiction with the fact that $g_{\alpha}(\cdot)\in \mathcal{B}_{\mathsf{AD}}^\eta$ is a best response strategy, meaning no other strategy  including $g_b(\cdot)$, can yield strictly higher adversarial utility. Therefore, our initial assumption was incorrect, and we must have $(\beta, \alpha) \in \mathcal{K}_{\eta}$. Consequently, $\mathcal{J}_{\eta} \subseteq \mathcal{K}_{\eta}$.

\subsection{\texorpdfstring{Proof of $\mathcal{K}_{\eta} \subseteq \mathcal{J}_{\eta}$}{Proof of K subset J}}
\label{subsec:KinJ}

Consider an arbitrary point $(b,a) \in \mathcal{K}_{\eta}$. Let $g_b(\cdot) \in \Lambda_{\mathsf{AD}}$ be the noise distribution associated with $(b,a)$, where the relationships \eqref{b_to_MMSE_relation}, \eqref{a_to_PA_relationship}, and \eqref{gb_to_opt} hold.

Now, consider a point $(\beta, \alpha) \in \mathcal{J}_{\eta}$, and let $g_{\alpha}(\cdot) \in \mathcal{B}^{\eta}_{\mathsf{AD}}$ be the corresponding noise distribution such that \eqref{definition_of_beta_alpha} holds. Note that based on Lemma \ref{lemmaJSC}, we have $(\beta, \alpha) \in \mathcal{C}_{\eta}$, i.e., $\beta = c_\eta(\alpha)$. 

Since $(b,a) \in \mathcal{K}_{\eta}$, by the definition~\eqref{KK_definition}, we should have $b=c_\eta(a)$ and $a\in \mathcal{L}_{\eta}$, which together with~\eqref{def:L_eta} further implies, $Q_{\mathsf{AD}}\left( c_\eta(a),a \right) \geq Q_{\mathsf{AD}}\left( c_\eta(a'), a' \right)$ for any $0<a'\leq 1 $. In particular, for $a'=\alpha$, this implies 
\begin{align}\label{ba_tobetaalpha_relation}
   Q_{\mathsf{AD}}\left( b,a \right) \geq Q_{\mathsf{AD}}\left( \beta, \alpha \right).
\end{align}
Consider the following chain of inequalities regarding the utility of $g_b(\cdot)$
\begin{align}\label{jset_to_lset}
    \mathsf{U}_{\mathsf{AD}}\left(g_b(\cdot), \eta \right) &= Q_{\mathsf{AD}}\bigl( \mathsf{MSE}\left(g_{b}(\cdot), \eta \right), \mathsf{PA} \left(g_{b}(\cdot), \eta \right) \bigr) \nonumber \\
    &\overset{(a)}{=} Q_{\mathsf{AD}}\big( b, \mathsf{PA} \left(g_{b}(\cdot), \eta \right) \big) \nonumber \\
    &\overset{(b)}{\geq} Q_{\mathsf{AD}}\left( b, a \right) \nonumber \\
    &\overset{(c)}{\geq} Q_{\mathsf{AD}}\left( \beta, \alpha \right) \nonumber \\
    &\overset{(d)}{=} Q_{\mathsf{AD}}\big( \mathsf{MSE}\left(g_{\alpha}(\cdot), \eta \right), \mathsf{PA} \left( g_{\alpha}(\cdot), \eta \right) \big) \nonumber \\
    &=\mathsf{U}_{\mathsf{AD}}\left(g_{\alpha}(\cdot), \eta \right),
\end{align}
where (a) follows from \eqref{b_to_MMSE_relation}; (b) follows from \eqref{a_to_PA_relationship} and the fact that $Q_{\mathsf{AD}}(\cdot,\cdot)$ is a strictly increasing function with respect to its second argument; (c) follows from \eqref{ba_tobetaalpha_relation}; and (d) follows from \eqref{definition_of_beta_alpha}.

The result of \eqref{jset_to_lset} implies that $\mathsf{U}_{\mathsf{AD}}\left(g_b(\cdot), \eta \right) \geq \mathsf{U}_{\mathsf{AD}}\left(g_{\alpha}(\cdot), \eta \right)$. On the other hand, since ${g_{\alpha}(\cdot) \in \mathcal{B}^{\eta}_{\mathsf{AD}}}$ is a global best response, we must have $\mathsf{U}_{\mathsf{AD}}\left(g_{\alpha}(\cdot), \eta \right) \geq \mathsf{U}_{\mathsf{AD}}\left(g(\cdot), \eta \right)$ for any $g(\cdot)$, and in particular $g(\cdot) = g_b(\cdot)$. Combining these two inequalities, we conclude that
\begin{align}\label{eq:equality_utilities}
    \mathsf{U}_{\mathsf{AD}}\left(g_b(\cdot), \eta \right) = \mathsf{U}_{\mathsf{AD}}\left(g_{\alpha}(\cdot), \eta \right).
\end{align}
Consequently, all inequalities in the chain \eqref{jset_to_lset} must hold with equality. Specifically, looking at step (b) of \eqref{jset_to_lset}, we must have
\begin{align}\label{eq:equal_PA_Q}
    Q_{\mathsf{AD}}\big( b, \mathsf{PA} \left(g_{b}(\cdot), \eta \right) \big) = Q_{\mathsf{AD}}\left( b, a \right).
\end{align}
Since $Q_{\mathsf{AD}}(\cdot, \cdot)$ is a strictly increasing function with respect to its second argument, this equality holds if and only if $\mathsf{PA} \left(g_{b}(\cdot), \eta \right) = a$. This together with~\eqref{b_to_MMSE_relation} implies
\begin{align}\label{eq:point_identity}
    (b,a) = \bigl( \mathsf{MSE}\left(g_{b}(\cdot), \eta \right), \mathsf{PA} \left(g_{b}(\cdot), \eta \right) \bigr).
\end{align}
Finally, since $g_{\alpha}(\cdot) \in \mathcal{B}^{\eta}_{\mathsf{AD}}$ and we showed in \eqref{eq:equality_utilities} that $\mathsf{U}_{\mathsf{AD}}\left(g_b(\cdot), \eta \right) = \mathsf{U}_{\mathsf{AD}}\left(g_{\alpha}(\cdot), \eta \right)$, it implies that $g_b(\cdot)$ achieves the global maximum utility. Therefore, we have $g_b(\cdot) \in \mathcal{B}^{\eta}_{\mathsf{AD}}$, as defined in~\eqref{eq:best_set_adv_definition}. By the definition in \eqref{J_definition}, this means $(b,a) \in \mathcal{J}_{\eta}$, which holds for every $(b,a)\in \mathcal{K}_{\eta}$. Therefore,  $\mathcal{K}_{\eta} \subseteq \mathcal{J}_{\eta}$.

\vspace{0.5em}
\noindent \textbf{Conclusion:} Having established both inclusions, we conclude that $\mathcal{J}_{\eta} = \mathcal{K}_{\eta}$. Substituting $\mathcal{K}_{\eta}$ for $\mathcal{J}_{\eta}$ in \eqref{seprating_utilities} completes the proof of Theorem \ref{theorem: equivalence_two_problem}.

\section{Proof of Theorem \ref{theorem:Main_Minimax_Result}}\label{proof:theorem:Main_Minimax_Result}

In this section, we provide the proof of the result stated in Theorem \ref{theorem:Main_Minimax_Result}. Our first step is to derive a general expression for the probability of acceptance, $\Pr(\mathcal{A}_\eta)$, as a function of the magnitude of the adversarial noise. Specifically, let us define $Z$ as the Euclidean norm of the adversarial noise
\begin{align}\label{eq:Z_definition}
    Z \triangleq \|\mathbf{N}_a\|_2.
\end{align}
We assume $Z$ is distributed according to a general probability density function 
\begin{align}
    f_Z(z)=\int_{\mathbf{n}_a: \|\mathbf{n}_a\|=z} g(\mathbf{n}_a) d \mathbf{n}_a,
\end{align}
supported on $[0, \infty)$. As established in \eqref{eq:PA_def}, the DC accepts the computation if and only if the Euclidean distance between the two reports, $\mathbf{Y}_1$ and $\mathbf{Y}_2$, does not exceed the threshold $\eta \Delta$. Recall from the system model that the reports are given by $\mathbf{Y}_h = \mathbf{U} + \mathbf{N}_h$ and $\mathbf{Y}_a = \mathbf{U} + \mathbf{N}_a$, where $\mathbf{U}$ is the ground truth. When the DC computes the difference between the two reports, the common ground truth signal $\mathbf{U}$ cancels out entirely
\begin{align}\label{eq:noise_diff}
    \|\mathbf{Y}_1 - \mathbf{Y}_2\|_2 &= \|(\mathbf{U} + \mathbf{N}_h) - (\mathbf{U} + \mathbf{N}_a)\|_2 \nonumber \\
    &= \|\mathbf{N}_h - \mathbf{N}_a\|_2.
\end{align}
Consequently, the acceptance condition $\|\mathbf{Y}_1 - \mathbf{Y}_2\|_2 \le \eta \Delta$ reduces to a constraint on the relative distance between the noise vectors
\begin{align}\label{acceptance_new}
    \mathcal{A}_\eta : \quad \|\mathbf{N}_h - \mathbf{N}_a\|_2 \le \eta \Delta.
\end{align}

In order to prove Theorem \ref{theorem:Main_Minimax_Result}, we first show that the probability of acceptance, $\mathsf{PA}(g(\cdot), \eta)$, only depends on the adversarial noise through the distribution of it the magnitude distribution $f_Z(z)$, as shown in the following lemma.

\begin{lemma}\label{lemma:prob_acceptance_general}
    For any adversarial noise distribution characterized by the marginal magnitude PDF $f_Z(z)$, the probability of acceptance is given by
    \begin{align}\label{eq:prob_acceptance_lemma}
        \Pr(\mathcal{A}_\eta) = \int_{0}^{\infty} \Phi_N(z) f_Z(z) \, dz,
    \end{align}
    where $\Phi_N(z)$ is defined piecewise as
    \begin{align}\label{eq:Phi_piecewise}
        \Phi_N(z) = 
        \begin{cases} 
            1 & \text{if } 0 \le z \le (\eta-1)\Delta, \\[6pt]
            \frac{V_{\text{lens}}(\Delta, \eta\Delta, z)}{V_N(\Delta)} & \text{if } (\eta-1)\Delta \leq z \leq (\eta+1)\Delta, \\[6pt]
            0 & \text{if } z \ge (\eta+1)\Delta,
        \end{cases}
    \end{align}
    with
    \begin{align}\label{eq:V_lens_lemma}
        V_{\text{lens}}(\Delta, \eta\Delta, z) &= K_N(\Delta, u_c(z)) + K_N(\eta\Delta, z - u_c(z)), \nonumber \\
        u_c(z) &= \frac{z^2 + \Delta^2(1 - \eta^2)}{2z} ,\nonumber \\
        K_N(r, c) &= \frac{\pi^{(N-1)/2} r^N}{\Gamma(\frac{N+1}{2})} \int_{c/r}^{1} (1 - t^2)^{\frac{N-1}{2}} \, dt,
    \end{align}
    and $\Gamma(\cdot)$ is defined in \eqref{def:Gamma}, and $V_N(\cdot)$ is defined in \eqref{eq:Ball_Volume}.
\end{lemma}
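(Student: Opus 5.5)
We will condition on the adversarial noise, use independence of $\mathbf{N}_h$ and $\mathbf{N}_a$, and compute the conditional acceptance probability as a normalized volume. By \eqref{acceptance_new}, acceptance is the event $\|\mathbf{N}_h-\mathbf{N}_a\|_2\le\eta\Delta$. Conditioning on $\mathbf{N}_a=\mathbf{n}_a$ and using $\mathbf{N}_h\sim\text{Unif}(\mathcal{B}_N(\Delta))$ gives
\begin{align*}
\Pr(\mathcal{A}_\eta\mid \mathbf{N}_a=\mathbf{n}_a)=\frac{\mathrm{vol}\bigl(\mathcal{B}_N(\Delta)\cap\mathcal{B}_N(\eta\Delta,\mathbf{n}_a)\bigr)}{V_N(\Delta)}.
\end{align*}
The Lebesgue measure is invariant under orthogonal transformations, and both balls map to balls of the same radii. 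So for any rotation $R$ with $R\mathbf{n}_a=\|\mathbf{n}_a\|_2\mathbf{e}_1$, this ratio depends only on $z=\|\mathbf{n}_a\|_2$. Call it $\Phi_N(z)$. Integrating against $g$ in polar form, $\int g(\mathbf{n}_a)\Phi_N(\|\mathbf{n}_a\|)\,d\mathbf{n}_a=\int_0^\infty\Phi_N(z)f_Z(z)\,dz$, which is \eqref{eq:prob_acceptance_lemma}.

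Next we evaluate $\Phi_N(z)$ in three cases. If $z\le(\eta-1)\Delta$, the triangle inequality gives $\|\mathbf{x}-\mathbf{n}_a\|\le\Delta+z\le\eta\Delta$ for every $\mathbf{x}\in\mathcal{B}_N(\Delta)$, so the small ball is inside the large one and $\Phi_N=1$. Here $\eta\ge2$ ensures $(\eta-1)\Delta\ge\Delta>0$, so the small ball can never contain the large one. If $z\ge(\eta+1)\Delta$, the balls meet in at most one point (a null set), so $\Phi_N=0$.

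In the intermediate range the boundary spheres intersect in a hyperplane $x_1=u$. We find $u$ by subtracting $\|\mathbf{x}\|^2=\Delta^2$ from $\|\mathbf{x}-z\mathbf{e}_1\|^2=\eta^2\Delta^2$, which gives $u=u_c(z)$ as in \eqref{eq:uc_def}. The lens splits along this hyperplane into two caps:
\begin{itemize}
\item a cap of the $\Delta$-ball on the side $x_1\ge u_c$, at signed distance $u_c$ from its center;
\item a cap of the $\eta\Delta$-ball on the side $x_1\le u_c$, at signed distance $z-u_c$ from its center $z\mathbf{e}_1$.
\end{itemize}

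We then compute cap volumes by slicing. A cap of a radius-$r$ ball cut at signed distance $c$ has volume
\begin{align*}
\int_c^r V_{N-1}\bigl(\sqrt{r^2-s^2}\bigr)\,ds.
\end{align*}
Substituting $s=rt$ and $C_{N-1}=\pi^{(N-1)/2}/\Gamma(\frac{N+1}{2})$ gives exactly $K_N(r,c)$ in \eqref{eq:K_cap_def}. Summing the two caps gives $V_{\text{lens}}$.

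The main obstacle is the sign bookkeeping. In part of the range $u_c(z)$ is negative (for instance near $z=(\eta-1)\Delta$, where $u_c=-\Delta$). Then the "cap" of the small ball is more than a hemisphere, and we must check that the same integral formula with negative $c$ still gives the correct piece. It does, because $K_N(r,c)$ with $c\in[-r,r]$ is the volume of $\{x_1\ge c\}$ in the ball. We also need $z-u_c\in[-\eta\Delta,\eta\Delta]$ and $u_c\in[-\Delta,\Delta]$ throughout the range. Both follow from the triangle inequalities defining the intersection regime and are confirmed at the endpoints, where $\Phi_N$ equals $1$ and $0$, so the pieces match continuously.
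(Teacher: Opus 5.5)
Your proposal is correct and follows essentially the same route as the paper. Both condition on the adversarial noise, write the conditional acceptance probability as the normalized volume of $\mathcal{B}_N(\Delta)\cap\mathcal{B}_N(\eta\Delta,\mathbf{n}_a)$, use rotational invariance to make it a function of $z=\|\mathbf{n}_a\|_2$ alone, and split the partial-overlap lens along the radical hyperplane $x_1=u_c(z)$ into two caps whose volumes come from slicing into $(N-1)$-balls. Your handling of negative $u_c$ through the signed cap volume $K_N(r,c)$, $c\in[-r,r]$, is the paper's unified signed-distance formula from Appendix~\ref{sec:intersection_proof}; you state it more carefully than the paper's sub-case split, since in this lemma it is $u_c$, not $z-u_c$, that changes sign.
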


The detailed proof of this lemma can be found in Appendix \ref{proof:lemma:prob_acceptance_general}, but here we provide an intuitive geometric overview of the calculation. For any realization of the adversarial noise vector $\mathbf{n}_a$ with a fixed magnitude $\|\mathbf{n}_a\|_2 = z$, the DC accepts the reports if the honest noise $\mathbf{n}_h$ falls within an $N$-ball of radius $\eta\Delta$ centered at $\mathbf{n}_a$. While the conditional probability $\Pr(\mathcal{A}_\eta \mid Z=z)$ formally requires averaging over all possible realizations of $\mathbf{n}_a$ on the shell of radius $z$, the spherical \emph{symmetry of the honest noise} distribution ensures that the intersection volume remains invariant, regardless of the specific direction of $\mathbf{n}_a$. Consequently, this probability is simply the volume of the intersection between the honest noise support (centered at the origin) and the acceptance ball (centered at $\mathbf{n}_a$) divided by the total volume of the honest noise support. This geometry is illustrated in Figure \ref{fig:intersection_geometry}, and in  Appendix \ref{proof:lemma:prob_acceptance_general}, we evaluate the ratio of the two volumes for the different cases of $z$, as presented in \eqref{eq:Phi_piecewise}.

\begin{figure}[htbp]
    \centering
    \begin{tikzpicture}[scale=1.1]
        \def\rHonest{1.8}     
        \def\rAccept{2.8}     
        \def\zDist{2.2}       

        \definecolor{honestColor}{RGB}{100, 149, 237} 
        \definecolor{adversaryColor}{RGB}{255, 99, 71} 

        \coordinate (Origin) at (0,0);      
        \coordinate (Na) at (\zDist,0);    

        \path [name path=circleH] (Origin) circle (\rHonest);
        \path [name path=circleA] (Na) circle (\rAccept);
        \path [name intersections={of=circleH and circleA, by={I1,I2}}];

        \begin{scope}
            \clip (Origin) circle (\rHonest);
            \fill[gray!30] (Na) circle (\rAccept);
        \end{scope}

        \draw[thick, honestColor] (Origin) circle (\rHonest);
        \draw[thick, adversaryColor] (Na) circle (\rAccept);

        \draw[dashed, gray] (-2.2,0) -- (5.2,0);
        \fill (Origin) circle (2pt) node[below left] {$\mathbf{0}$};
        \fill (Na) circle (2pt) node[below right] {$\mathbf{n}_a$};

        \draw[<->, thick] (0,-0.4) -- (\zDist,-0.4) node[midway, below] {$z = \|\mathbf{n}_a\|_2$};

        \draw[->, thick, honestColor] (Origin) -- (130:\rHonest) node[midway, left, text=black] {$\Delta$};
        \draw[->, thick, adversaryColor] (Na) -- ($(Na)+(50:\rAccept)$) node[midway, right, text=black] {$\eta \Delta$};

        \node[honestColor, font=\small] at (0, 2.1) {Honest Support};
        \node[adversaryColor, font=\small] at (\zDist, 3.1) {Acceptance Ball};
        
        \node[text=black!70, font=\footnotesize, align=center] at (0.5, 0.75) {Intersection\\Volume};
        
    \end{tikzpicture}
    \caption{ The honest noise $\mathbf{N}_h$ is uniformly distributed on the blue ball of radius $\Delta$. Given any adversarial noise $\mathbf{n}_a$ with magnitude $z$, the condition $\|\mathbf{N}_h - \mathbf{n}_a\|_2 \le \eta\Delta$ is satisfied if $\mathbf{N}_h$ falls within the red ball. Due to spherical symmetry, the conditional probability $\Pr(\mathcal{A}_\eta \mid Z=z)$ depends only on the scalar distance $z$.}
    \label{fig:intersection_geometry}
\end{figure}
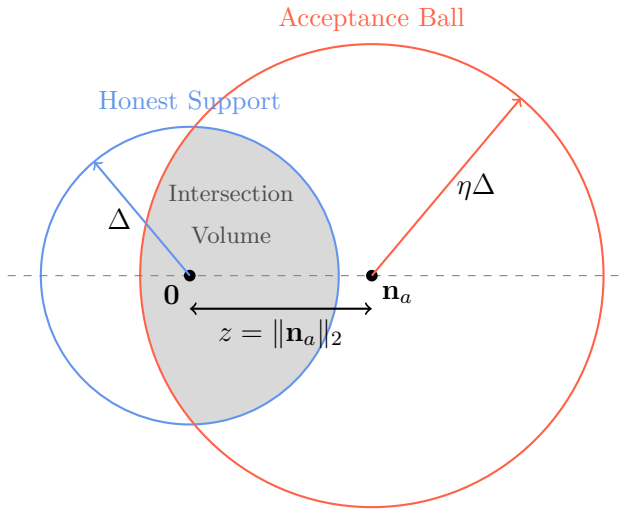
Having characterized the probability of acceptance in Lemma \ref{lemma:prob_acceptance_general}, the next step is to derive a corresponding analytical relationship for the estimation error. Specifically, we seek to express the Mean Squared Error (MSE) in terms of the adversarial noise distribution $g_{\mathbf{N}_a}(\mathbf{n}_a)$, and ideally show that it depends only on $f_Z(z)$. Recall that the DC estimates the ground truth $\mathbf{U}$ by averaging the two reports, $\hat{\mathbf{U}} = \frac{1}{2}(\mathbf{Y}_h + \mathbf{Y}_a)$. The estimation error is therefore the magnitude of the average noise vector. More precisely, we have
\begin{align}
    \|\mathbf{U} - \hat{\mathbf{U}}\|_2^2 = \left\| \mathbf{U} - \left(\mathbf{U} + \frac{\mathbf{N}_h + \mathbf{N}_a}{2}\right) \right\|_2^2 = \left\| \frac{\mathbf{N}_h + \mathbf{N}_a}{2} \right\|_2^2 
\end{align}
The following lemma establishes the relationship between this error and the adversarial noise distribution, and shows that it is fully characterized by its marginal magnitude probability density function $f_Z(z)$. The lemma provides  an analytical framework that characterizes the estimation performance through the density of the magnitude of the adversarial noise.

\begin{lemma}\label{lemma:mse_general}
    For any adversarial noise distribution characterized by the marginal magnitude PDF $f_Z(z)$, the conditional MSE of the estimator is given by
    \begin{align}\label{eq:mse_lemma_result}
        \mathbb{E}\left[ \|\mathbf{U} - \hat{\mathbf{U}}\|_2^2 \mid \mathcal{A}_\eta \right] = \frac{1}{4 \Pr(\mathcal{A}_\eta)} \int_{0}^{\infty} \Psi_N(z) f_Z(z) \, dz.
    \end{align}
    Here, $\Pr(\mathcal{A}_\eta)$ is the acceptance probability that is derived in Lemma \ref{lemma:prob_acceptance_general}, and 
    \begin{align}\label{eq:Psi_piecewise}
        \Psi_N(z) = 
        \begin{cases} 
            z^2 + \frac{N}{N+2}\Delta^2 & \text{if } 0 \le z \le (\eta-1)\Delta, \\[10pt]
            \frac{1}{V_N(\Delta)} \Psi_N^{\text{lens}}(z) & \text{if } (\eta-1)\Delta \leq z \leq (\eta+1)\Delta, \\[10pt]
            0 & \text{if } z \ge (\eta+1)\Delta,
        \end{cases}
    \end{align}
    where $\Psi_N^{\text{lens}}(z)$ is given by
    \begin{align}\label{eq:Psi_lens_def}
        \Psi_N^{\text{lens}}(z) = \Big[ J_N(\Delta, u_c) + z^2 V_1 \Big] + \Big[ J_N(\eta\Delta, z-u_c) + 4z^2 V_2 - 2z Q_N(\eta\Delta, z-u_c) \Big],
    \end{align}
    with 
    \begin{align}
        u_c &= \frac{z^2 + \Delta^2(1 - \eta^2)}{2z} \nonumber \\
        K_N(r, c) &= \frac{\pi^{(N-1)/2} r^N}{\Gamma(\frac{N+1}{2})} \int_{c/r}^{1} (1 - t^2)^{\frac{N-1}{2}} \, dt, \nonumber \\
        V_1 &= K_N(\Delta, u_c), \nonumber \\
        V_2 &= K_N(\eta\Delta, z-u_c), \nonumber \\
        Q_N(r, d) &= \frac{r^2 - d^2}{N+1} V_{N-1}(\sqrt{r^2 - d^2}), \nonumber \\
        J_N(r, d) &= \frac{N r^2}{N+2} K_N(r, d) + \frac{2 d}{N+2} Q_N(r, d),
    \end{align}
    and  $\Gamma(\cdot)$ and $V_N(\cdot)$ are defined in \eqref{def:Gamma} and~\eqref{eq:Ball_Volume}, respectively.
    
\end{lemma}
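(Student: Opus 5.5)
The plan is to write the conditional MSE as a ratio and reduce everything to a one-dimensional kernel in $z=\|\mathbf{n}_a\|_2$, mirroring the argument behind Lemma \ref{lemma:prob_acceptance_general}. Since $\|\mathbf{U}-\hat{\mathbf{U}}\|_2^2=\tfrac14\|\mathbf{N}_h+\mathbf{N}_a\|_2^2$, and by \eqref{acceptance_new} the event $\mathcal{A}_\eta$ depends only on $\mathbf{N}_h-\mathbf{N}_a$, we have
\begin{align*}
\mathbb{E}\left[\|\mathbf{U}-\hat{\mathbf{U}}\|_2^2 \mid \mathcal{A}_\eta\right]=\frac{1}{4\Pr(\mathcal{A}_\eta)}\,\mathbb{E}\left[\|\mathbf{N}_h+\mathbf{N}_a\|_2^2\,\mathbbm{1}\{\|\mathbf{N}_h-\mathbf{N}_a\|_2\le\eta\Delta\}\right].
\end{align*}
So it suffices to show that the unnormalized expectation equals $\int_0^\infty\Psi_N(z)f_Z(z)\,dz$. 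Because $\mathbf{U}$ does not enter this expression at all, I would condition on $\mathbf{N}_a=\mathbf{n}_a$ and use the independence of $\mathbf{N}_a$ and $\mathbf{N}_h$.

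For fixed $\mathbf{n}_a$ with $\|\mathbf{n}_a\|_2=z$, the conditional expectation is
\begin{align*}
\frac{1}{V_N(\Delta)}\int_{\mathcal{B}_N(\Delta)\cap\mathcal{B}_N(\eta\Delta,\mathbf{n}_a)}\|\mathbf{x}+\mathbf{n}_a\|_2^2\,d\mathbf{x}.
\end{align*}
Applying an orthogonal map that sends $\mathbf{n}_a$ to $z\mathbf{e}_1$ leaves the uniform ball law and all norms unchanged. Hence this quantity depends only on $z$, and averaging over the direction of $\mathbf{n}_a$ (then over $f_Z$) gives the claimed integral form. It remains to evaluate this kernel in three regimes.

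For $z\le(\eta-1)\Delta$, the acceptance ball contains $\mathcal{B}_N(\Delta)$, so the indicator is identically one. Expanding the square, the kernel is $\mathbb{E}\|\mathbf{N}_h\|_2^2+2z\,\mathbb{E}[N_{h,1}]+z^2$. The middle term vanishes by symmetry, and the second moment $\frac{N}{N+2}\Delta^2$ comes from Appendix \ref{sec:second_moment_n_ball}. For $z\ge(\eta+1)\Delta$, the two balls are disjoint, so the kernel is $0$.

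The substantive case is the lens regime $(\eta-1)\Delta\le z\le(\eta+1)\Delta$. There I would split the lens by the radical hyperplane $x_1=u_c(z)$, found by equating $\|\mathbf{x}\|^2=\Delta^2$ and $\|\mathbf{x}-z\mathbf{e}_1\|^2=\eta^2\Delta^2$; this is the same decomposition that yields $V_{\text{lens}}=V_1+V_2$ in Lemma \ref{lemma:prob_acceptance_general}. This produces two caps. The first is a cap of $\mathcal{B}_N(\Delta)$ at signed height $u_c$. The second is a cap of $\mathcal{B}_N(\eta\Delta,z\mathbf{e}_1)$ at height $z-u_c$ from its own center. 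On each cap I would shift coordinates to that ball's center, so the integrand becomes $\|\mathbf{y}\|^2+(\text{linear in }y_1)+\text{const}\cdot z^2$. For the second cap, $\mathbf{y}=\mathbf{x}-z\mathbf{e}_1$, so $\mathbf{x}+z\mathbf{e}_1=\mathbf{y}+2z\mathbf{e}_1$, which is the source of the $4z^2V_2$ term.

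Three moment formulas over a cap of radius $r$ at height $d$ are then needed, and I would take them from Appendix \ref{app:moments_derivation}: the volume $K_N(r,d)$; the first moment $\int y_1=Q_N(r,d)=\frac{r^2-d^2}{N+1}V_{N-1}(\sqrt{r^2-d^2})$, obtained by slicing in $y_1$ and integrating $t\,V_{N-1}(\sqrt{r^2-t^2})$ in closed form; and the second moment $\int\|\mathbf{y}\|^2=J_N(r,d)$, obtained by slicing and integrating by parts. Summing the two caps and dividing by $V_N(\Delta)$ should reproduce \eqref{eq:Psi_lens_def}.

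The main obstacle I expect is the bookkeeping of the linear (cross) terms. One must track signs and orientation carefully: which side of each ball the cap lies on, and whether the first moment enters as $+Q_N$ or $-Q_N$. The first-cap cross term $2z\int y_1$ must be reconciled with the stated bracket, which has no explicit first-moment term for the $\Delta$-ball cap. The second-cap cross term must come out as exactly $-2zQ_N(\eta\Delta,z-u_c)$. I would verify this reconciliation first, either via the identity between the two caps' first moments across the shared hyperplane or by checking against the $N=1$ formulas of \cite{GoCJournal}. As a consistency check, $\Psi_N$ should be continuous at $z=(\eta\pm1)\Delta$, matching the two outer regimes.
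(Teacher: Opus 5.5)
Your proposal is correct and follows essentially the same route as the paper. Both arguments condition on $\mathbf{N}_a$, use rotational invariance to reduce to $z$, handle the containment and disjoint regimes directly, split the lens along the radical hyperplane $x_1=u_c$, and evaluate each cap with $K_N$, $Q_N$, $J_N$; the paper works through the posterior density of $\mathbf{N}_h$ rather than your unnormalized expectation, which is only a cosmetic difference. The cross-term reconciliation you flag is settled exactly as in the paper by the shared-boundary identity $Q_N(\Delta,u_c)=Q_N(\eta\Delta,z-u_c)$: both equal $\frac{h^2}{N+1}V_{N-1}(h)$ with $h^2=\Delta^2-u_c^2$, so your first-cap term $+2zQ_N(\Delta,u_c)$ and your recentred second-cap term $-4zQ_N(\eta\Delta,z-u_c)$ combine to the stated $-2zQ_N(\eta\Delta,z-u_c)$.
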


The proof of this lemma can be found in Appendix \ref{proof:lemma:mse_general}.

Having established the general expressions for the probability of acceptance and the conditional MSE in Lemmas \ref{lemma:prob_acceptance_general} and \ref{lemma:mse_general}, the next step in the proof of Theorem \ref{theorem:Main_Minimax_Result} is to simplify the search space for the worst-case adversarial noise distribution. We show that without loss of optimality, we can restrict the support of the adversarial noise magnitude $Z$ to the interval $[(\eta-1)\Delta, (\eta+1)\Delta]$. We formalize this reduction in the following lemma. 

For a given adversarial noise magnitude distribution $f_Z(z)$ with an arbitrary support, we denote $\Pr(\mathcal{A}_\eta; f_Z)$ as the probability of acceptance when the noise magnitude follows the density $f_Z(z)$. Similarly, $\mathbb{E}\left[ \|\mathbf{U} - \hat{\mathbf{U}}\|_2^2 \mid \mathcal{A}_\eta; f_Z \right]$ denotes the resulting estimation error for the case where the noise magnitude density is $f_Z(z)$.

\begin{lemma}\label{lemma:optimal_support}
    Let $f_Z(z)$ be the probability density function  of the adversarial noise magnitude, satisfying $\Pr(\mathcal{A}_\eta; f_Z) > 0$. There exists an alternative adversarial noise distribution with magnitude probability density function $f^*_Z(z)$, supported strictly on the interval $[(\eta-1)\Delta, (\eta+1)\Delta]$, such that
    \begin{align}
        \Pr(\mathcal{A}_\eta; f^*_Z) &\ge \Pr(\mathcal{A}_\eta; f_Z), \label{eq:better_acc} \\
        \mathbb{E}\left[ \|\mathbf{U} - \hat{\mathbf{U}}\|_2^2 \mid \mathcal{A}_\eta; f^*_Z \right] &\ge \mathbb{E}\left[ \|\mathbf{U} - \hat{\mathbf{U}}\|_2^2 \mid \mathcal{A}_\eta; f_Z \right]. \label{eq:better_mse}
    \end{align}
\end{lemma}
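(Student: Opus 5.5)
The plan is a direct mass-transport argument on the magnitude distribution, using the closed forms from Lemma \ref{lemma:prob_acceptance_general} and Lemma \ref{lemma:mse_general}. Write $a\triangleq(\eta-1)\Delta$ and $b\triangleq(\eta+1)\Delta$. Split the mass of $f_Z$ into three parts: $p_{\mathrm{lo}}=\Pr(Z<a)$, $p_{\mathrm{mid}}=\Pr(a\le Z\le b)$ and $p_{\mathrm{hi}}=\Pr(Z>b)$. By the two lemmas, the acceptance probability is $\int\Phi_N f_Z$ and the conditional MSE is $\frac{1}{4\Pr(\mathcal{A}_\eta)}\int \Psi_N f_Z$. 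The aim is a new magnitude law $f_Z^*$ that keeps $\int \Phi_N f_Z$ unchanged and does not decrease $\int\Psi_N f_Z$. Both inequalities then follow at once, the first in fact with equality, since $\Pr(\mathcal{A}_\eta;f_Z)>0$ fixes the denominator.

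\textbf{Mass above $b$.} Here $\Phi_N=\Psi_N=0$, so this mass contributes nothing to either integral. I will transport all of it to the endpoint $z=b$. There the lens formulas also give $\Phi_N(b)=0$ and $\Psi_N(b)=0$: $u_c(b)=\Delta$, so $K_N(\Delta,\Delta)=0$, and $b-u_c(b)=\eta\Delta$, so $K_N(\eta\Delta,\eta\Delta)=0$ and $Q_N(\eta\Delta,\eta\Delta)=0$. Hence neither integral changes.

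\textbf{Mass below $a$.} Here $\Phi_N=1$ and $\Psi_N(z)=z^2+\frac{N}{N+2}\Delta^2$, which is increasing in $z$. I will transport this mass to $z=a$. I will check that the lens expressions are continuous at $a$, namely $\Phi_N(a)=1$ and $\Psi_N(a)=a^2+\frac{N}{N+2}\Delta^2$. Geometrically, at $z=a$ the honest ball $\mathcal{B}_N(\Delta)$ is internally tangent to, and contained in, the acceptance ball. Algebraically, $u_c(a)=-\Delta$ makes $K_N(\Delta,-\Delta)=V_N(\Delta)$, while $a-u_c(a)=\eta\Delta$ kills the second cap. The first bracket then reduces to $J_N(\Delta,-\Delta)+a^2V_N(\Delta)$, and $J_N(\Delta,-\Delta)=\frac{N\Delta^2}{N+2}V_N(\Delta)$ because $Q_N(\Delta,-\Delta)=0$. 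After the move, the contribution to $\int\Phi_N f_Z$ is still $p_{\mathrm{lo}}$. The contribution to $\int\Psi_N f_Z$ rises from $\mathbb{E}[(Z^2+\frac{N}{N+2}\Delta^2)\mathbbm{1}\{Z<a\}]$ to $p_{\mathrm{lo}}(a^2+\frac{N}{N+2}\Delta^2)$. Since $Z^2\le a^2$ on this event, the value does not decrease.

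\textbf{Realising $f_Z^*$ and the main obstacle.} The middle mass is left untouched, so $f_Z^*$ is supported on $[a,b]$ and satisfies both claimed inequalities. To realise $f_Z^*$ as an actual adversarial noise, I will take the direction uniform on the sphere: the vector is $Z^*\boldsymbol{\theta}$ with $\boldsymbol{\theta}$ uniform on the unit sphere, as in Algorithm \ref{Alg:finding_noise_N_dim}. This is legitimate because the lemmas show that only the magnitude law matters.

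The main obstacle is technical. The transported law has atoms at $a$ and $b$, which is not a density in the strict sense of $\Lambda_{\text{AD}}$. I would first argue, as the paper already does with shell distributions, that such generalized densities are admissible. If strict densities are required, I would instead spread each atom uniformly over $[a,a+\epsilon]$ and $[b-\epsilon,b]$ and use continuity of $\Phi_N,\Psi_N$ on $[a,b]$. For the $b$-side this is harmless, since $\Phi_N,\Psi_N\ge 0$, so spreading only adds nonnegative terms. For the $a$-side, $\Phi_N$ drops slightly below $1$, so one needs a limiting or compensating argument. One option is to absorb the lost acceptance by shifting a tiny amount of mass that was originally below $a$, or to accept inequalities up to $\epsilon$ and note that the sup in \eqref{c_small_definition} is unaffected. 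Verifying the boundary identities $\Psi_N(a)=a^2+\frac{N}{N+2}\Delta^2$ and $\Psi_N(b)=0$ from the cap formulas is the other point needing care.
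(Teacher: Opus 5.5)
Your argument is correct. Its treatment of the mass below $(\eta-1)\Delta$ is exactly the paper's Step 2: you collapse that mass to an atom at $(\eta-1)\Delta$, using $\Phi_N\equiv 1$ there and the monotonicity of $z^2+\frac{N}{N+2}\Delta^2$.

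The difference is in how you handle the tail above $(\eta+1)\Delta$.
\begin{itemize}
\item The paper deletes that mass and renormalizes, $f_1=f_Z/(1-P_{\text{tail}})$. This multiplies both $\int\Phi_N f$ and $\int\Psi_N f$ by the same factor, so the conditional MSE is unchanged and the acceptance probability strictly increases whenever $P_{\text{tail}}>0$.
\item You instead park that mass at the endpoint $z=(\eta+1)\Delta$. The paper uses this same device only later, in the proof of Lemma \ref{lemma:exact_acc_noise_existence}.
\end{itemize}

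Each route has its advantage. Yours keeps the acceptance probability exactly fixed, so the two MSEs share one denominator and the comparison takes a single line. The price is that you need $\Phi_N=\Psi_N=0$ at the endpoint itself, which you verify correctly via $u_c((\eta+1)\Delta)=\Delta$. The paper's route needs vanishing only for $z>(\eta+1)\Delta$, and it conveys the slightly stronger point that tail mass is strictly wasteful for the adversary.

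Your explicit check that the lens formulas give $\Phi_N((\eta-1)\Delta)=1$ and $\Psi_N((\eta-1)\Delta)=(\eta-1)^2\Delta^2+\frac{N}{N+2}\Delta^2$ is a real plus. The paper uses these endpoint values only implicitly, through its overlapping piecewise definition. They are what make an atom at $(\eta-1)\Delta$ consistent with the lens-branch formulas used in Theorem \ref{theorem:Main_Minimax_Result}.

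One caveat on your optional smoothing remark. Spreading the atom at $(\eta+1)\Delta$ is not automatically harmless for the MSE: the MSE is a ratio, and adding nonnegative amounts to both $\int\Psi_N f$ and $\int\Phi_N f$ can lower it. This does not affect your main proof, because the paper itself works freely with Dirac and shell distributions.
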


The proof of this lemma is provided in Appendix \ref{proof:lemma:optimal_support}. 

Lemma \ref{lemma:optimal_support} implies that the search for the optimal adversarial noise can be restricted to noises with magnitude PDFs supported on the interval $[(\eta-1)\Delta, (\eta+1)\Delta]$. We refer to the requirement as the \textit{support condition}.

In the following lemma, we further simplify the analysis of the trade-off curve $c_\eta(\alpha)$. Specifically, we show that  the inequality constraint $\PA(g(\cdot), \eta)  \ge \alpha$ in the optimization problem in \eqref{c_small_definition} can be replaced with the equality constraint $\PA(g(\cdot), \eta)  = \alpha$, without affecting the optimal value

\begin{lemma}\label{lemma:exact_acc_noise_existence}
    Fix some $\alpha\in(0,1]$, and let $f_{Z,1}(z)$ be a PDF of the adversarial noise magnitude satisfying the support condition (i.e., supported on $[(\eta-1)\Delta, (\eta+1)\Delta]$), with an acceptance probability $\Pr(\mathcal{A}_\eta; f_{Z,1}) = \alpha_1 > \alpha$. There exists another noise magnitude PDF $f_{Z,2}(z)$, satisfying the support condition, such that  ${\Pr(\mathcal{A}_\eta; f_{Z,2}) = \alpha}$, and 
    \begin{align}
        \mathbb{E}\left[ \|\mathbf{U} - \hat{\mathbf{U}}\|_2^2 \mid \mathcal{A}_\eta; f_{Z,2} \right] = \mathbb{E}\left[ \|\mathbf{U} - \hat{\mathbf{U}}\|_2^2 \mid \mathcal{A}_\eta; f_{Z,1} \right].
    \end{align}
\end{lemma}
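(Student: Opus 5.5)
The plan is to obtain $f_{Z,2}$ by diluting $f_{Z,1}$ with a point mass of adversarial noise that is always rejected. Such noise lowers the acceptance probability while contributing nothing to the accepted error. The natural choice is the right endpoint $z_0 \triangleq (\eta+1)\Delta$ of the support interval. It belongs to $[(\eta-1)\Delta, (\eta+1)\Delta]$, so the support condition is preserved.

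Set $\lambda \triangleq \alpha/\alpha_1 \in (0,1)$ and define
\begin{align}
f_{Z,2}(z) = \lambda f_{Z,1}(z) + (1-\lambda)\,\delta(z - z_0).
\end{align}
At the level of noise vectors, this is realised as follows. With probability $\lambda$ the adversary draws $\mathbf{N}_a$ from the original density $g_1(\cdot)$. Otherwise it draws $\mathbf{N}_a$ uniformly on the sphere of radius $z_0$, exactly the shell construction used in Algorithm~\ref{Alg:finding_noise_N_dim}. Lemmas~\ref{lemma:prob_acceptance_general} and~\ref{lemma:mse_general} show that $\Pr(\mathcal{A}_\eta)$ and the unnormalised quantity $\int_0^\infty \Psi_N(z) f_Z(z)\,dz$ depend only on $f_Z$ and are linear in it. They therefore apply to this mixture directly, including the atom.

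The key fact is that $\Phi_N(z_0)=0$ and $\Psi_N(z_0)=0$. This is read off the piecewise definitions \eqref{eq:Phi_piecewise} and \eqref{eq:Psi_piecewise}, which assign the value $0$ for $z \ge (\eta+1)\Delta$. Geometrically, the lens is then a single tangency point of zero volume. For the middle branch one can also check consistency: $u_c(z_0)=\Delta$ and $z_0-u_c(z_0)=\eta\Delta$, so both caps $K_N$ vanish, and so do $Q_N$ and $J_N$.

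Linearity then gives the acceptance probability
\begin{align}
\Pr(\mathcal{A}_\eta; f_{Z,2}) = \lambda \alpha_1 + (1-\lambda)\Phi_N(z_0) = \alpha,
\end{align}
and the numerator of \eqref{eq:mse_lemma_result}
\begin{align}
\int_0^\infty \Psi_N f_{Z,2}\,dz = \lambda \int_0^\infty \Psi_N f_{Z,1}\,dz.
\end{align}
Dividing by $4\Pr(\mathcal{A}_\eta; f_{Z,2}) = 4\lambda\alpha_1$, the factor $\lambda$ cancels. The conditional MSE of $f_{Z,2}$ therefore equals that of $f_{Z,1}$.

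The main obstacle is technical rather than conceptual. The construction uses a point mass, whereas $\Lambda_{\text{AD}}$ is phrased in terms of densities. I would handle this as the paper already does in Algorithm~\ref{Alg:finding_noise_N_dim}, treating spherical-shell (delta) distributions as admissible. If a genuine density is required, I would replace the atom by a uniform distribution on a thin shell $[z_0-\epsilon, z_0]$. This perturbs $\Pr(\mathcal{A}_\eta)$ and $\int \Psi_N f_Z$ only by $o(1)$, since $\Phi_N$ and $\Psi_N$ are continuous at $z_0$ and vanish there. The mixing weight $\lambda$ would then be retuned, using continuity and the intermediate value theorem, so that the acceptance probability equals $\alpha$ exactly. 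Exact equality of the MSE would follow if the shell is chosen with ratio $\Psi_N/\Phi_N$ matching that of $f_{Z,1}$; failing that, the result holds up to an arbitrarily small slack, which still suffices for the supremum in \eqref{c_small_definition}.
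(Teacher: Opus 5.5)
Your proposal is correct and is essentially the paper's own proof. The paper uses the same mixture $f_{Z,2}(z) = \frac{\alpha}{\alpha_1} f_{Z,1}(z) + \bigl(1 - \frac{\alpha}{\alpha_1}\bigr)\delta\bigl(z - (\eta+1)\Delta\bigr)$ and the same cancellation of the factor $\alpha/\alpha_1$ via $\Phi_N((\eta+1)\Delta) = \Psi_N((\eta+1)\Delta) = 0$; your extra remarks on point masses versus densities address a technicality the paper passes over silently, and your thin-shell fallback gives only approximate equality, which still suffices for replacing $\PA \ge \alpha$ by $\PA = \alpha$ in \eqref{c_small_definition}.
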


The proof of this lemma is provided in Appendix \ref{proof:lemma:exact_acc_noise_existence}.

With Lemmas \ref{lemma:prob_acceptance_general}, \ref{lemma:mse_general}, \ref{lemma:optimal_support}, and \ref{lemma:exact_acc_noise_existence} established, we now proceed to prove the main result of Theorem~\ref{theorem:Main_Minimax_Result}.
Lemma \ref{lemma:optimal_support} restricts the search space to noise distributions supported on $[(\eta-1)\Delta, (\eta+1)\Delta]$, and Lemma \ref{lemma:exact_acc_noise_existence} allows us to fix the acceptance probability constraint to equality.
To prove \eqref{eq:c_eta_result}, we proceed in two steps: first, we establish the upper bound by showing
\begin{align}
    c_{\eta}(\alpha) \leq \frac{\tilde{\Psi}_N^*(\alpha)}{4 \alpha},
\end{align}
and subsequently, we demonstrate that this bound is achievable.

\subsection{Derivation of the Upper Bound}\label{sec:upperbound}

Our goal in this section is to obtain an upper bound for $c_\eta(\alpha)$ defined in~\eqref{c_small_definition}. Recall that, without loss of generality, we may assume that the noise magnitude satisfies the conditions of Lemma \ref{lemma:optimal_support} and satisfies the constraint on the acceptance probability with equality, as shown in Lemma~\ref{lemma:exact_acc_noise_existence}. 
Specifically, based on Lemma \ref{lemma:exact_acc_noise_existence} and the definition of $\Phi_N(z)$ in \eqref{eq:Phi_piecewise}, the probability of acceptance is given by
\begin{align} \label{eq:PA_z_limited}
    \Pr(\mathcal{A}_\eta) = \int_{(\eta-1)\Delta}^{(\eta+1)\Delta} \Phi_N(z) f_Z(z) \, dz = \alpha.
\end{align}
Furthermore, using Lemma \ref{lemma:mse_general} restricted to this support, the conditional MSE is given by
\begin{align} \label{eq:MSE_z_limited}
    \mathbb{E}\left[ \|\mathbf{U} - \hat{\mathbf{U}}\|_2^2 \mid \mathcal{A}_\eta \right] = \frac{1}{4 \alpha} \int_{(\eta-1)\Delta}^{(\eta+1)\Delta} \Psi_N(z) f_Z(z) \, dz.
\end{align}
For a given $z\in [(\eta-1)\Delta, (\eta+1)\Delta]$, let $q=\Phi_N(z)$ to be the conditional probability of acceptance for a given noise magnitude $z$, as defined in~\eqref{definition_I(z)}.
Recall from the definition of $\Phi_N(z)$ in~\eqref{eq:Phi_piecewise}. As shown in Figure~\ref{fig:intersection_geometry}, the function $\Phi_N(z)$ is proportional to the intersection volume of two balls, with centers at distance  $z$. This shows that $\Phi_N(z)$ is a decreasing function, mapping $[0,\infty)$ to $[0,1]$. Therefore, we can study the inverse function $\Phi^{-1}_N(\cdot)$ and its differential transformation,
\begin{align} \label{eq:z_q_transform}
    z = \Phi_N^{-1}(q), \quad \text{and} \quad dq = \Phi_N'(z) \, dz.
\end{align}
Now, let us define
\begin{align} \label{def:w}
    w(q) \triangleq \frac{-f_Z(\Phi_N^{-1}(q))}{\Phi_N'(\Phi_N^{-1}(q))}.
\end{align}
Since $f_Z(z)$ is a valid PDF satisfying the support condition, we have  $\int_{(\eta-1)\Delta}^{(\eta+1)\Delta} f_Z(z) \, dz = 1$. This implies
\begin{align} \label{eq:constraint_norm_w}
    \int_{0}^{1} w(q) \, dq =
    \int_{0}^1 \frac{-f_Z(\Phi_N^{-1}(q))}{\Phi_N'(\Phi_N^{-1}(q))} \, dq \overset{(a)}{=}  
    \int_{(\eta+1)\Delta}^{(\eta-1)\Delta} 
    \frac{-f_Z(z)}{\Phi_N'(z)} \, \Phi'_N(z)dz = \int_{(\eta-1)\Delta}^{(\eta+1)\Delta} 
    f_Z(z) \, dz =  1,
\end{align}
where (a) follows from the change of variable $z=\Phi^{-1}_N(q)$ and \eqref{eq:z_q_transform}.
Similarly, applying the same change of variable $z=\Phi^{-1}_N(q)$ in \eqref{eq:PA_z_limited} leads to
\begin{align} \label{eq:constraint_alpha_w}
    \int_{0}^{1} q w(q) \, dq = \alpha.
\end{align}
Finally, we apply the same change of variable to~\eqref{eq:MSE_z_limited}, and arrive at 
\begin{align} \label{eq:MSE_q_domain}
    \mathbb{E}\left[ \|\mathbf{U} - \hat{\mathbf{U}}\|_2^2 \mid \mathcal{A}_\eta \right]  = \frac{1}{4 \alpha}\int_{1}^0 \Psi_N(\phi^{-1}_N(q)) f_Z(\phi^{-1}_N(q)) \frac{dq}{\Phi'_N(\phi^{-1}_N(q))}
    =\frac{1}{4 \alpha} \int_{0}^{1} \tilde{\Psi}_N(q) w(q) \, dq,
\end{align}
where 
\begin{align}\label{eq:def:tilde-Psi}
    \tilde{\Psi}_N(q) \triangleq \Psi_N(\Phi_N^{-1}(q)).
\end{align}

Let $\tilde{\Psi}_N^*(q)$ be the upper concave envelope of the function $\tilde{\Psi}_N(q)$ over the interval $q \in [0, 1]$, i.e., $\tilde{\Psi}_N^*(q)$ is a concave function and satisfies  $\tilde{\Psi}_N(q) \le \tilde{\Psi}_N^*(q)$ for all $q$. Applying Jensen's inequality and treating $w(q)$ as a probability density function (justified by \eqref{eq:constraint_norm_w}), we get
\begin{align}
    \int_{0}^{1} \tilde{\Psi}_N(q) w(q) \, dq &\le \int_{0}^{1} \tilde{\Psi}_N^*(q) w(q) \, dq \nonumber \\
    &\le \tilde{\Psi}_N^* \left( \frac{\int_{0}^{1} q w(q) \, dq}{\int_{0}^{1} w(q) \, dq} \right) \cdot \int_{0}^{1} w(q) \, dq\nonumber\\
    &\overset{(a)}{\le} \tilde{\Psi}_N^* \left( \frac{\alpha}{1} \right) \cdot 1 = \tilde{\Psi}_N^*(\alpha),
\end{align}
where (a) follows from \eqref{eq:constraint_norm_w} and \eqref{eq:constraint_alpha_w}. 
Finally, substituting this bound back into \eqref{eq:MSE_q_domain} yields the upper bound on the worst-case conditional expectation
\begin{align}\label{upper_bound}
    c_\eta(\alpha) = \max_{f_Z} \mathbb{E}\left[ \|\mathbf{U} - \hat{\mathbf{U}}\|_2^2 \mid \mathcal{A}_\eta \right] \leq \frac{\tilde{\Psi}_N^*(\alpha)}{4 \alpha}.
\end{align}

\subsection{Achievability of the Upper Bound}\label{sec:achivable_noise}

In the previous subsection, we established the upper bound on the worst-case error. Specifically, we showed that
\begin{align}\label{eq:upper_bound_recap}
    c_{\eta}(\alpha) \leq \frac{\tilde{\Psi}_N^*(\alpha)}{4 \alpha}.
\end{align}
In order to complete the proof of Theorem \ref{theorem:Main_Minimax_Result}, we need to demonstrate the reverse inequality
\begin{align}\label{eq:lower_bound_req}
    c_{\eta}(\alpha) \geq \frac{\tilde{\Psi}_N^*(\alpha)}{4 \alpha}.
\end{align}
Based on the definition of $c_{\eta}(\alpha)$ in \eqref{c_small_definition}, proving \eqref{eq:lower_bound_req} is equivalent to showing that there exists at least one admissible noise magnitude distribution $f_Z(z)$ that satisfies the following two conditions simultaneously. First, the resulting probability of acceptance must equal the target $\alpha$, that is
\begin{align}\label{eq:target_PA}
    \Pr(\mathcal{A}_\eta) = \int \Phi_N(z) f_Z(z) \, dz = \alpha.
\end{align}
Second, the resulting conditional MSE must equal the upper bound derived in \eqref{eq:upper_bound_recap}
\begin{align}\label{eq:target_MSE}
    \mathbb{E}\left[ \|\mathbf{U} - \hat{\mathbf{U}}\|_2^2 \mid \mathcal{A}_\eta \right] = \frac{1}{4 \Pr(\mathcal{A}_\eta)} \int \Psi_N(z) f_Z(z) \, dz = \frac{\tilde{\Psi}_N^*(\alpha)}{4\alpha}.
\end{align}

\begin{figure}[htbp]
    \centering
    \begin{tikzpicture}[scale=4.5]
        \draw[->] (-0.1,0) -- (1.1,0) node[right] {$q$};
        \draw[->] (0,-0.1) -- (0,0.6); 
        
        \draw[thick, red, domain=0:0.333, samples=50] plot (\x, {2*(\x*\x + 0.111)});
        \draw[thick, red, domain=0.333:1, samples=50] plot (\x, {2*(\x - \x*\x)});
        \node[red, right] at (1, 0.1) {$\tilde{\Psi}_N(q)$};

        \draw[ultra thick, blue, dashed] (0, 0.222) -- (0.333, 0.444);
        \draw[ultra thick, blue, dashed, domain=0.333:1, samples=50] plot (\x, {2*(\x - \x*\x)});
        \node[blue, above right] at (0.8, 0.45) {$\tilde{\Psi}_N^*(q)$};

        \filldraw[black] (0, 0.222) circle (0.4pt) node[left, xshift=-2pt] {A};
        
        \draw[dashed, gray] (0.333, 0.444) -- (0.333, 0) node[below, black] {$q_1$};
        \filldraw[black] (0.333, 0.444) circle (0.4pt) node[above] {B};

        \draw[dashed, gray] (0.7, 0.42) -- (0.7, 0) node[below, black] {$\alpha_1$};
        \filldraw[black] (0.7, 0.42) circle (0.4pt);

        \draw[dashed, gray] (0.15, 0.322) -- (0.15, 0) node[below, black] {$\alpha_2$};
        \filldraw[black] (0.15, 0.322) circle (0.4pt);
    \end{tikzpicture}
    \caption{A sample potential function $\tilde{\Psi}_N(q)$ and its upper concave envelope $\tilde{\Psi}_N^*(q)$. Over the interval $[0, q_1]$, the concave envelope is defined by the linear chord connecting points A and B, while for $q \in [q_1, 1]$, the envelope coincides with the function itself. To achieve the upper bound in \eqref{eq:upper_bound_recap}, for the case of $\alpha_1$, we use a noise distribution uniformly distributed over the surface of an $N$-sphere with radius $z = \Phi_N^{-1}(\alpha_1)$ as derived in \eqref{eq:noise_single_delta}. For the case of $\alpha_2$, we use a mixed strategy where the noise is uniformly distributed over the surface of an $N$-sphere of radius $z_1 = \Phi_N^{-1}(0)$ with probability $\beta_1$, and uniformly over the surface of an $N$-sphere of radius $z_2 = \Phi_N^{-1}(q_1)$ with probability $1-\beta_1$, as derived in \eqref{eq:noise_vector_mixture}.}
    \label{fig:envelope_achievability}
\end{figure}
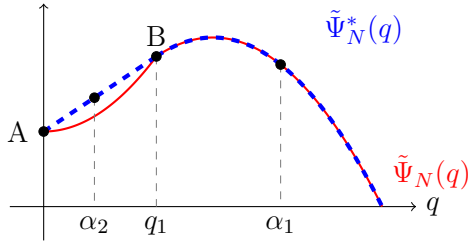

We construct this specific noise distribution by considering the properties of the concave envelope $\tilde{\Psi}_N^*(q)$. Recall that $\tilde{\Psi}_N^*(q)$ is the upper concave envelope of $\tilde{\Psi}_N(q)$ over the interval $q \in [0, 1]$. In the following we distinguish between two cases depending on whether the function $\tilde{\Psi}^*_N(\alpha)=\tilde{\Psi}_N(\alpha)$ (e.g., point $\alpha_1$ in Figure \ref{fig:envelope_achievability}) or $\tilde{\Psi}^*_N(\alpha)$ is on the segment connecting two points on the curve (e.g., point $\alpha_2$ Figure \ref{fig:envelope_achievability}). 
\begin{enumerate}
    \item $\tilde{\Psi}_N^*(\alpha) = \tilde{\Psi}_N(\alpha)$: This implies the function is already on the boundary of its upper concave envelope at $\alpha$ (see $\alpha_1$ in Figure \ref{fig:envelope_achievability}). In this case, the adversary employs a noise magnitude concentrated at a single value $z^* = \Phi_N^{-1}(\alpha)$. In the $N$-dimensional space, this corresponds to an adversarial noise vector $\mathbf{N}_a$ that is uniformly distributed over the surface of the $N$-sphere with radius $z^*$. Mathematically, the probability density function of the vector $\mathbf{N}_a$ is given by
\begin{align}\label{eq:noise_vector_single}
    g_{\mathbf{N}_a}(\mathbf{x}) = \frac{1}{S_N(z^*)} \delta(\|\mathbf{x}\|_2 - z^*),
\end{align}
where $S_N(r) = \frac{2\pi^{N/2}}{\Gamma(N/2)}r^{N-1}$ denotes the surface area of an $N$-sphere of radius $r$. This vector distribution induces the magnitude PDF
\begin{align}\label{eq:noise_single_delta}
    f_Z(z) = \delta(z - z^*).
\end{align}
Substituting this distribution into the acceptance probability integral in \eqref{eq:target_PA}, we obtain
\begin{align}
    \Pr(\mathcal{A}_\eta) = \int \Phi_N(z) \delta(z - z^*) \, dz = \Phi_N(z^*) = \Phi_N(\Phi_N^{-1}(\alpha)) = \alpha.
\end{align}
Thus, the first condition is satisfied. Next, we evaluate the conditional MSE for this distribution. Substituting \eqref{eq:noise_single_delta} into the MSE expression, we get
\begin{align}
    \mathbb{E}\left[ \|\mathbf{U} - \hat{\mathbf{U}}\|_2^2 \mid \mathcal{A}_\eta \right] &= \frac{1}{4 \alpha} \int \Psi_N(z) \delta(z - z^*) \, dz \nonumber \\
    &= \frac{1}{4 \alpha} \Psi_N(z^*) \nonumber \\
    &= \frac{1}{4 \alpha} \Psi_N(\Phi_N^{-1}(\alpha)).
\end{align}
Using the definition $\tilde{\Psi}_N(\alpha) = \Psi_N(\Phi_N^{-1}(\alpha))$ and the assumption $\tilde{\Psi}_N(\alpha) = \tilde{\Psi}_N^*(\alpha)$, we arrive at
\begin{align}
    \mathbb{E}\left[ \|\mathbf{U} - \hat{\mathbf{U}}\|_2^2 \mid \mathcal{A}_\eta \right] = \frac{\tilde{\Psi}_N^*(\alpha)}{4 \alpha}.
\end{align}
This confirms that the single-point distribution defined in \eqref{eq:noise_single_delta} achieves the upper bound when the function touches its envelope.
\item $\tilde{\Psi}_N^*(\alpha) > \tilde{\Psi}_N(\alpha)$: In this case (see $\alpha_2$ in Figure \ref{fig:envelope_achievability}), since $\tilde{\Psi}_N^*(q)$ is the upper concave envelope, the point $(\alpha, \tilde{\Psi}_N^*(\alpha))$ lies on a linear chord connecting two points on the original curve $\tilde{\Psi}_N(q)$. More precisely, there exist $q_1$ and $q_2$ with $0 \le q_1 < \alpha < q_2 \le 1$, such that 
\begin{align}\label{eq:envelope_boundaries}
    \tilde{\Psi}_N^*(q_1) = \tilde{\Psi}_N(q_1), \quad \text{and} \quad \tilde{\Psi}_N^*(q_2) = \tilde{\Psi}_N(q_2).
\end{align}
Furthermore, the point $(\alpha,\tilde{\Psi}_N^*(\alpha))$ lies on the chord connecting $(q_1,\tilde{\Psi}_N^*(q_2))$ and $(q_1,\tilde{\Psi}_N^*(q_2))$, that is,
\begin{align}\label{eq:linear_segment}
   \tilde{\Psi}_N^*(\alpha) = \frac{q_2-\alpha} {q_2 - q_1}\tilde{\Psi}_N(q_1)+ 
   \frac{\alpha-q_1} {q_2 - q_1}\tilde{\Psi}_N(q_2)
   .
\end{align}
 Let $z_1 = \Phi_N^{-1}(q_1)$ and $z_2 = \Phi_N^{-1}(q_2)$. Moreover, we define weights $\beta_1 = \frac{q_2 - \alpha}{q_2 - q_1}$ and $\beta_2 = \frac{\alpha - q_1}{q_2 - q_1}$. In this case, the adversary employs a mixed strategy: With probability $\beta_1$, it selects a noise vector $\mathbf{N}_a$ uniformly distributed over the surface of an $N$-sphere of radius $z_1$, and with probability $\beta_2$, it chooses the noise uniformly over the surface of an $N$-sphere of radius $z_2$. Mathematically, the probability density function of the adversarial noise vector is
\begin{align}\label{eq:noise_vector_mixture}
    g_{\mathbf{N}_a}(\mathbf{x}) = \beta_1 \frac{1}{S_N(z_1)} \delta(\|\mathbf{x}\|_2 - z_1) + \beta_2 \frac{1}{S_N(z_2)} \delta(\|\mathbf{x}\|_2 - z_2),
\end{align}
where $S_N(r)$ is the surface area of the $N$-sphere of radius $r$. This vector distribution induces the following magnitude PDF
\begin{align}\label{eq:noise_two_deltas}
    f_Z(z) = \beta_1 \delta(z - z_1) + \beta_2 \delta(z - z_2).
\end{align}
Note that by construction $\beta_1 + \beta_2 = 1$ and $\beta_1 q_1 + \beta_2 q_2 = \alpha$. We first verify the acceptance probability condition \eqref{eq:target_PA} for this distribution, as follows
\begin{align}
    \Pr(\mathcal{A}_\eta) &= \int \Phi_N(z) \left[ \beta_1 \delta(z - z_1) + \beta_2 \delta(z - z_2) \right] \, dz \nonumber \\
    &= \beta_1 \Phi_N(z_1) + \beta_2 \Phi_N(z_2) \nonumber \\
    &= \beta_1 q_1 + \beta_2 q_2 \nonumber \\
    &= \alpha.
\end{align}
Thus, the distribution yields the required acceptance probability. Finally, we evaluate the conditional MSE. Substituting \eqref{eq:noise_two_deltas} into the MSE integral, we have
\begin{align}
    \mathbb{E}\left[ \|\mathbf{U} - \hat{\mathbf{U}}\|_2^2 \mid \mathcal{A}_\eta \right] &= \frac{1}{4 \alpha} \int \Psi_N(z) \left[ \beta_1 \delta(z - z_1) + \beta_2 \delta(z - z_2) \right] \, dz \nonumber \\
    &= \frac{1}{4 \alpha} \left( \beta_1 \Psi_N(z_1) + \beta_2 \Psi_N(z_2) \right) \nonumber \\
    &= \frac{1}{4 \alpha} \left( \beta_1 \tilde{\Psi}_N(q_1) + \beta_2 \tilde{\Psi}_N(q_2) \right)\nonumber\\
    &\overset{(a)}{=} \frac{\tilde{\Psi}_N^*(\alpha)}{4 \alpha},
\end{align}
where (a) follows from~\eqref{eq:linear_segment}.
This confirms that the mixture distribution defined in \eqref{eq:noise_two_deltas} also achieves the upper bound.
\end{enumerate}
 Since we have constructed a valid noise distribution $f_Z(z)$ for any $\alpha \in [0,1]$ that achieves the bound, the proof of Theorem \ref{theorem:Main_Minimax_Result} is complete.

\section{Illustrative Examples}\label{sec:Illustrative_Examples}

In this section, we present clarifying examples for Theorems \ref{theorem: equivalence_two_problem} and \ref{theorem:Main_Minimax_Result} to demonstrate how these results can be utilized to derive the equilibrium, defined in \eqref{eq:DC_optim}, in various settings. For all the following examples, we assume that $\Delta = 1$. This implies that for any dimension $N$, the noise of the honest node is uniformly distributed within an $N$-dimensional ball of radius $1$, denoted as $\mathcal{B}_N( 1)$.

\begin{example}\label{example1}
    Consider a 2-dimensional system ($N=2$). We assume the utility functions for the adversary and the DC are given by
    \begin{align}
        \mathsf{U}_{\text{AD}}(g(\cdot), \eta) &= \log\left( \mathsf{MSE}(g(\cdot), \eta) \right) + 0.85 \log\left( \mathsf{PA}(g(\cdot), \eta) \right), \label{eq:ex1_adv_util} \\
        \mathsf{U}_{\text{DC}}(g(\cdot), \eta) &= -\mathsf{MSE}(g(\cdot), \eta) + 25 \mathsf{PA}(g(\cdot), \eta). \label{eq:ex1_dc_util}
    \end{align}

    To determine the equilibrium, we first analyze the game from the DC's perspective. For a discrete set of thresholds $\eta \in \{2.0, 2.2, \dots, 8.0\}$, we derive the system's characteristic functions $c_{\eta}(\alpha)$, defined in \eqref{c_small_definition}, which represent the maximum MSE the adversary can strictly enforce for a given acceptance probability $\alpha$. These curves are computed using Theorem \ref{theorem:Main_Minimax_Result} (specifically using the closed-form evaluations for $N=2$ provided in Appendix \ref{sec:case_N2}).
    
    The resulting curves are illustrated in Figure \ref{fig:adversary_metrics_plot}. The curves range from the lowest blue curve, corresponding to the strictest threshold $\eta=2$, to the uppermost red curve, corresponding to the loosest threshold $\eta=8$. As expected, increasing $\eta$ expands the adversary's feasible region, allowing for higher MSE at any given acceptance probability. Note that derivation of these curves is independent of the utility functions in~\eqref{eq:ex1_adv_util} and~\eqref{eq:ex1_dc_util}.  

    \begin{figure}[t]
        \centering
        \includegraphics[width=0.80\linewidth]{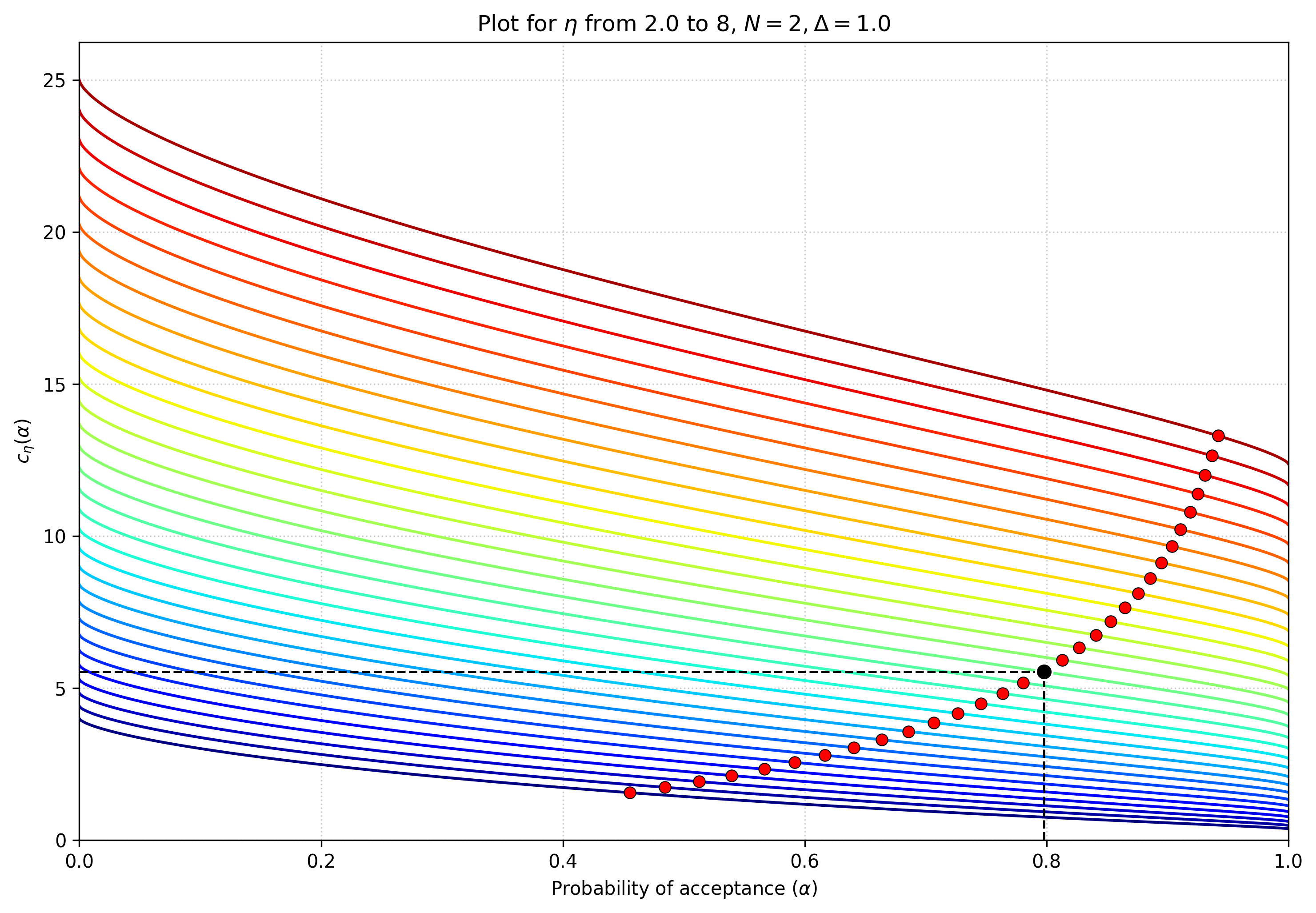}
        \caption{Characteristic curves $c_{\eta}(\alpha)$ defined in \eqref{c_small_definition}, for $N=2$ and $\Delta=1$ (Example \ref{example1}). Each curve corresponds to a specific threshold $\eta \in [2, 8]$, mapping the acceptance probability $\alpha$ (x-axis) to the maximum enforceable MSE (y-axis). The dots (red and black) on each curve represent the adversary's best response operating point $( \PA^*, \MSE^* )$ for that specific $\eta$, and the utility function of the adversary defined in \eqref{eq:ex1_adv_util}. The black dot highlights the global Stackelberg equilibrium of the game where the DC's utility, defined in \eqref{eq:ex1_dc_util}, is maximized.}
        \label{fig:adversary_metrics_plot}
    \end{figure}

    For any specific $\eta$ committed to by the DC, the rational adversary selects the noise distribution $g_{\eta}(\cdot)$ via Algorithm \ref{Alg:finding_noise_N_dim} that maximizes their utility defined in \eqref{eq:ex1_adv_util}. Geometrically, this corresponds to finding the point on the curve $c_{\eta}(\alpha)$ that maximizes the scalar function $Q_{\text{AD}}$. 
    For instance, if DC selects $\eta=2$ (the most strict case), among all the pairs $(\text{PA},\text{MSE})$ on the lowermost curve in Figure~\ref{fig:adversary_metrics_plot}, the rational AD will pick the red dot, that maximizes the AD's utility function: 
        \begin{align*}
            \eta =2: \qquad &\text{MSE} \approx 1.5622, \quad \text{PA} \approx 0.4555, \\
             &\mathsf{U}_{\text{AD}} \approx -\log(1.5622) + 0.85 \log (0.4555) \approx -0.2224,
        \end{align*}
        On the other hand, if the DC selects $\eta=8$ (the loosest in the range of interest), the set of feasible pairs of $(\text{PA},\text{MSE})$ are characterized by the uppermost curve in Figure~\ref{fig:adversary_metrics_plot}. among all these points, the rational AD will chooses the red point to maximize their utility function, that is, 
        \begin{align*}
        \eta =8: \qquad &
            \text{MSE} \approx 13.2991, \quad \text{PA} \approx 0.9419, \\
            &\quad\mathsf{U}_{\text{AD}}
            \approx -\log(13.2991) + 0.85 \log (0.9419) 
            \approx 2.5369.
        \end{align*}
Using a similar approach, the operating point of the AD can be identified for each $\eta$. More precisely,  adversary solves the optimization problem
    \begin{align}
        \alpha^*(\eta) = \underset{0 < \alpha \leq 1}{\arg\max} ~ \left\{ \log\left( c_{\eta}(\alpha) \right) + 0.85 \log(\alpha) \right\},
    \end{align}
    to find their optimum $(\text{PA},\text{MSE}) = \left(\alpha^*(\eta), c_\eta(\alpha^*(\eta))\right)$ for each $\eta$. 
     These optimal operating points are depicted as solid dots on the curves in Figure \ref{fig:adversary_metrics_plot}.
    It is evident that as the DC commits to a larger $\eta$, the adversary exploits the loosened constraint to achieve both higher liveness (PA) and higher error (MSE), strictly increasing their own utility. 
    
    It is worth noting that the similar evaluations can be also performed by the DC, and as a consequence, the DC knows the optimum choice of $\alpha^*(\eta)$ for each $\eta$. In other words, the DC knows the set of operating points depicted by solid dots in Figure \ref{fig:adversary_metrics_plot}, and has an opportunity to choose the best one, by tuning its policy parameter, $\eta$. Recall that the DC  must find the ``sweet spot'' that balances the penalty of error against the reward of liveness as governed by its utility function in~\eqref{eq:ex1_dc_util}. In particular, for the two extreme points studied above, we have  
    \begin{itemize}
        \item At $\eta = 2$: $\mathsf{U}_{\text{DC}} \approx -1.5622 + 25(0.4555) \approx \mathbf{9.8242}$.
        \item At $\eta = 8$: $\mathsf{U}_{\text{DC}} \approx -13.2991 + 25(0.9419) \approx \mathbf{10.2495}$.
    \end{itemize}
    This means the DC prefers the loose threshold $\eta=8$ over the strict $\eta=2$, as the gain in liveness outweighs the cost of increased error in~\eqref{eq:ex1_dc_util}. However, neither is optimal. To find the Stackelberg equilibrium, the DC solves
    \begin{align}
        \eta^* = \underset{\eta \in [2, 8]}{\arg\max} ~ \left\{ -c_{\eta}(\alpha^*(\eta)) + 25 \alpha^*(\eta) \right\}.
    \end{align}
    Solving this optimization reveals that the optimal strategy is an intermediate value, depicted by the sold black dot in Figure~\ref{fig:adversary_metrics_plot}. The equilibrium is achieved at:
    \begin{align*}
        \textbf{Equilibrium } (\eta^* = 5.0): \quad 
        \begin{cases}
            \text{MSE}^* \approx 5.5401 \\
            \text{PA}^* \approx 0.7978 \\
            \mathsf{U}_{\text{AD}}^* \approx 1.5200 \\
            \mathsf{U}_{\text{DC}}^* \approx \mathbf{14.4049}
        \end{cases}
    \end{align*}
    Furthermore, applying Algorithm \ref{Alg:finding_noise_N_dim} reveals that the optimal noise distribution at this equilibrium is a single shell (since the solution lies on the curve rather than a chord). The optimal radius is calculated as $z^* \approx 4.4857$. Consequently, the adversary's best strategy is to add noise uniformly distributed on the circumference of a circle with radius $z^*$:
    \begin{align}
        g_{\mathbf{N}_a}^*(\mathbf{x}) = \frac{1}{2\pi z^*} \delta(\|\mathbf{x}\|_2 - z^*).
    \end{align}
    
    \textbf{Interpretation:} Without the game of coding framework, a naive system designer might default to $\eta=2$. Since the distance between two honest nodes is at most $2\Delta$, setting $\eta=2$ seems logical to reject any obvious attacks. However, our analysis shows this is suboptimal ($\mathsf{U}_{\text{DC}} \approx 9.8$ vs. $\mathsf{U}_{\text{DC}}^* \approx 14.4$). At $\eta=2$, the adversary is forced to attack aggressively to gain utility, resulting in a low probability of acceptance ($\approx 45\%$) which harms the system's liveness. 
    
    By strategically relaxing the threshold to $\eta^*=5$, the DC effectively \emph{bribes} the adversary. The rational adversary, seeking to maximize their own utility (which includes $\log \text{PA}$), shifts their strategy to a noise distribution that is accepted much more frequently ($\approx 80\%$). Although this allows for a higher MSE ($5.54$ vs. $1.56$), the substantial gain in system reliability and liveness leads to a strictly superior outcome for the DC.
\end{example}

\begin{example}\label{example2}
    Consider a high-dimensional system with $N=25$. We assume the utility function for the adversary is given by
    \begin{align}
        \mathsf{U}_{\text{AD}}(g(\cdot), \eta) &= \log\left( \mathsf{MSE}(g(\cdot), \eta) \right) + 0.20 \log\left( \mathsf{PA}(g(\cdot), \eta) \right). \label{eq:ex2_adv_util}
    \end{align}
    For the DC, we analyze the equilibrium under two distinct utility formulations to demonstrate how the choice of metric influences the optimal strategy:
    \begin{align}
        \text{\textbf{Case 1:}} \quad \mathsf{U}_{\text{DC}}^{(1)}(g(\cdot), \eta) &= \frac{\mathsf{PA}(g(\cdot), \eta)}{\sqrt{\mathsf{MSE}(g(\cdot), \eta)}}, \label{eq:ex2_dc_util1} \\
        \text{\textbf{Case 2:}} \quad \mathsf{U}_{\text{DC}}^{(2)}(g(\cdot), \eta) &= \frac{\mathsf{PA}(g(\cdot), \eta)}{\mathsf{MSE}(g(\cdot), \eta)}. \label{eq:ex2_dc_util2}
    \end{align}

    Similar to Example \ref{example1}, we use Theorem \ref{theorem:Main_Minimax_Result} to compute the characteristic curves $c_{\eta}(\alpha)$ for ${\eta \in [2.0, 8.0]}$. The resulting curves are illustrated in Figure \ref{fig:example2_plot}. For each $\eta$, the adversary determines the optimal operating point $(\alpha^*(\eta), c_{\eta}(\alpha^*(\eta)))$ by choosing the noise distribution via Algorithm \ref{Alg:finding_noise_N_dim} and solving
    \begin{align}
        \alpha^*(\eta) = \underset{0 < \alpha \leq 1}{\arg\max} ~ \left\{ \log\left( c_{\eta}(\alpha) \right) + 0.20 \log(\alpha) \right\}.
    \end{align}
     Since the adversary's utility \eqref{eq:ex2_adv_util} remains the same for  both cases, the adversary's response points (marked as red dots in Figure \ref{fig:example2_plot}) are identical for both scenarios. However, the DC's optimal commitment $\eta^*$ changes depending on which utility function is maximized.

    \begin{figure}[t]
        \centering
        \includegraphics[width=0.80\linewidth]{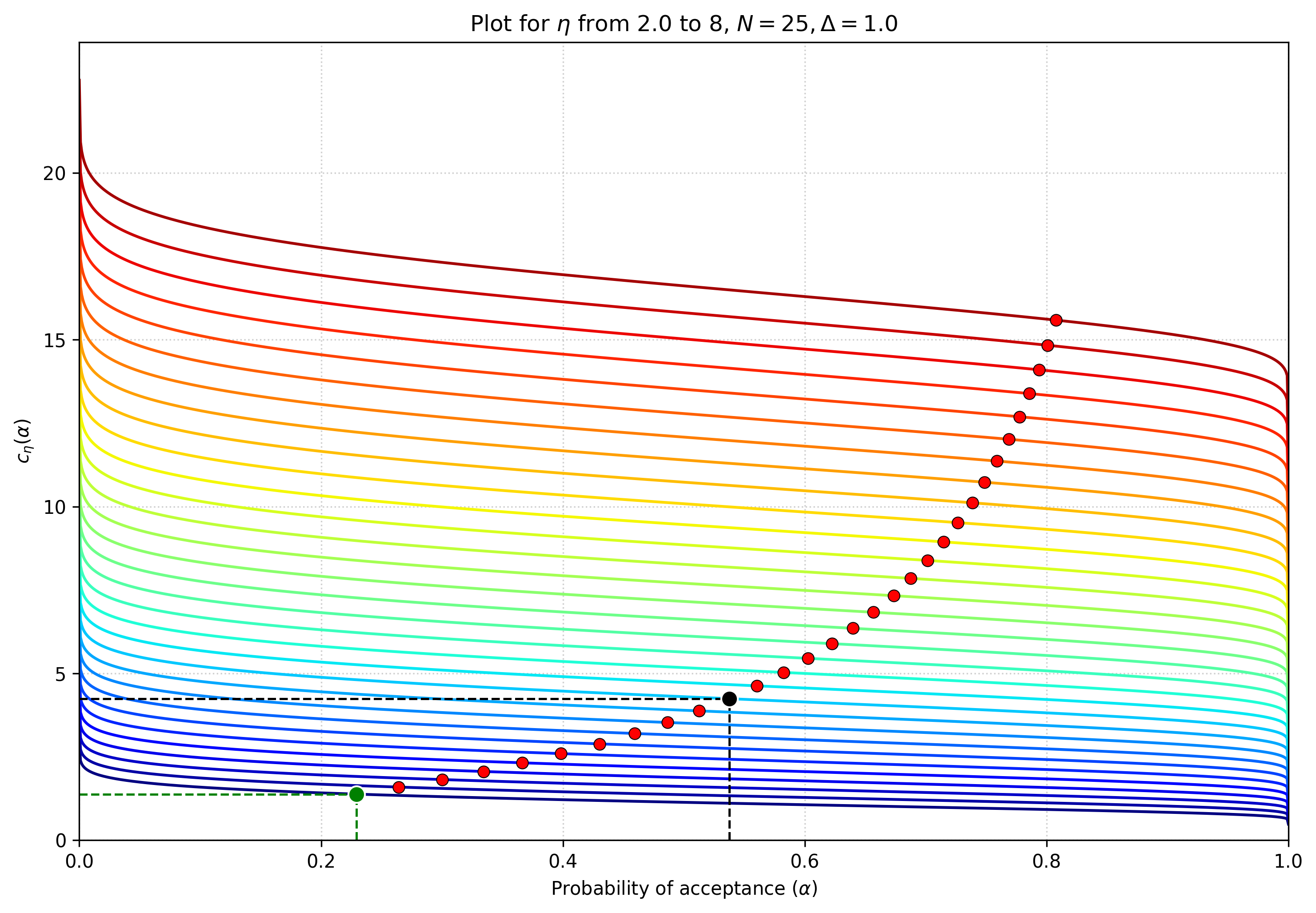}
        \caption{Equilibrium analysis for $N=25$ with $\eta \in [2.0, 8.0]$ (Example \ref{example2}). The curves represent the characteristic functions $c_\eta(\alpha)$. The red dots indicate the adversary's best response for each $\eta$, with respect to the utility function of the adversary, defined in \eqref{eq:ex2_adv_util}. The \textbf{black dot} marks the equilibrium for DC Case 1 \eqref{eq:ex2_dc_util1}, at $\eta^*=4.0$. The \textbf{green dot} marks the equilibrium for DC Case 2 \eqref{eq:ex2_dc_util2}, at $\eta^*=2.0$.}
        \label{fig:example2_plot}
    \end{figure}

    \textbf{Analysis of Case 1:}
    When the DC's utility function is $\mathsf{U}_{\text{DC}}^{(1)}$, it finds the optimum commitment~$\eta^*_1 $ as 
    \begin{align}
        \eta^*_1 = \underset{\eta}{\arg\max} ~ \frac{\alpha^*(\eta)}{\sqrt{c_{\eta}(\alpha^*(\eta))}}.
    \end{align}
    Numerical solving this optimization problem yields an equilibrium at $\eta^* = 4.0$ (indicated by the black dot), given by
    \begin{align*}
        \textbf{Equilibrium 1 } (\eta^* = 4.0): \quad 
        \begin{cases}
            \text{MSE} \approx 4.2409, \\
            \text{PA} \approx 0.5375, \\
            \mathsf{U}_{\text{DC}}^{(1)} \approx \mathbf{0.2610}.
        \end{cases}
    \end{align*}
    Applying Algorithm \ref{Alg:finding_noise_N_dim}, we find that the optimal noise distribution is a single shell with radius $z^* \approx 3.8643$. The adversary's best strategy is to add noise uniformly distributed on the surface of the $N$-dimensional sphere with this radius:
    \begin{align}
        g_{\mathbf{N}_a}^*(\mathbf{x}) = \frac{1}{S_{N}(z^*)} \delta(\|\mathbf{x}\|_2 - z^*),
    \end{align}
    where $S_{N}(r)$ denotes the surface area of the sphere of radius $r$ in $N$ dimensions.

    \textbf{Analysis of Case 2:}
    When the DC optimizes $\mathsf{U}_{\text{DC}}^{(2)}$, the utility function is more sensitive to the error (MSE vs. $\sqrt{\text{MSE}}$). The optimization problem becomes
    \begin{align}
        \eta^*_2 = \underset{\eta}{\arg\max} ~ \frac{\alpha^*(\eta)}{c_{\eta}(\alpha^*(\eta))}.
    \end{align}
    In this case, the equilibrium shifts to the strictest threshold $\eta^* = 2.0$ (indicated by the green dot), and we have
    \begin{align*}
        \textbf{Equilibrium 2 } (\eta^* = 2.0): \quad 
        \begin{cases}
            \text{MSE} \approx 1.3808 \\
            \text{PA} \approx 0.2292 \\
            \mathsf{U}_{\text{DC}}^{(2)} \approx \mathbf{0.1660}
        \end{cases}
    \end{align*}
    Similarly, the optimal noise is a single shell, here with radius $z^* \approx 1.9065$. The noise density is given by:
    \begin{align}
        g_{\mathbf{N}_a}^*(\mathbf{x}) = \frac{1}{S_{N}(z^*)} \delta(\|\mathbf{x}\|_2 - z^*).
    \end{align}

    \textbf{Interpretation:}
    This example highlights how the DC's risk sensitivity dictates the optimal commitment strategy. In Case 1, where the penalty is sublinear with respect to the noise power ($\sqrt{\text{MSE}}$), it is beneficial for the DC to relax the threshold to $\eta=4.0$. This can be seen as ``bribing'' the adversary in order to achieve a significantly higher acceptance rate ($\approx 54\%$ vs $23\%$), which outweighs the cost of the increased error.
    
    Conversely, in Case 2, the penalty is linear with noise power ($\text{MSE}$). Note that the MSE grows rapidly as $\eta$ increases (from $1.38$ at $\eta=2$ to $15.59$ at $\eta=8$), while the acceptance probability is increasing at a much slower pace (from $0.23$ at $\eta=2$ to $0.81$ at $\eta=8$). Therefore, the gain of increasing the  acceptance probability, in the numerator of the utility function, cannot compensate for the explosion in error, in the denominator. Thus, the DC is forced to adopt the strictest policy ($\eta=2.0$) to keep the error bounded, even at the cost of low system liveness.
\end{example}

\begin{example}\label{example3}
    Consider a very high-dimensional system with $N=250$. We assume the utility functions for the adversary and the DC are given by
    \begin{align}
        \mathsf{U}_{\text{AD}}(g(\cdot), \eta) &= \log\left( \mathsf{MSE}(g(\cdot), \eta) \right) + 0.10 \log\left( \mathsf{PA}(g(\cdot), \eta) \right), \label{eq:ex3_adv_util} \\
        \mathsf{U}_{\text{DC}}(g(\cdot), \eta) &= -\log\left( \mathsf{MSE}(g(\cdot), \eta) \right) + 10 \log\left( \mathsf{PA}(g(\cdot), \eta) \right). \label{eq:ex3_dc_util}
    \end{align}

    Following the same methodology described in Examples \ref{example1} and \ref{example2}, we compute the characteristic curves $c_{\eta}(\alpha)$ for $\eta \in [2.0, 8.0]$. For each committed $\eta$, the adversary calculates the best response using Algorithm \ref{Alg:finding_noise_N_dim} that maximizes \eqref{eq:ex3_adv_util}. These optimal operating points are plotted as red dots in Figure \ref{fig:example3_plot}.

    \begin{figure}[t]
        \centering
        \includegraphics[width=0.80\linewidth]{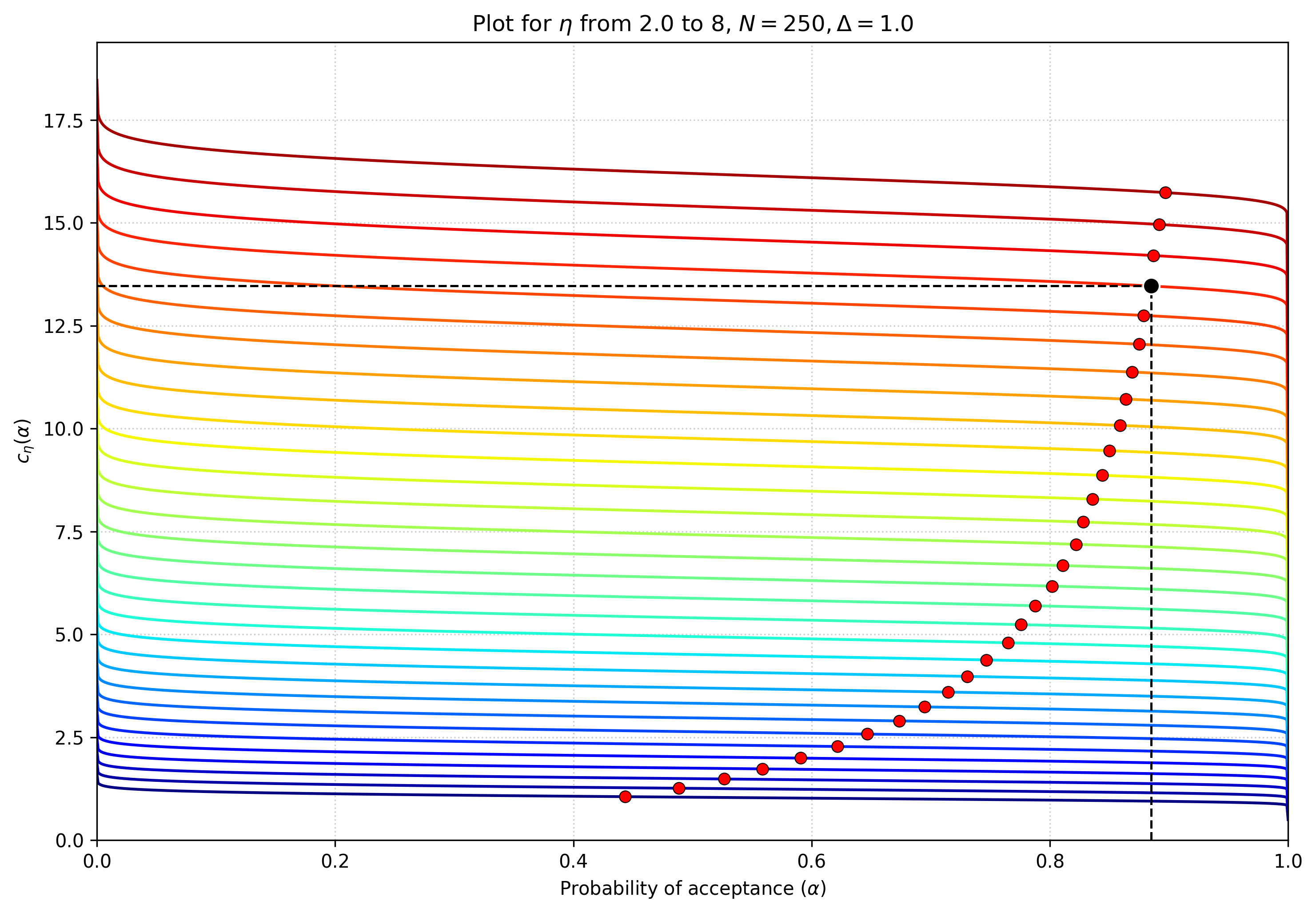}
        \caption{Equilibrium analysis for $N=250$ with $\eta \in [2.0, 8.0]$ (Example \ref{example3}). The curves represent the characteristic functions $c_\eta(\alpha)$. The red dots indicate the adversary's best response for each $\eta$ maximizing \eqref{eq:ex3_adv_util}. The \textbf{black dot} highlights the global Stackelberg equilibrium where the DC's utility \eqref{eq:ex3_dc_util} is maximized.}
        \label{fig:example3_plot}
    \end{figure}

    To determine the equilibrium, the DC evaluates its utility \eqref{eq:ex3_dc_util} across the set of induced operating points. For instance, the DC's utility function at the extreme optimum points evaluates at
    \begin{itemize}
        \item \textbf{Strictest ($\eta = 2.0$):}
        \begin{align*}
            \text{MSE} \approx 1.0567, \quad \text{PA} \approx 0.4434, \quad \mathsf{U}_{\text{DC}} \approx -8.1870.
        \end{align*}
        
        \item \textbf{Loosest ($\eta = 8.0$):}
        \begin{align*}
             \text{MSE} \approx 15.7362, \quad \text{PA} \approx 0.8969, \quad \mathsf{U}_{\text{DC}} \approx -3.8441.
        \end{align*}
    \end{itemize}

    The maximum utility for the DC is achieved at $\eta^* = 7.4$, marked by the black dot in Figure \ref{fig:example3_plot}:
    \begin{align*}
        \textbf{Equilibrium } (\eta^* = 7.4): \quad 
        \begin{cases}
            \text{MSE}^* \approx 13.4655 \\
            \text{PA}^* \approx 0.8849 \\
            \mathsf{U}_{\text{AD}}^* \approx 2.5879 \\
            \mathsf{U}_{\text{DC}}^* \approx \mathbf{-3.8231}
        \end{cases}
    \end{align*}
    Finally, we characterize the optimal noise distribution for this equilibrium. The solution corresponds to a single shell on the characteristic curve with radius $z^* \approx 7.2574$. Thus, the adversary's optimal strategy is to generate noise vectors uniformly from the surface of the 250-dimensional sphere with radius $z^*$:
    \begin{align}
        g_{\mathbf{N}_a}^*(\mathbf{x}) = \frac{1}{S_{N}(z^*)} \delta(\|\mathbf{x}\|_2 - z^*).
    \end{align}
\end{example}
Comparing Figures~\ref{fig:adversary_metrics_plot} ,~\ref{fig:example2_plot}, and  \ref{fig:example3_plot} shows that as the dimension increases ($N\to \infty$), the $c_\eta(\alpha)$ curves tend to become flat. That means, for every given $\eta$, the error ($\MSE$) minimally varies for the entire range of $0\leq \alpha \leq 1$. To understand this, one has to study the kernel functions $\Phi_N(z)$ in~\eqref{eq:Phi_piecewise} and~$\Psi_N(z)$ in~\eqref{eq:Psi_piecewise}, as PA and MSE are the weighted averages of these Kernel functions (see \eqref{eq:prob_acceptance_lemma} and~\eqref{eq:mse_lemma_result}). While $\Phi_N(z)$ captures the intersection volume of the honest noise ball $\mathcal{B}_N(\Delta)$ and the acceptance ball $\mathcal{B}_N(\mathbf{n}_a, \eta\Delta)$ (see Figure~\ref{fig:intersection_geometry}), the function $\Psi_N(z)$ accounts for the average squares norm of points\footnote{To be more accurate, that is indeed, the average squared norm of the average of these points and the adversaries noise. However, the  averaging with the adversarial noise only has a second order affect on the MSE.} in the intersection. It is worth noting that at very high dimension, the majority of the volume of a hypersphere lies close to its shell. Hence, while the size of the intersection (and therefore $\alpha$) may widely vary by changing $z$, the entire mass of the intersection lies close to the shell of the honest ball, and hence, their squared norm highly concentrates at $\Delta^2$. Therefore, MSE becomes much less sensitive to variation of position of the balls. 

Moreover, the vertical gap between the (almost) flat $c_{\eta}(\alpha)$ curves increases quadratically with $\eta$. As mentioned above, the MSE is a weighted mean of the squared norm of average of the points in the intersection of two balls and the adversary's noise. While the points in the intersection have norm of almost $\Delta$, the adversary's noise is about $\eta \Delta$ away from the origin. Hence, the norm of the average grows (almost) linearly with $\eta$, leading to a quadratic growth of the squared norm.

\section{Conclusion and Future Works}\label{sec:conclusion}

In this paper, we have extended the game of coding framework to address vector-valued computations; moving beyond the scalar constraints of prior works \cite{GoCJournal, GoDSybil, nodehi2025unknown, nodehi2026game}. While the previous research established the theoretical viability of the framework, its restriction to scalar values created a distinct gap with practical applications, where vector operations are the norm, particularly in decentralized machine learning. We bridged this gap by providing a rigorous problem formulation for the $N$-dimensional Euclidean space, employing minimal and natural assumptions to ensure practical relevance. Furthermore, we fully characterized the equilibrium of the game, deriving the optimal strategies for both the data collector and the adversary. Through illustrative examples, we demonstrated the dynamics of these strategies in various settings. Crucially, our analysis confirms that the resilience guarantees previously established for scalar settings, specifically the ability to maintain accuracy and liveness, remains valid in the general high-dimensional case. Building on these established foundations, the next step is to extend the game of coding framework in following key directions:
\begin{enumerate}

    \item \textbf{Advanced Coding Techniques:}  The existing works, including this study, relied on repetition coding (assigning the same task to multiple workers). An important directions is  to explore advanced coding techniques to enhance computational efficiency; specifically, this requires deriving new acceptance policies and decoding rules that ensure reliability when using complex codes, such as maximum distance separable (MDS) codes, in an adversarial environment.


    \item \textbf{Unified Learning and Optimization:} The cases where the adversary's utility function is unknown have been studied for the scalar setting in~\cite{nodehi2025unknown}. It is crucial to generalize that result, and  develop a unified framework for learning from non-trustworthy computing agents, that performs distributed training while effectively managing the ambiguity in the adversarial objectives and strategies.
\end{enumerate}

\appendices

\section{Hypersphere: The Second Moment of an \texorpdfstring{$N$}{N}-Ball}
\label{sec:second_moment_n_ball}

In this section, we derive the general formula for the second moment (polar moment of inertia) of an $N$-dimensional ball with uniform density. Let $B(N, r)$ denote an $N$-ball of radius $r$ centered at the origin in $\mathbb{R}^N$, and let $V_N(r)$ denote its volume. We seek to calculate the integral of the squared magnitude of the position vector $\mathbf{u} \in \mathbb{R}^N$ over this volume:
\begin{align}\label{eq:M2_def}
    M_2(N, r) = \int_{B(N, r)} \|\mathbf{u}\|_2^2 \, d\mathbf{u}.
\end{align}

We evaluate this integral using spherical coordinates. We decompose the volume of the $N$-ball into infinitesimal spherical shells of radius $\rho$ (where $0 \le \rho \le r$) and thickness $d\rho$. This decomposition is illustrated in Figure \ref{fig:spherical_shells}.
\begin{figure}[htbp]
    \centering
    \begin{tikzpicture}[scale=1.5]
        \def\rBall{2.0}
        \def\rhoShell{1.2}
        \def\drho{0.2}

        \draw[thick] (0,0) circle (\rBall);
        \node at (45:\rBall+0.2) {$r$};

        \fill[blue!10, even odd rule] (0,0) circle (\rhoShell+\drho) circle (\rhoShell);
        \draw[blue] (0,0) circle (\rhoShell);
        \draw[blue] (0,0) circle (\rhoShell+\drho);

        \draw[->, gray] (-2.5,0) -- (2.5,0);
        \draw[->, gray] (0,-2.5) -- (0,2.5);

        \draw[->, thick] (0,0) -- (135:\rhoShell) node[midway, below left] {$\rho$};
        
        \draw[<->, thick, blue] (135:\rhoShell) -- (135:\rhoShell+\drho) node[midway, above left, blue] {$d\rho$};

        \node[blue!60!black] at (0.8, -0.8) {Spherical Shell};
        \draw[->, blue!60!black] (0.8, -0.6) -- (45:\rhoShell+0.1);

    \end{tikzpicture}
    \caption{Decomposition of the $N$-ball volume into infinitesimal spherical shells. The integral sums the contributions of shells with radius $\rho$ and thickness $d\rho$ from the center to the boundary $r$.}
    \label{fig:spherical_shells}
\end{figure}
Based on \eqref{eq:ball_general}, we know that the volume of an $N$-ball is given by
\begin{align}\label{eq:volume_N_ball_def}
    V_N(\rho) = C_N \rho^N,
\end{align}
where
\begin{align}
    C_N = \frac{\pi^{N/2}}{\Gamma(\frac{N}{2} + 1)}.
\end{align}
The surface area of the $(N-1)$-sphere (the boundary of the $N$-ball) at radius $\rho$, denoted $A_{N-1}(\rho)$, is the derivative of the volume with respect to the radius. Thus, we have
\begin{align}
    A_{N-1}(\rho) = \frac{d}{d\rho} V_N(\rho) = N C_N \rho^{N-1}.
\end{align}
Consequently, the differential volume element of a shell at radius $\rho$ is
\begin{align}
    d\mathbf{u} = A_{N-1}(\rho) \, d\rho = N C_N \rho^{N-1} \, d\rho.
\end{align}
Since the squared magnitude $\|\mathbf{u}\|_2^2 = \rho^2$ is constant on a spherical shell of radius $\rho$, the integral becomes
\begin{align}
    M_2(N, r) &= \int_{0}^{r} \rho^2 \left( N C_N \rho^{N-1} \, d\rho \right) \nonumber \\
    &= N C_N \int_{0}^{r} \rho^{N+1} \, d\rho.
\end{align}
This implies that
\begin{align}
    M_2(N, r) &= N C_N \left[ \frac{\rho^{N+2}}{N+2} \right]_{0}^{r} \nonumber \\
    &= \frac{N}{N+2} C_N r^{N+2}.
\end{align}
We can rewrite this expression to explicitly include the volume of the ball $V_N(r) = C_N r^N$. More precisely, we have 
\begin{align}
    M_2(N, r) = \frac{N}{N+2} r^2 (C_N r^N) = \frac{N}{N+2} r^2 V_N(r).
\end{align}
Therefore, we arrive at the final result
\begin{align}\label{eq:second_moment_ball_formula}
    \int_{B(N, r)} \|\mathbf{u}\|_2^2 \, d\mathbf{u} = \frac{N}{N+2} r^2 V_N(r).
\end{align}

\section{Hyperspherical Caps}
\label{app:moments_derivation}

In this section, we study hypersperical caps and their geometric properties, along with their moments. A \textbf{hyperspherical cap} is defined as the portion of an $N$-ball cut off by a hyperplane. We call the hyperspherical cap \textbf{canonical} when  the ball is at the origin. More precisely, consider an $N$-ball of radius $r$ centered at the origin, which we denote as the set $\mathcal{B}_N( r) = \{ \mathbf{x} \in \mathbb{R}^N : \|\mathbf{x}\|_2 \le r \}$. If we cut this ball with a hyperplane perpendicular to the $x$-axis at the location $c$, where $-r \le c \le r$, the resulting hyperspherical cap consists of all points in the ball with an $x$-coordinate greater than or equal to $c$. We formally define this region as
\begin{equation} \label{eq:cap_set_def}
    \mathcal{C}_N(r, c) \triangleq \{ (x, x_2, \dots, x_N) \in \mathbb{R}^N : x^2 + x_2^2 + \dots + x_N^2 \le r^2, \, x \ge c \}.
\end{equation}
This geometric concept is illustrated for the 2D case in Figure \ref{fig:integration_slices_QN}.
We derive the volume of a general hyperspherical cap in Section~\ref{sec:cap_derivation}, that is,
\begin{align}\label{def:K_def}
    K_N(r, c) \triangleq \text{Vol}(\mathcal{C}_N(r, c)) = \int_{\mathcal{C}_N(r, c)}  \, d\mathbf{x}.
\end{align}
Next, we characterize the first and second moments of a canonical cap defined as 
\begin{equation}\label{eq:def_Q}
    Q_N(r, c) \triangleq \int_{\mathcal{C}_N(r, c)} x_1 \, d\mathbf{x},
\end{equation}
and
\begin{equation}
\label{eq:definiton_of_J}
    J_N(r, c) \triangleq \int_{\mathcal{C}_N(r, c)} \|\mathbf{x}\|_2^2 \, d\mathbf{x},
\end{equation}
in Sections~\ref{sec:cap-first-moment} and~\ref{sec:cap-second-moment}, respectively. Finally, we consider a general non-canonical cap, which is obtained by cutting an  $N$-ball (with an arbitrary center) by a hyperplane. That is indeed a shifted version of a canonical cap. We study the moments of such a hyperspherical cap in Section~\ref{sec:cap-shifted}.


To evaluate these integrals, we utilize the property that a hyperspherical cap can be viewed as a stack of $(N-1)$-dimensional balls, as shown in Figure~\ref{fig:integration_slices_QN}. More, precisely, let ${\mathbf{x}}_{\sim 1} = (x_2,\dots, x_N)\in \mathbb{R}^{N-1}$. Then, $\mathcal{C}_N(r,c)$ in \eqref{eq:cap_set_def} can be rephrased as $\mathcal{C}_N(r,c) = \{(x_1,{\mathbf{x}_{\sim 1}}): x_1^2 + \|\mathbf{x}_{\sim 1}\|_2^2 \leq r^2, x_1\geq c\}$. For a fixed $x_1 \in [c, r]$, the cross-section of the cap is an $(N-1)$-ball with radius $\sqrt{r^2 - x_1^2}$. The volume of this $(N-1)$-ball is given by $V_{N-1}\left(\sqrt{r^2 - x_1^2}\right)$.
\subsection{Derivation of the Hyperspherical Cap Volume} \label{sec:cap_derivation}

To compute the volume $K_N(r, c)$, we integrate the volumes of its cross-sections along the axis of symmetry $x_1$. Consider a slice of the cap at a position $x_1$ for some $c \le x_1 \le r$ as shown in Figure \ref{fig:integration_slices_QN}. This volume is given by the integral over the set $\mathcal{C}_N(r, c)$ expressed in terms of the first coordinate and the remaining components $\mathbf{x}_{\sim 1}$. That is, based on \eqref{def:K_def}, we have
\begin{equation}\label{first_integral_K_full}
    K_N(r, c) = \int_{\mathcal{C}_N(r, c)}  \, d\mathbf{x} =\int_{c}^{r} \left( \int_{\|\mathbf{x}_{\sim 1}\|_2^2 \le r^2 - x_1^2} d\mathbf{x}_{\sim 1} \right) dx_1.
\end{equation}
The inner integral in \eqref{first_integral_K_full}, represents the volume of an $(N-1)$-ball with radius $\sqrt{r^2 - x_1^2}$. By substituting the $(N-1)$-dimensional volume formula, we write the total volume as
\begin{equation}\label{first_integral_K}
    K_N(r, c) = \int_{c}^{r} V_{N-1}\left(\sqrt{r^2 - x_1^2}\right) \, dx_1.
\end{equation}

Recall from \eqref{eq:Ball_Volume}, that the volume of an $(N-1)$-ball of radius $\rho$ is given by ${V_{N-1}(\rho) = \frac{\pi^{(N-1)/2}}{\Gamma(\frac{N+1}{2})} \rho^{N-1}}$. Substituting $\rho = \sqrt{r^2 - x_1^2}$ into \eqref{first_integral_K} leads to
\begin{equation}\label{eq:K_calculate}
    K_N(r, c) = \frac{\pi^{(N-1)/2}}{\Gamma(\frac{1}{2}(N+1))} \int_{c}^{r} (r^2 - x_1^2)^{\frac{N-1}{2}} \, dx_1.
\end{equation}
By performing the substitution $t = x_1/r$ which implies $dx_1 = r \, dt$, the limits of integration in \eqref{eq:K_calculate} change from $[c, r]$ to $[c/r, 1]$. We thus obtain
\begin{align}\label{eq:integral_simple}
    \int_{c}^{r} (r^2 - x_1^2)^{\frac{N-1}{2}} \, dx_1 &= \int_{c/r}^{1} (r^2 - r^2 t^2)^{\frac{N-1}{2}} (r \, dt) \nonumber \\
    &= r^N \int_{c/r}^{1} (1 - t^2)^{\frac{N-1}{2}} \, dt.
\end{align}
Substituting \eqref{eq:integral_simple} into \eqref{eq:K_calculate}, the final expression for the volume of the hyperspherical cap is given by
\begin{align}\label{final_K}
    K_N(r, c) = \frac{\pi^{(N-1)/2} r^N}{\Gamma(\frac{N+1}{2})} \int_{c/r}^{1} (1 - t^2)^{\frac{N-1}{2}} \, dt.
\end{align}

\subsection{Calculation of the First Moment \texorpdfstring{$Q_N(r, c)$}{QN(r,c)}}
\label{sec:cap-first-moment}

To evaluate the first moment $Q_N(r, c)$, we employ the method of integration by slices perpendicular to the principal axis $x_1$, as illustrated in Figure \ref{fig:integration_slices_QN}. Consequently, the total first moment is obtained as
\begin{align}\label{eq:first_Q}
    Q_{N}(r,c) &= \int_{\mathcal{C}_{N}(r,c)} x_1 d\mathbf{x} = \int_{c}^r \int_{\|\mathbf{x}_{\sim 1}\|_2^2\leq r^2 - x_1^2} x_1 d\mathbf{x}_{\sim 1} dx_1 \nonumber \\
    &=\int_{c}^r x_1 \int_{\|\mathbf{x}_{\sim 1}\|_2^2\leq r^2 - x_1^2} d\mathbf{x}_{\sim 1} dx_1 \nonumber \\
    &= \int_{c}^r x_1 V_{N-1}\left(\sqrt{r^2 -x_1^2}\right) dx_1.
\end{align}

\begin{figure}[htbp]
    \centering
    \begin{tikzpicture}[scale=1.5]
        \def\R{2}
        \def\d{0.8}
        \def\sliceX{1.4}
        \def\sliceW{0.15}
        \def\sliceY{1.428} 

        \draw[->, thick, gray] (-0.5,0) -- (\R+0.5,0) node[right, black] {$x_1$};
        \draw[->, thick, gray] (0,-2.2) -- (0,2.2) node[above, black] {$\mathbf{x}_{\perp}$};

        \draw[thick] (0,0) circle (\R);
        
        \begin{scope}
            \clip (\d,-\R) rectangle (\R, \R);
            \fill[blue!5] (0,0) circle (\R);
        \end{scope}

        \draw[thick] (\d, -1.83) -- (\d, 1.83); 

        \fill[pattern=north east lines, pattern color=blue] (\sliceX, -\sliceY) rectangle (\sliceX+\sliceW, \sliceY);
        \draw (\sliceX, -\sliceY) rectangle (\sliceX+\sliceW, \sliceY);

        \draw[<->] (\sliceX, -\sliceY-0.3) -- (\sliceX+\sliceW, -\sliceY-0.3) node[midway, below] {$dx_1$};
        \draw[dashed] (\sliceX+\sliceW/2, 0) -- (\sliceX+\sliceW/2, \sliceY);
        \node[below, font=\footnotesize] at (\sliceX+\sliceW/2, 0) {$x_1$};

        \draw[<->] (\sliceX+\sliceW+0.2, 0) -- (\sliceX+\sliceW+0.2, \sliceY) node[midway, right] {$ \sqrt{r^2-x_1^2}$};

        \node[below left] at (\d, 0) {$c$};
        
        \draw[->] (0,0) -- (135:\R) node[midway, below left] {$r$};
        
        \node[align=center, font=\small] at (0.5, 1.0) {Slice is an\\$(N-1)$-ball};
        \draw[->, gray] (0.5, 0.75) -- (\sliceX, 0.5);

    \end{tikzpicture}
    \caption{Illustration of a hyperspherical cap (shown in 2D). The cap is decomposed into infinitesimal slices. Each slice at position $x_1$ is an $(N-1)$-dimensional ball of radius $\sqrt{r^2 - x_1^2}$, with volume $V_{N-1}(\sqrt{r^2 - x_1^2}) dx_1$.}
    \label{fig:integration_slices_QN}
\end{figure}

Recall that the volume of an $(N-1)$-ball is given by $V_{N-1}(r) = C_{N-1} r^{N-1}$ where $C_{N-1} = \frac{\pi^{(N-1)/2}}{\Gamma(\frac{N+1}{2})}$. Thus, we can rewrite \eqref{eq:first_Q} as
\begin{equation}
    Q_N(r, c) = C_{N-1} \int_{c}^{r} x_1 (r^2 - x_1^2)^{\frac{N-1}{2}} \, dx_1.
\end{equation}
By employing the substitution $v = r^2 - x_1^2$, which implies $x_1 \, dx_1 = -\frac{1}{2} dv$, and observing that the integration limits transform to $v = r^2 - c^2$ (denoted as the squared intersection height $h^2$) and $v = 0$, we obtain
\begin{equation}
    Q_N(r, c) = \begin{cases}
    \frac{C_{N-1}}{2} \int_{0}^{h^2} v^{\frac{N-1}{2}} \, dv, & c\geq 0\\
    \frac{C_{N-1}}{2}\left[ -\int_{h^2}^{r^2} v^{\frac{N-1}{2}} \, dv + \int_{0}^{r^2} v^{\frac{N-1}{2}} \, dv\right], & c<0
    \end{cases}
    = \frac{C_{N-1}}{2} \int_{0}^{h^2} v^{\frac{N-1}{2}} \, dv.
\end{equation}

Evaluating this integral leads to
\begin{equation}
    Q_N(r, c) = \frac{C_{N-1}}{2} \left[ \frac{2}{N+1} v^{\frac{N+1}{2}} \right]_{0}^{h^2} = \frac{C_{N-1}}{N+1} h^{N+1}.
\end{equation}
Finally, by recognizing that $V_{N-1}(h) = C_{N-1} h^{N-1}$, we arrive at the analytical expression
\begin{equation}
\label{eq:result_Q_n}
    Q_N(r, c) = \frac{(r^2-c^2))}{N+1} V_{N-1}\left(\sqrt{r^2 - c^2}\right).
\end{equation}

\subsection{Calculation of the Second Moment \texorpdfstring{$J_N(r, c)$}{JN(r, c)}}
\label{sec:cap-second-moment}

To calculate the second moment $J_N(r, c)$, defined in \eqref{eq:definiton_of_J}, we employ the same slice-based integration method used for the first moment (see Figure \ref{fig:integration_slices_QN}).
Recall that any vector $\mathbf{x} \in \mathcal{C}_N(r, c)$ can be written as  $\mathbf{x} = (x_1, {\mathbf{x}}_{\sim 1})$, where ${\mathbf{x}}_{\sim 1} = (x_2, \dots, x_N) \in \mathbb{R}^{N-1}$. Consequently, the squared Euclidean norm decomposes as $\|\mathbf{x}\|_2^2 = x_1^2 + \|{\mathbf{x}}_{\sim 1}\|_2^2$. Substituting this decomposition into the volume integral 
in \eqref{eq:definiton_of_J}, we get
\begin{align}\label{eq:nested_j}
    J_N(r, c) =\int_{\mathcal{C}_N(r, c)}  \|{\mathbf{x}}\|_2^2 \, d\mathbf{x} = \int_{c}^{r} \left[ \int_{\|{\mathbf{x}}_{\sim 1}\|_2^2 \leq r^2 - x_1^2} (x_1^2 + \|{\mathbf{x}}_{\sim 1}\|_2^2) \, d{\mathbf{x}}_{\sim 1} \right] \, dx_1.
\end{align}

We now focus on evaluating the inner integral in \eqref{eq:nested_j}. By linearity, we split this  integral into two integrals, one for $x_1^2$ and another one for $\|\mathbf{x}_{\sim 1}\|_2^2$. For the first term, since $x_1$ is constant with respect to ${\mathbf{x}}_{\sim 1}$, we simply have
\begin{align}\label{eq:I_int1}
    \int_{\|{\mathbf{x}}_{\sim 1}\|_2^2 \leq r^2 - x_1^2} x_1^2 \, d{\mathbf{x}}_{\sim 1} = x_1^2 \int_{\|{\mathbf{x}}_{\sim 1}\|_2^2 \leq r^2 - x_1^2} 1 \, d{\mathbf{x}}_{\sim 1} = x_1^2 V_{N-1}\left(\sqrt{r^2 - x_1^2}\right).
\end{align}
For the second term, the integral $\int_{\|{\mathbf{x}}_{\sim 1}\|_2^2 \leq r^2 - x_1^2} \|{\mathbf{x}}_{\sim 1}\|_2^2 \, d{\mathbf{x}}_{\sim 1}$ represents the second moment of the slice about its own center, which is studied in Appendix~\ref{sec:second_moment_n_ball}. Note that the slice is a ball of dimension $k = N-1$ with radius $a = \sqrt{r^2 - x_1^2}$. Hence, applying the general formula derived in \eqref{eq:second_moment_ball_formula}, i.e., ${M_2(k, a) = \frac{k}{k+2} a^2 V_k(a)}$ for $k = N-1$ and $a = \sqrt{r^2 - x_1^2}$, we obtain
\begin{align}\label{eq:I_int2}
    \int_{\|{\mathbf{x}}_{\sim 1}\|_2^2 \leq r^2 - x_1^2} \|{\mathbf{x}}_{\sim 1}\|_2^2 \, d{\mathbf{x}}_{\sim 1} &= \frac{N-1}{(N-1)+2} (r^2 - x_1^2) V_{N-1}\left(\sqrt{r^2 - x_1^2}\right) \nonumber \\
    &= \frac{N-1}{N+1} (r^2 - x_1^2) V_{N-1}\left(\sqrt{r^2 - x_1^2}\right).
\end{align}
Substituting \eqref{eq:I_int1} and \eqref{eq:I_int2} back into \eqref{eq:nested_j}, the integral becomes
\begin{align}
    J_N(r, c) = \int_{c}^{r} \left[ x_1^2 V_{N-1}\left(\sqrt{r^2 - x_1^2}\right) + \frac{N-1}{N+1} (r^2 - x_1^2) V_{N-1}\left(\sqrt{r^2 - x_1^2}\right) \right] \, dx_1.
\end{align}
Simplifying the term in the brackets, the integral splits into two distinct terms
\begin{align}\label{eq:includes_definition_of_I}
    J_N(r, c) = \frac{N-1}{N+1} r^2 \int_{c}^{r} V_{N-1}\left(\sqrt{r^2 - x_1^2}\right) \, dx_1 + \frac{2}{N+1}  
    \int_{c}^{r} x_1^2 V_{N-1}\left(\sqrt{r^2 - x_1^2}\right) \, dx_1.
\end{align}
Let us focus on the first term in~\eqref{eq:includes_definition_of_I}. This integral is exactly the volume of the hyperspherical cap, evaluated in~\eqref{eq:K_calculate}.   Thus, we have
\begin{align}\label{first_term_eq}
    \int_{c}^{r} V_{N-1}\left(\sqrt{r^2 - x_1^2}\right) \, dx_1 = K_N(r, c).
\end{align}

We use integration by parts to evaluate the second term in~\eqref{eq:includes_definition_of_I}, namely,
\begin{align}\label{eq:def-I2}
    I_2 \triangleq \int_{c}^{r} x_1^2 V_{N-1}\left(\sqrt{r^2 - x_1^2}\right) \, dx_1.
\end{align}
Let $u = x_1$, and ${dv = x_1 V_{N-1}\left(\sqrt{r^2 - x_1^2}\right) \, dx_1}$, with 
\begin{align}
    v &= \int dv = \int x_1 \cdot C_{N-1} \left( \sqrt{r^2 - x_1^2} \right)^{N-1} \, dx_1 = \int C_{N-1} x_1 (r^2 - x_1^2)^{\frac{N-1}{2}} \, dx_1\nonumber\\
    &\overset{(a)}{=} C_{N-1} \int  (w)^{\frac{N-1}{2}} \left( -\frac{1}{2} dw \right) = -\frac{C_{N-1}}{2} \int w^{\frac{N-1}{2}} \, dw \nonumber \\
    &= -\frac{C_{N-1}}{2} \left[ \frac{2}{N+1} w^{\frac{N+1}{2}} \right]  = -\frac{C_{N-1}}{N+1} w^{\frac{N+1}{2}}\nonumber\\
    &\overset{(a)}{=} -\frac{C_{N-1}}{N+1} \left(r^2-x_1^2\right)^{\frac{N+1}{2}}  \overset{(b)}{=} -\frac{1}{N+1} (r^2 - x_1^2) V_{N-1}\left(\sqrt{r^2 - x_1^2}\right).
\end{align}
Note that we have used $w = r^2 - x_1^2$ with $dw =-2x_1 dx_1$ in steps indicated by (a), and (b) follows from the definition of the function $V_N(\cdot)$ in~\eqref{eq:Ball_Volume}. Then, we have 
\begin{align}\label{temp_integrand}
    - \int_{c}^{r} v \, du &= \int_{c}^{r} \frac{1}{N+1} (r^2 - x_1^2) V_{N-1}\left(\sqrt{r^2 - x_1^2}\right) \, dx_1 \nonumber \\
    &= \frac{1}{N+1} \left( r^2 \int_{c}^{r} V_{N-1}\left(\sqrt{r^2 - x_1^2}\right) \, dx_1 - \int_{c}^{r} x_1^2 V_{N-1}\left(\sqrt{r^2 - x_1^2}\right) \, dx_1 \right)\nonumber\\
    &\overset{(c)}{=} \frac{1}{N+1} \left(r^2 K_N(r, c) - I_2\right),
\end{align}
where (c) follows from~\eqref{first_term_eq} and~\eqref{eq:def-I2}.

Therefore, the integral in the second term of~\eqref{eq:includes_definition_of_I} can be simplified as
\begin{align}\label{eq:I2-eqn}
    I_2 &= \int_{c}^{r} x_1^2 V_{N-1}\left(\sqrt{r^2 - x_1^2}\right) \, dx_1 = \int_{c}^r u dv = uv \Big|_c^r - \int_{c}^r v du \nonumber\\
    &= \left.-\frac{1}{N+1} x_1(r^2 - x_1^2) V_{N-1}\left(\sqrt{r^2 - x_1^2}\right)\right|_{c}^r + \frac{1}{N+1} \left(r^2 K_N(r, c) - I_2\right) \nonumber\\
    &= \frac{1}{N+1} c(r^2 - c^2) V_{N-1}\left(\sqrt{r^2 - c^2}\right) + \frac{1}{N+1} \left(r^2 K_N(r, c) - I_2\right)\nonumber\\
    &\overset{(d)}{=} c Q_N(r,c) + \frac{1}{N+1} \left(r^2 K_N(r, c) - I_2\right),
\end{align}
where (d) follows from~\eqref{eq:result_Q_n}. Solving \eqref{eq:I2-eqn} for $I_2$, we get
\begin{align}\label{main_Ix^2}
    I_2 = \frac{N+1}{N+2} c Q_N(r, c) + \frac{r^2}{N+2} K_N(r, c).
\end{align}

Now substituting \eqref{first_term_eq}, and \eqref{main_Ix^2} in~\eqref{eq:includes_definition_of_I}, we have
\begin{align}\label{eq:result_J_N}
    J_N(r, c) &= \frac{N-1}{N+1} r^2 K_N(r, c) + \frac{2}{N+1} \left( \frac{N+1}{N+2} c Q_N (r, c)+ \frac{r^2}{N+2} K_N(r, c) \right) \nonumber \\
    &= \frac{N-1}{N+1} r^2 K_N(r, c) + \frac{2c}{N+2} Q_N (r, c)+ \frac{2 r^2}{(N+1)(N+2)} K_N(r, c) \nonumber \\
    & =\frac{N r^2}{N+2} K_N(r, c) + \frac{2 c}{N+2} Q_N(r, c).
\end{align}
\subsection{Moments of a Shifted, Left-Oriented Hyperspherical Cap}
\label{sec:cap-shifted}

 Let $\mathcal{B}'_N( r)$ be an $N$-ball with radius $r$ centered at $\mathbf{z} = (z, 0, \dots, 0)$ on the principal axis. Let the defining hyperplane be located at $x_1 = v$, such that the cap lies to the left of the center (i.e., $v < z$). The region $\mathcal{C}_{\text{left}}$ is defined  as
\begin{align}
    \mathcal{C}_{\text{left}} = \{ \mathbf{x} \in \mathbb{R}^N \mid \|\mathbf{x} - \mathbf{z}\|_2 \le r \text{ and } x_1 \le v \}.
\end{align}
We define the distance parameter $c$ as
\begin{align}
    c = |v - z| = z - v.
\end{align}
Therefore, $\mathcal{C}_{\text{left}}$ defined above is equivalent to 
\begin{align}\label{eq:shifted-cap} \mathcal{C}_N(r,c;z) = \{ \mathbf{x} \in \mathbb{R}^N \mid \|\mathbf{x} - \mathbf{z}\|_2 \le r \text{ and } z-x_1 \ge c \},
\end{align}
which can be seen as a mirrored and shifted version of $\mathcal{C}_N(r,c)$ defined in~\eqref{eq:cap_set_def}. Specifically, the intersection height is $h = \sqrt{r^2 - c^2}$, and the volume is $K_N(r, c)$. The geometry is illustrated in Figure~\ref{fig:shifted_cap}.

\begin{figure}[htbp]
    \centering
    \begin{tikzpicture}[scale=1.2]
        \def\R{1.8}
        \def\z{3.0}     
        \def\d{0.8}     
        \def\xc{\z-\d}  

        \draw[->, thick, gray] (-0.5,0) -- (\z+\R+0.5,0) node[right, black] {$x_1$};
        \draw[->, thick, gray] (0,-2.0) -- (0,2.0) node[above, black] {${\mathbf{x}}_{\sim 1}$};

        \draw (3,0) circle (\R);

        \begin{scope}
            \clip (\xc-2, -2) rectangle (\xc, 2); 
            \fill[fill=blue!10] (\z,0) circle (\R);
        \end{scope}

        \draw[thick] (\xc, {-\R*sin(acos(\d/\R))}) -- (\xc, {\R*sin(acos(\d/\R))});

        \fill (\z,0) circle (2pt) node[below] {$z$};

        \draw[dashed] (\xc, 0) -- (\xc, {\R*sin(acos(\d/\R))});
        \node[below left] at (\xc, 0) {$v$};
        
        \draw[<->] (\xc, -0.4) -- (\z, -0.4) node[midway, below] {$c$};

        \draw[->, gray, dashed] (\z, 0) -- (\z, -1.5) node[right] {${\mathbf{u}}_{\sim 1} \text{ (local } \perp)$};
        \draw[->, gray, dashed] (\z, 0) -- (\z-1.5, 0) node[above left] {$u \text{ (local } x_1)$};

        \node[blue!50!black] at (\xc-0.4, -0.8) {$\mathcal{C}_{\text{left}}$};
    \end{tikzpicture}
    \caption{Geometry of the shifted, left-oriented hyperspherical cap $\mathcal{C}_{\text{left}}$ (visualized in cross-section). The ball is centered at $\mathbf{z}=(z,0,0,\ldots,0)$. The hyperplane cuts the ball  at $x_1=v$. We define $\mathbf{u}=\mathbf{z}-\mathbf{x}$, which leads to $u_1=z-x_1$ and $\mathbf{u}_{\sim 1} = -\mathbf{x}_{\sim 1}$. 
    }
    \label{fig:shifted_cap}
\end{figure}

Recall that the results derived for a canonical cap in Sections~\ref{sec:cap-first-moment} and~\ref{sec:cap-second-moment} (for $Q_N$ and $J_N$, respectively), are based on the assumption that the is  cap centered at the origin oriented to the right. In order to utilize those results, we reparameterize the shifted cap with left orientation as
\begin{align}\label{eq:shifted-cap:2}
\mathcal{C}_{N}(r,c;z) = \{ (z-u_1, -{\mathbf{u}}_{\sim 1})\in\mathbb{R}^N \mid u_1^2+\|\mathbf{u}_{\sim 1}\|_2 \le r \text{ and } u_1 \le c \} = \mathbf{z} - \mathcal{C}_N(r,c).
\end{align}
Note that the identity above shows that $\mathbf{x} \in \mathcal{C}_N(r,c;z)$ if and only if $\mathbf{x}=\mathbf{z} - \mathbf{u}$ for some $\mathbf{u} \in \mathcal{C}_N(r,c)$. 

Now, we are ready to evaluate the first and second moments for $\mathcal{C}_N(r,x;z)$. We can write
\begin{align}
    Q_N(r,c;z) &= \int_{\mathcal{C}_N(r,c;z)} x_1 d\mathbf{x} \overset{(a)}{=} -\int_{\mathcal{C}_N(r,c)} (z-u_1) (-d\mathbf{u}) \nonumber\\
    &=  z \int_{\mathcal{C}_N(r,c)} d\mathbf{u} - \int_{\mathcal{C}_N(r,c)} u_1 d\mathbf{u}\nonumber\\
    &\overset{(b)}{=} z K_N(r,c) - Q_N(r,c) \label{eq:shifted_Q_n}.
\end{align}
where (a) follows from the change of variable $\mathbf{u} = \mathbf{z}-\mathbf{x}$ as shown in~\eqref{eq:shifted-cap:2}. Note that the negative sign behind the integral in (a) is due to reversing the orientation at which the integral runs over the space $\mathcal{C}_N(r,c)$. Moreover, (b) follows from~\eqref{def:K_def} and~\eqref{eq:def_Q}.

Similarly, for the second moment, we can write

\begin{align}
    J_N(r,c;z) &= \int_{\mathcal{C}_N(r,c;z)} \|\mathbf{x}\|^2 d\mathbf{x} 
    \overset{(a)}{=} -\int_{\mathcal{C}_N(r,c)} \|\mathbf{z}-\mathbf{u}\|^2 (-d\mathbf{u}) \nonumber\\
    &= \|\mathbf{z}\|^2 \int_{\mathcal{C}_N(r,c)}  d\mathbf{u}
    -2 \int_{\mathcal{C}_N(r,c)} \mathbf{z}^\mathsf{T} \mathbf{u}  d\mathbf{u}
    +
    \int_{\mathcal{C}_N(r,c)} \|\mathbf{u}\|^2 d\mathbf{u}\nonumber\\
    &\overset{(b)}{=} z^2 \int_{\mathcal{C}_N(r,c)}  d\mathbf{u}
    -2 z\int_{\mathcal{C}_N(r,c)} u_1  d\mathbf{u}
    +
    \int_{\mathcal{C}_N(r,c)} \|\mathbf{u}\|^2 d\mathbf{u}\nonumber\\
    &\overset{(c)}{=} z^2 K_N(r,c) -2z Q_N(r,c) + J_N(r,c) \label{eq:shifted_J_N}.
\end{align}
Note that, (a) is again due to change of variable $\mathbf{u} = \mathbf{z}-\mathbf{x}$, in (b) we used the fact that ${\mathbf{z}=(z,0,0,\ldots,0)}$, and (c) follows from~\eqref{def:K_def},~\eqref{eq:def_Q}, and~\eqref{eq:definiton_of_J}.

\section{Two Hyperspheres: Volume of the Intersection }\label{sec:intersection_proof}

In this section, we derive the general formula for the intersection volume between two $N$-dimensional balls in $\mathbb{R}^N$. Let the two balls be defined as the sets
\begin{align}
    \mathcal{B}_N( r_1, \mathbf{o}_1) &= \{ \mathbf{x} \in \mathbb{R}^N : \|\mathbf{x} - \mathbf{o}_1\|_2 \le r_1 \}, \\
    \mathcal{B}_N( r_2, \mathbf{o}_2) &= \{ \mathbf{x} \in \mathbb{R}^N : \|\mathbf{x} - \mathbf{o}_2\|_2 \le r_2 \},
\end{align}
where $\mathbf{o}_1$ and $\mathbf{o}_2$ are the center vectors in $\mathbb{R}^N$. 
Because the intersection volume is invariant under rotation and translation, and both balls are spherically symmetric, the intersection volume is a function of only the radii $r_1, r_2$ and the Euclidean distance between the centers $d = \|\mathbf{o}_1 - \mathbf{o}_2\|_2$. Hence, we are interested in
\begin{align}\label{intersection_volume_def}
    V(r_1, r_2, d) \triangleq \text{Vol}\left( \mathcal{B}_N( r_1, \mathbf{o}_1) \cap \mathcal{B}_N( r_2, \mathbf{o}_2) \right).
\end{align}

Recall that the volume of an $N$-ball of radius $r$ is given by
\begin{equation} \label{eq:Ball_Volume_ref}
    V_N(r) = \frac{\pi^{N/2}}{\Gamma(\frac{N}{2} + 1)} r^N,
\end{equation}
where $\Gamma(\cdot)$ is the Euler Gamma function defined in \eqref{def:Gamma}. We consider the three distinct cases below.
\subsection{Case 1: No Intersection}
If the distance between the centers is greater than or equal to the sum of the radii, i.e., $d \ge r_1 + r_2$, the balls are disjoint or touch at a single point. Thus, the intersection volume is 
\begin{equation}\label{cap_case1}
    V(r_1, r_2, d) = 0.
\end{equation}

\subsection{Case 2: Complete Containment}
If the distance is sufficiently small such that one ball is entirely contained within the other, which occurs when $d \le |r_1 - r_2|$, the intersection set is simply the smaller ball. More precisely, we have
\begin{equation}\label{cap_case2}
    V(r_1, r_2, d) = V_N(\min(r_1, r_2)).
\end{equation}

\subsection{Case 3: Partial Overlap}\label{Sec:Partial_Overlap}
Partial overlap occurs when the boundaries of the two $N$-balls intersect, a condition satisfied when $|r_1 - r_2| < d < r_1 + r_2$. In this scenario, the overlap between the balls will be the union of two \emph{hyperspherical caps} 
(see Figure~\ref{fig:integration_slices_QN}). The volume of a general hyperspherical cap depends on the radius of its ball and the distance between the cutting hyper plane and the center of the ball, and is evaluated in Appendix~\ref{sec:cap_derivation}.   However, depending on the configuration of the balls, two sub-cases can be identified. These cases are illustrated in Figure~\ref{fig:opposite_sides} and Figure~\ref{fig:same_side}. In the following, we first characterize the conditions for these two cases, and then formalize the parameters of the cap, and finally use the result of Appendix~\ref{sec:cap_derivation} to compute the volume of the intersection.

We denote the boundary of a set $\mathcal{S}$ by $\partial\mathcal{S}$. In this scenario, the set of all points belonging to the boundaries of both balls, denoted by the intersection $\partial\mathcal{B}_N( r_1, \mathbf{o}_1) \cap \partial\mathcal{B}_N( r_2, \mathbf{o}_2)$, lies entirely within a flat $(N-1)$-dimensional surface known as the \textbf{radical hyperplane} (see Figure~\ref{fig:opposite_sides}).

Formally, the radical hyperplane is defined by the locus of points having equal distance with respect to both spheres. A point $\mathbf{x}$ lies on this hyperplane if and only if
\begin{equation} \label{eq:radical_def}
    \|\mathbf{x} - \mathbf{o}_1\|_2^2 - r_1^2 = \|\mathbf{x} - \mathbf{o}_2\|_2^2 - r_2^2.
\end{equation}
It is worth noting that the condition above can be rephrased as 
\begin{equation} 
\label{eq:radical_def:2}
    2\mathbf{x}^{\mathsf T}(\mathbf{o}_2 - \mathbf{o}_1) = ( \|\mathbf{o}_2\|_2^2 -  \|\mathbf{o}_1\|_2^2) - (r_2^2-r_1^2),
\end{equation}
which is a linear constraint, and clearly characterizes an $(N-1)$-dimensional hyperplane. Moreover, from~\eqref{eq:radical_def:2}, it can be seen that the radical hyperplane is perpendicular to the direction $\mathbf{o}_2 - \mathbf{o}_1$.  Furthermore, for any point $\mathbf{x}$ on the intersection of the boundaries of two balls, we have $\|\mathbf{x}-\mathbf{o}_1\|_2^2 = r_1^2$ and $\|\mathbf{x}-\mathbf{o}_2\|_2^2 = r_2^2$, which make both sides of~\eqref{eq:radical_def} equal zero, and hence lie on the radical hyperplane. 

The geometry of the intersection depends on the position of the radical hyperplane relative to the centers: Sub-cases 3a happens if two centers lie on opposite sides of the radical hyperplane, and Sub-case 3b indicates the both centers are on one side of the radical hyperplane. In order to formally characterize this distinction, without loss of generality, we assume $\mathbf{o}_1$ is at the origin and $\mathbf{o}_2$ is at $(d, 0, \dots, 0)$ on the $x_1$-axis. 
Then, the radical hyperplane is perpendicular to the $x_1$-axis. 
Let $\mathbf{y}$ be the intersection of the radical hyperplane at $x_1$-axis, and assume $c_1= \|\mathbf{o}_1-\mathbf{y}\|_2$ and $c_2=\|\mathbf{o}_2-\mathbf{y}\|_2$ are the geometric distances between the radical hyperplane and the centers $\mathbf{o}_1$ and $\mathbf{o}_2$, respectively. Comparing  Figure~\ref{fig:opposite_sides} and Figure~\ref{fig:same_side}, it turns out that transition from Sub-case 3a to 3b happens right at $c_2=0$, i.e., when $\mathbf{y}= \mathbf{o}_2$.  Plugging $\mathbf{x}=\mathbf{y}= \mathbf{o}_2$ in~\eqref{eq:radical_def}, we get $\|\mathbf{o}_2-\mathbf{o}_1\|_2^2 - r_1^2 = \|\mathbf{o}_2-\mathbf{o}_2\|_2^2 - r_2^2 $, or equivalently, $d^2= r_1^2 - r_2^2$. Then, we can characterize the two Sub-cases as follows.

\subsubsection{Sub-case 3a: Centers on opposite sides of the hyperplane}
When $d^2 \ge r_1^2 - r_2^2$, the radical hyperplane lies between the centers. Our goal is to determine $c_1 $ and $c_2 $. In this configuration,  we have
\begin{equation} \label{eq:dist_sum_case1}
    c_1  + c_2  = d.
\end{equation}
Moreover, plugging $\mathbf{y}$ in~\eqref {eq:radical_def}, we get 
\begin{equation}\label{eq:y-on-hyp}
    c_1 ^2 - r_1^2 =  c_2^2 - r_2^2.
\end{equation}
Solving~\eqref{eq:dist_sum_case1} and~\eqref{eq:y-on-hyp} for $c_1 $ and $c_2$, we arrive at
\begin{align} \label{eq:c_sol_3a}
     c_1^{(1)} = \frac{d^2 + r_1^2 - r_2^2}{2d}, \quad c_2^{(1)} = \frac{d^2 + r_2^2 - r_1^2}{2d}.
\end{align}
Then, the volume of the intersection can be found from 
\begin{align} \label{eq:vol_3a_final}
    V(r_1, r_2, d) = K_N(r_1, c_1^{(1)}) + K_N(r_2, c_2^{(1)}),
\end{align}
where $K_N(r, c)$ is the volume of a hyperspherical cap in an $N$-ball of radius $r$ with a cutting hyperplane at distance $c$ from the center. This volume is evaluated in \eqref{def:K_def}.

\begin{figure}[htbp]
    \centering
    \begin{tikzpicture}[scale=1.1]
        \def\rOne{2.6} \def\rTwo{2.2} \def\dDist{3.5}
        \coordinate (O1) at (0,0); \coordinate (O2) at (\dDist,0);
        \path [name path=c1] (O1) circle (\rOne); \path [name path=c2] (O2) circle (\rTwo);
        \path [name intersections={of=c1 and c2, by={A,B}}];
        \begin{scope} \clip (O1) circle (\rOne); \fill[gray!20] (O2) circle (\rTwo); \end{scope}
        \draw[dashed, red, thick] (A |- 0,-2.5) -- (A |- 0,2.5) node[above] {Radical Plane};
        \draw[thick] (O1) circle (\rOne); \draw[thick] (O2) circle (\rTwo);
        \fill (O1) circle (2pt) node[below left] {$\mathbf{o}_1$}; \fill (O2) circle (2pt) node[below right] {$\mathbf{o}_2$};
        \draw[thick] (O1) -- (O2) node[midway, above] {$d$};
        
        \coordinate (Y) at (A |- O1);
        \fill (Y) circle (2pt) node[below, xshift=0.2cm] {$\mathbf{y}$};

        \draw[<->, blue] (O1 |- 0,-0.6) -- (A |- 0,-0.6) node[midway, below] {$c_1^{(1)}$};
        \draw[<->, blue] (A |- 0,-0.6) -- (O2 |- 0,-0.6) node[midway, below] {$c_2^{(1)}$};
    \end{tikzpicture}
    \caption{Sub-case 3a: Both centers are outside the intersection, resulting in $c_1^{(1)} + c_2^{(1)} = d$.}
    \label{fig:opposite_sides}
\end{figure}
\subsubsection{Sub-case 3b: Centers on the same side of the hyperplane}
When $d^2 < r_1^2 - r_2^2$, the radical hyperplane lies to the right of both centers. Therefore, we have  
\begin{equation} \label{eq:dist_diff_case2}
    c_1  - c_2  = d.
\end{equation}
Solving this equation together with~\eqref{eq:y-on-hyp} for $c_1$ and $c_2$, leads to
\begin{align} \label{eq:c_sol_3b}
     c_1^{(2)} = \frac{d^2 + r_1^2 - r_2^2}{2d}, \quad c_2^{(2)} = \frac{r_1^2 - r_2^2 - d^2}{2d}.
\end{align}
As illustrated in Figure \ref{fig:same_side}, in this case, for the intersection volume, we have 
\begin{align} \label{eq:vol_3b_final}
    V(r_1, r_2, d) &= K_N(r_1, c_1^{(2)}) + (V_N(r_2) - K_N(r_2, c_2^{(2)})) \nonumber \\
    &\overset{(a)}{=} K_N(r_1, c_1^{(2)}) + K_N(r_2, -c_2^{(2)}),
\end{align}
where (a) follows from the fact that based on the definition of a hyperspherical cap in \eqref{eq:cap_set_def} and its volume in \eqref{def:K_def}, for any $c \in [0, r]$, we have $V_N(r) = K_N(r, c) + K_N(r, -c)$. This identity reflects that a hyperplane divides a ball into two caps whose volumes sum to the total volume $V_N(r)$.

\begin{figure}[htbp]
    \centering
    \begin{tikzpicture}[scale=1.1]
        \def\rOne{3.5} \def\rTwo{2.5} \def\dDist{1.8}
        \coordinate (O1) at (0,0); \coordinate (O2) at (\dDist,0);
        \path [name path=c1] (O1) circle (\rOne); \path [name path=c2] (O2) circle (\rTwo);
        \path [name intersections={of=c1 and c2, by={A,B}}];
        \begin{scope} \clip (O1) circle (\rOne); \fill[gray!20] (O2) circle (\rTwo); \end{scope}
        
        \draw[dashed, red, thick] (A |- 0,-3.2) -- (A |- 0,3.2) node[above] {Radical Plane};
        
        \draw[thick] (O1) circle (\rOne); \draw[thick] (O2) circle (\rTwo);
        
        \coordinate (Y) at (A |- O1);
        
        \fill (O1) circle (2pt) node[below left] {$\mathbf{o}_1$}; 
        \fill (O2) circle (2pt) node[below right] {$\mathbf{o}_2$};
        
        \draw[thick] (O1) -- (Y); 
        \path (O1) -- (O2) node[midway, above] {$d$}; 
        \fill (Y) circle (2pt) node[below right] {$\mathbf{y}$}; 
        
        \draw[<->, blue] (O1 |- 0,-0.8) -- (A |- 0,-0.8) node[midway, below] {$c_1^{(2)}$};
        \draw[<->, blue] (O2 |- 0,-1.4) -- (A |- 0,-1.4) node[midway, below] {$c_2^{(2)}$};
    \end{tikzpicture}
    \caption{Sub-case 3b: Center $\mathbf{o}_2$ is inside the intersection, resulting in $c_1^{(2)} - c_2^{(2)} = d$.}
    \label{fig:same_side}
\end{figure}
\subsubsection{Aggregation of Cases}
We observe that \eqref{eq:vol_3a_final} and \eqref{eq:vol_3b_final} can be unified into a single expression by allowing $c_2$ to be a signed distance. If we define $c_2 = \frac{d^2 + r_2^2 - r_1^2}{2d}$ as in \eqref{eq:c_sol_3a}, then in Case 3b, $c_2$ becomes naturally negative ($c_2 = -c_2^{(2)}$). Thus, for all configurations of partial overlap, we have
\begin{equation} \label{eq:unified_vol}
    V(r_1, r_2, d) = K_N(r_1, c_1) + K_N(r_2, c_2),
\end{equation}
where
\begin{align}\label{eq:c_1}
    c_1 = \frac{d^2 + r_1^2 - r_2^2}{2d}
\end{align}
 and 
 \begin{align}\label{eq:c_2}
     c_2 = \frac{d^2 + r_2^2 - r_1^2}{2d}.
 \end{align}

\subsection{General Expression for the Intersection Volume}
By aggregating the results from Case 1 \eqref{cap_case1}, Case 2 \eqref{cap_case2}, and Case 3 \eqref{eq:unified_vol}, we obtain a comprehensive expression for the intersection volume of two $N$-balls. The general formula $V(r_1, r_2, d)$ is defined as the following piecewise function
\begin{equation} \label{eq:general_intersection}
V(r_1, r_2, d) = 
\begin{cases} 
    0 & \text{if } d \ge r_1 + r_2 \quad (\text{No Intersection}), \\[10pt]
    V_N(\min(r_1, r_2)) & \text{if } d \le |r_1 - r_2| \quad (\text{Complete Containment}), \\[10pt]
    V_{\text{lens}}(r_1, r_2, d) & \text{if } |r_1 - r_2| < d < r_1 + r_2 \quad (\text{Partial Overlap}),
\end{cases}
\end{equation}
where $V_N(r)$ is the volume of an $N$-ball of radius $r$, as defined in \eqref{eq:Ball_Volume}, and $V_{\text{lens}}$ is defined in~\eqref{eq:unified_vol}--\eqref{eq:c_2}.

\section{Proof of Lemma \ref{lemma:prob_acceptance_general}}\label{proof:lemma:prob_acceptance_general}
To prove Lemma \ref{lemma:prob_acceptance_general}, we apply the law of total probability to express $\Pr(\mathcal{A}_\eta)$ as
\begin{align}\label{eq:total_prob_integral}
    \Pr(\mathcal{A}_\eta) = \int_{0}^{\infty} \Pr(\mathcal{A}_\eta \mid Z=z) f_Z(z) \, dz.
\end{align}
Comparing \eqref{eq:total_prob_integral} with \eqref{eq:prob_acceptance_lemma}, it is sufficient to derive the kernel function 
\begin{align}\label{definition_I(z)}
    \Phi_N(z) \triangleq \Pr(\mathcal{A}_\eta \mid Z=z).
\end{align}

Let $\mathcal{S}_z$ denote the surface of the $N$-ball with radius $z$ centered at the origin. Given a magnitude $Z=z$, the vector $\mathbf{N}_a$ is distributed over this surface with a conditional probability density 
\begin{align*}
f_{\mathbf{N}_a|Z}(\mathbf{n}_a | z)=\begin{cases}
    \frac{g(\mathbf{n}_a)}{f_Z(z)} & \mathbf{n}_a\in \mathcal{S}_z\\
    0 & \textrm{otherwise,}
\end{cases}
\end{align*}
where $g(\cdot)$ is the adversarial noise distribution. Therefore, we can express the conditional acceptance probability as an integral over the surface $\mathcal{S}_z$ as
\begin{align}\label{eq:surface_integral_PA}
    \Pr(\mathcal{A}_\eta \mid Z=z) = \int_{\mathcal{S}_z} \Pr(\mathcal{A}_\eta \mid \mathbf{N}_a = \mathbf{n}_a) f_{\mathbf{N}_a|Z}(\mathbf{n}_a | z) \, d\mathbf{n}_a.
\end{align}
Now, let $\mathcal{R}_{\text{honest}} = \{ \mathbf{x} \in \mathbb{R}^N \mid \|\mathbf{x}\|_2 \le \Delta \}$ be the support of the honest noise and
\begin{align}
    \mathcal{R}_{\text{acc}}(\mathbf{n}_a) = \{ \mathbf{x} \in \mathbb{R}^N \mid \|\mathbf{x} - \mathbf{n}_a\|_2 \le \eta \Delta \},
\end{align}
be the acceptance region of the honest noise, for a fixed adversarial noise vector $\mathbf{n}_a$. Since $\mathbf{N}_h$ is uniformly distributed over $\mathcal{R}_{\text{honest}}$, for any fixed $\mathbf{n}_a$, we have
\begin{align}\label{eq:fixed_na_ratio}
    \Pr(\mathcal{A}_\eta \mid \mathbf{N}_a = \mathbf{n}_a) = \frac{\text{Vol}\left( \mathcal{R}_{\text{honest}} \cap \mathcal{R}_{\text{acc}}(\mathbf{n}_a) \right)}{\text{Vol}(\mathcal{R}_{\text{honest}})}.
\end{align}
Note that, due to the uniform distribution of $\mathbf{N}_h$, this probability only depends on the volume of the intersection. Furthermore, due to the spherical symmetry of the honest support $\mathcal{R}_{\text{honest}}$, 
this volume only depends on the distance between the two centers, which in turn, depends only on the magnitude $\|\mathbf{n}_a\|_2 = z$,  and remains invariant regardless of the direction of $\mathbf{n}_a$. 

As discussed above, even though $\mathcal{R}_{\text{acc}}(\mathbf{n}_a)$ depends on both magnitude and direction of $\mathbf{n}_a$, the quantity of interest, i.e., $\text{Vol}\left( \mathcal{R}_{\text{honest}} \cap \mathcal{R}_{\text{acc}}(\mathbf{n}_a) \right)$ only depends on $z=\|\mathbf{n}_a\|$. Hence, with slightly abuse of notation and for simplicity, we let $\mathcal{R}_{\text{acc}}(z)$ denote the acceptance region for an arbitrary vector $\mathbf{n}_a$ on the shell $\mathcal{S}_z$. Consequently, the ratio in \eqref{eq:fixed_na_ratio} is constant for all $\mathbf{n}_a \in \mathcal{S}_z$. This allows us to move this constant term outside the integral in \eqref{eq:surface_integral_PA}, and arrive at
\begin{align}
    \Pr(\mathcal{A}_\eta \mid Z=z) &= \frac{\text{Vol}\left( \mathcal{R}_{\text{honest}} \cap \mathcal{R}_{\text{acc}}(z) \right)}{\text{Vol}(\mathcal{R}_{\text{honest}})} \int_{\mathcal{S}_z} f_{\mathbf{N}_a|Z}(\mathbf{n}_a | z) \, d\mathbf{n}_a \nonumber \\
    &\overset{(a)}{=} \frac{V(\Delta, \eta\Delta, z)}{V_N(\Delta)} \cdot 1,
\end{align}
where (a) follows from the fact that the conditional PDF $f_{\mathbf{N}_a|Z}(\mathbf{n}_a | z)$ integrates to unity over its support $\mathcal{S}_z$, and $V(\Delta, \eta\Delta, z)$ represents the general intersection volume of two $N$-balls at distance $z$ and radii $\Delta$ and $\eta \Delta$. This volume is formally defined and evaluated in \eqref{intersection_volume_def} of Appendix \ref{sec:intersection_proof}.

Substituting the piecewise characterization of $V(\Delta, \eta\Delta, z)$ based on the geometric cases described in \eqref{eq:general_intersection} into the volume ratio \eqref{eq:fixed_na_ratio} and subsequently into the total probability integral \eqref{eq:total_prob_integral} yields the three cases for $\Phi_N(z)$ specified in \eqref{eq:Phi_piecewise}. This completes the proof of Lemma \ref{lemma:prob_acceptance_general}.

\section{Proof of Lemma \ref{lemma:mse_general}}\label{proof:lemma:mse_general}
To prove Lemma \ref{lemma:mse_general}, recall that
    the conditional expected error is defined as $\mathbb{E}[ \|\mathbf{U} - \hat{\mathbf{U}}\|_2^2 \mid \mathcal{A}_\eta ]$. Using the law of total expectation conditioned on the adversarial noise $\mathbf{N}_a$, we write
    \begin{align}\label{eq:cond-mse-int}
        \mathbb{E}[ \|\mathbf{U} - \hat{\mathbf{U}}\|_2^2 \mid \mathcal{A}_\eta ] 
        = \int_{\mathbf{n}_a} \mathbb{E}[ \|\mathbf{U} - \hat{\mathbf{U}}\|_2^2 \mid \mathcal{A}_\eta, 
        \mathbf{N}_a=\mathbf{n}_a ] g_{\mathbf{N}_a|\mathcal{A}_\eta}(\mathbf{n}_a) d\mathbf{n}_a
        .
    \end{align}
    
    Using Bayes' theorem, the posterior density 
    $g_{\mathbf{N}_a|\mathcal{A}_\eta}(\mathbf{n}_a)$ is
    \begin{align}\label{eq:g-cond-acc}
        g_{\mathbf{N}_a|\mathcal{A}_\eta}(\mathbf{n}_a) = \frac{\Pr(\mathcal{A}_\eta \mid \mathbf{N}_a = \mathbf{n}_a) g_{\mathbf{N}_a}(\mathbf{n}_a)}{\Pr(\mathcal{A}_\eta)} = \frac{\Pr(\mathcal{A}_\eta \mid \mathbf{N}_a = \mathbf{n}_a) \int_{0}^\infty f_{\mathbf{N}_a|Z}(\mathbf{n}_a|z) f_Z(z) dz}{\Pr(\mathcal{A}_\eta)}.
    \end{align}
    Note that, here we have 
\begin{align*}
f_{\mathbf{N}_a|Z}(\mathbf{n}_a | z)=\begin{cases}
    \frac{g(\mathbf{n}_a)}{f_Z(z)} & \mathbf{n}_a\in \mathcal{S}_z\\
    0 & \textrm{otherwise,}
\end{cases}
\end{align*}
and $\mathcal{S}_z \triangleq \{ \mathbf{x} \in \mathbb{R}^N : \|\mathbf{x}\|_2 = z \}$ denotes the surface of the $N$-ball with radius $z$.
    Substituting \eqref{eq:g-cond-acc}
    back into \eqref{eq:cond-mse-int} yields
    \begin{align}\label{eq:proof_integral_sub}
        \mathbb{E}[ &\|\mathbf{U} - \hat{\mathbf{U}}\|_2^2 \mid \mathcal{A}_\eta ] 
        = \int_{\mathbf{n}_a} \mathbb{E}[ \|\mathbf{U} - \hat{\mathbf{U}}\|_2^2 \mid \mathcal{A}_\eta, 
        \mathbf{N}_a=\mathbf{n}_a ] 
        \frac{\Pr(\mathcal{A}_\eta \mid \mathbf{N}_a = \mathbf{n}_a) \int_{0}^\infty f_{\mathbf{N}_a|Z}(\mathbf{n}_a|z) f_Z(z) dz}{\Pr(\mathcal{A}_\eta)}
        d\mathbf{n}_a\nonumber\\
        &= \frac{1}{\Pr(\mathcal{A}_\eta)} \int_{0}^\infty \int_{\mathcal{S}_z} \mathbb{E}[ \|\mathbf{U} - \hat{\mathbf{U}}\|_2^2 \mid \mathcal{A}_\eta, 
        \mathbf{N}_a=\mathbf{n}_a ]  \Pr(\mathcal{A}_\eta \mid \mathbf{N}_a = \mathbf{n}_a) f_{\mathbf{N}_a|Z}(\mathbf{n}_a|z) d\mathbf{n}_a f_Z(z)dz.
    \end{align}
    Let us define 
    \begin{align}\label{eq:surface_integral_MSE_case}
        \mathcal{I}(z) \triangleq \int_{\mathcal{S}_z} \mathbb{E}[ \|\mathbf{U} - \hat{\mathbf{U}}\|_2^2 \mid \mathcal{A}_\eta, 
        \mathbf{N}_a=\mathbf{n}_a ]  \Pr(\mathcal{A}_\eta \mid \mathbf{N}_a = \mathbf{n}_a) f_{\mathbf{N}_a|Z}(\mathbf{n}_a|z) d\mathbf{n}_a.
    \end{align}
    Comparing \eqref{eq:proof_integral_sub} with \eqref{eq:mse_lemma_result}, to prove Lemma \ref{lemma:mse_general}, it is sufficient to derive the kernel function \begin{align}\label{kernel:defeq:to_IZ}
        \Psi_N(z) \triangleq 4\mathcal{I}(z).
    \end{align}
    
    Expanding the quadratic form of the estimation error for a fixed vector $\mathbf{n}_a \in \mathcal{S}_z$, we obtain
\begin{align}\label{eq:expanded_mse}
    \mathbb{E}[ \|\mathbf{U} - \hat{\mathbf{U}}\|_2^2 \mid\! \mathcal{A}_\eta, \mathbf{N}_a\!=\! \mathbf{n}_a ] &= \frac{1}{4} \Big(  \|\mathbf{n}_a\|_2^2 + 2\mathbf{n}_a^\top \mathbb{E}[\mathbf{N}_h \mid\! \mathcal{A}_\eta, \mathbf{N}_a\!=\!\mathbf{n}_a] +\mathbb{E}[\|\mathbf{N}_h\|_2^2 \mid\! \mathcal{A}_\eta, \mathbf{N}_a\!=\!\mathbf{n}_a]  \Big).
\end{align}
Note that $\|\mathbf{n}_a\|_2^2=z^2$ for every $\mathbf{n}_a\in \mathcal{S}_z$. However, we need to find $f_{\mathbf{N}_h | \mathcal{A}_\eta, \mathbf{n}_a}(\mathbf{n}_h|\mathbf{n}_a)$ to further simplify~\eqref{eq:expanded_mse}, which is needed to compute $\mathcal{I}(z)$. In the following, we  
evaluate the term $\mathcal{I}(z)$ for the following three cases, which are defined based on the overlap between honest noise ball ${\mathcal{R}_{\text{honest}} = \mathcal{B}_N( \Delta)}$ and the acceptance region $\mathcal{R}_{\text{acc}}(\mathbf{n}_a) = \mathcal{B}_N( \eta\Delta, \mathbf{n}_a)$.

\subsection{Case 1: Complete Containment (\texorpdfstring{$0 \le z \le (\eta-1)\Delta$}{0 <= z <= (eta-1)Delta})}

In this regime, the magnitude of the adversarial noise $z$ is sufficiently small that the support of the honest noise  is entirely contained within the acceptance region, i.e., $\mathcal{R}_{\text{honest}} \subseteq \mathcal{R}_{\text{acc}}(\mathbf{n}_a)$, for any adversarial vector $\mathbf{n}_a$ with magnitude $z$. This implies that for any fixed $\mathbf{n}_a$ with ${\|\mathbf{n}_a\|_2 = z\in[0,(\eta-1)\Delta]}$ we have
\begin{align} \label{eq:containment_identity}
    \Pr(\mathcal{A}_\eta \mid \mathbf{N}_a = \mathbf{n}_a)=\Pr(\mathcal{A}_\eta \mid \mathbf{N}_a = \mathbf{n}_a, \mathbf{N}_h = \mathbf{n}_h) = 1, \quad \forall \mathbf{n}_h \in \mathcal{B}_N( \Delta).
\end{align}
Therefore, we have 
\begin{align}\label{eq:bayes_posterior_nh}
    f_{\mathbf{N}_h | \mathcal{A}_\eta, \mathbf{N}_a}(\mathbf{n}_h|\mathbf{n}_a) &= \frac{\Pr(\mathcal{A}_\eta \mid \mathbf{N}_a= \mathbf{n}_a, \mathbf{N}_h=\mathbf{n}_h) f_{\mathbf{N}_h | \mathbf{N}_a}(\mathbf{n}_h|\mathbf{n}_a)}{\Pr(\mathcal{A}_\eta \mid \mathbf{N}_a=\mathbf{n}_a)} \nonumber \\
    &\overset{(a)}{=} \frac{\Pr(\mathcal{A}_\eta \mid \mathbf{N}_a=\mathbf{n}_a, \mathbf{N}_h=\mathbf{n}_h) f_{\mathbf{N}_h}(\mathbf{n}_h)}{\Pr(\mathcal{A}_\eta \mid \mathbf{N}_a=\mathbf{n}_a)} \nonumber \\
    &\overset{(b)}{=} \frac{1 \cdot f_{\mathbf{N}_h}(\mathbf{n}_h)}{1} = f_{\mathbf{N}_h}(\mathbf{n}_h),
\end{align}
where (a) follows from the fact that the honest noise $\mathbf{N}_h$ is generated independently of the adversarial noise $\mathbf{N}_a$, and (b) follows from the containment condition in \eqref{eq:containment_identity}. The identity in \eqref{eq:bayes_posterior_nh} demonstrates that the posterior distribution of $\mathbf{N}_h$ remains a uniform distribution over the ball $\mathcal{B}_N( \Delta)$, i.e., conditioning on the acceptance event $\mathcal{A}_\eta$ and the realization $\mathbf{n}_a$ with $\|\mathbf{n}_a\|_2=z$ provides no additional information about the honest noise in this regime.
Therefore, we can evaluate the terms in~\eqref{eq:expanded_mse} as follows. 
\begin{itemize}
\item \textbf{Cross Term (First Moment):} Since $\mathbf{N}_h$ is uniformly distributed over the ball $\mathcal{B}_N( \Delta)$, which is centered at the origin, its expected value is the zero vector. Thus, the cross term vanishes
    \begin{align} \label{eq:item_cross_term}
        2\mathbf{n}_a^\top \mathbb{E}[\mathbf{N}_h \mid \mathcal{A}_\eta, \mathbf{N}_a=\mathbf{n}_a] = 2\mathbf{n}_a^\top \mathbb{E}[\mathbf{N}_h] = 2\mathbf{n}_a^\top \mathbf{0} = 0.
    \end{align}
    
    \item \textbf{Second Moment of Honest Noise:} The expectation of the squared magnitude is calculated by integrating over the uniform ball $\mathcal{B}_N( \Delta)$. Using the result from Equation \eqref{eq:second_moment_ball_formula}, we have
    \begin{align} \label{eq:item_second_moment}
        \mathbb{E}[\|\mathbf{N}_h\|_2^2 \mid \mathcal{A}_\eta, \mathbf{N}_a=\mathbf{n}_a] = \mathbb{E}[\|\mathbf{N}_h\|_2^2] = \frac{N}{N+2}\Delta^2.
    \end{align}
    

\end{itemize}

Substituting the results from \eqref{eq:item_second_moment}, and \eqref{eq:item_cross_term} back into \eqref{eq:expanded_mse}, we find that for any $\mathbf{n}_a \in \mathcal{S}_z$
\begin{align} \label{eq:error_invariant_case1}
    \mathbb{E}[ \|\mathbf{U} - \hat{\mathbf{U}}\|_2^2 \mid \mathcal{A}_\eta, \mathbf{N}_a=\mathbf{n}_a ] = \frac{1}{4} \left( z^2 + \frac{N}{N+2}\Delta^2  \right).
\end{align}

Crucially, the expression in \eqref{eq:error_invariant_case1} depends only on the magnitude $z$ and is invariant to the direction of $\mathbf{n}_a$. Hence, it can be moved out of the integral in the definition of $\mathcal{I}(z)$. Plugging~\eqref{eq:containment_identity} and~\eqref{eq:error_invariant_case1} into the integral in \eqref{eq:surface_integral_MSE_case}, we arrive at
\begin{align} \label{eq:final_reduction_case1}
    \mathcal{I}(z) &= \frac{1}{4}\left( z^2 + \frac{N}{N+2}\Delta^2 \right) \int_{\mathcal{S}_z} f_{\mathbf{N}_a|Z}(\mathbf{n}_a | z) \, d\mathbf{n}_a \nonumber \\
    &\overset{(c)}{=} \frac{1}{4}\left( z^2 + \frac{N}{N+2}\Delta^2 \right),
\end{align}
where (c) follows from the fact that the conditional PDF $f_{\mathbf{N}_a|Z}(\mathbf{n}_a | z)$ must integrate to unity over its support $\mathcal{S}_z$. By substituting \eqref{eq:final_reduction_case1} into \eqref{kernel:defeq:to_IZ} and comparing the result to \eqref{eq:Psi_piecewise}, we find that it matches the first branch of the piecewise function $\Psi_N(z)$.

\subsection{Case 2: Partial Overlap (\texorpdfstring{$(\eta-1)\Delta < z < (\eta+1)\Delta$}{(eta-1)Delta < z < (eta+1)Delta})}

In this regime, the magnitude of the adversarial noise $z$ results in a partial intersection between the support of the honest noise and the acceptance region. Conditioned on $\mathcal{A}_\eta$ and any fixed realization $\mathbf{n}_a \in \mathcal{S}_z$ of the adversarial noise, the honest noise vector $\mathbf{N}_h$ is constrained by two distinct geometric requirements: its prior support $\mathcal{R}_{\text{honest}} = \mathcal{B}_N( \Delta)$ and the acceptance region $\mathcal{R}_{\text{acc}}(\mathbf{n}_a) = \mathcal{B}_N( \eta\Delta, \mathbf{n}_a)$.
Hence, the posterior support of $\mathbf{N}_h$ is determined by
$\mathcal{R}_{\text{lens}}(\mathbf{n}_a)$ (See Figure \ref{fig:region_decomposition}), where
\begin{align} \label{eq:R_lens_def}
    \mathcal{R}_{\text{lens}}(\mathbf{n}_a) \triangleq \mathcal{R}_{\text{honest}} \cap \mathcal{R}_{\text{acc}}(\mathbf{n}_a)= \{ \mathbf{n} \in \mathbb{R}^N \mid \|\mathbf{n}\|_2 \le \Delta \text{ and } \|\mathbf{n} - \mathbf{n}_a\|_2 \le \eta\Delta \}.
\end{align}
Based on the general formula for the intersection of two $N$-balls, the volume of this region is $V_{\text{lens}}(\Delta, \eta\Delta, z)$, as defined in \eqref{eq:general_intersection}. Recall from~\eqref{eq:fixed_na_ratio} in the proof of Lemma \ref{lemma:prob_acceptance_general}, that
\begin{align}\label{eq:PA_to_V_lens}
    \Pr(\mathcal{A}_\eta \mid \mathbf{N}_a=\mathbf{n}_a) = \frac{V_{\text{lens}}(\Delta, \eta\Delta, z)}{V_N(\Delta)}.
\end{align}
We now determine the posterior distribution of the honest noise $\mathbf{N}_h$ conditioned on both the acceptance event $\mathcal{A}_\eta$ and the fixed vector $\mathbf{n}_a$. Applying Bayes' theorem, we have
\begin{align}\label{case2_bayes_posterior}
    f_{\mathbf{N}_h \mid \mathcal{A}_\eta, \mathbf{N}_a}(\mathbf{n}_h |\mathbf{n}_a) &= \frac{\Pr(\mathcal{A}_\eta \mid \mathbf{N}_h = \mathbf{n}_h, \mathbf{N}_a = \mathbf{n}_a) f_{\mathbf{N}_h | \mathbf{N}_a}(\mathbf{n}_h|\mathbf{n}_a)}{\Pr(\mathcal{A}_\eta \mid \mathbf{N}_a=\mathbf{n}_a)} \nonumber \\
    &\overset{(a)}{=} \frac{\Pr(\mathcal{A}_\eta \mid \mathbf{N}_h = \mathbf{n}_h, \mathbf{N}_a = \mathbf{n}_a) f_{\mathbf{N}_h}(\mathbf{n}_h)}{\Pr(\mathcal{A}_\eta \mid \mathbf{N}_a=\mathbf{n}_a)} \nonumber \\
    &\overset{(b)}{=} \frac{\mathbb{I}(\mathbf{n}_h \in \mathcal{R}_{\text{lens}}(\mathbf{n}_a)) \cdot \frac{1}{V_N(\Delta)}}{V_{\text{lens}}(\Delta, \eta\Delta, z) / V_N(\Delta)} \nonumber \\
    &= \frac{1}{V_{\text{lens}}(\Delta, \eta\Delta, z)} \mathbb{I}(\mathbf{n}_h \in \mathcal{R}_{\text{lens}}(\mathbf{n}_a)),
\end{align}
where (a) follows from the independence of $\mathbf{N}_h$ and $\mathbf{N}_a$, and (b) follows from substituting \eqref{eq:PA_to_V_lens}, applying the uniform prior of $\mathbf{N}_h$ over $\mathcal{B}_N( \Delta)$, and utilizing the geometric definition of the acceptance event for a fixed $\mathbf{n}_a$. This confirms that $\mathbf{N}_h$ is uniformly distributed over the intersection region $\mathcal{R}_{\text{lens}}(\mathbf{n}_a)$.

Now, we are ready to evaluate the terms in~\eqref{eq:expanded_mse}. To this end, without loss of generality, we align $\mathbf{n}_a$ with the first axis, i.e., we assume $\mathbf{n}_a = [z, 0, \dots, 0]^\top$. In this alignment, we have $\mathbf{n}_a^\top \mathbf{N}_h = z N_{h,1}$, where $N_{h,1}$ is the first component of the honest noise vector. Substituting this into \eqref{eq:expanded_mse} leads to
\begin{align}\label{eq:expected_expanded}
    \mathbb{E}\Big[ \|\mathbf{U} - \hat{\mathbf{U}}\|_2^2 \mid &\mathcal{A}_\eta, \mathbf{N}_a=\mathbf{n}_a \Big] = \frac{1}{4} \left( z^2 + 2z \mathbb{E}[N_{h,1} \mid \mathcal{A}_\eta, \mathbf{N}_a=\mathbf{n}_a] + \mathbb{E}\left[ \|\mathbf{N}_h\|_2^2 \mid \mathcal{A}_\eta, \mathbf{N}_a=\mathbf{n}_a \right]  \right)\nonumber\\
    &=\frac{1}{4} \left(z^2 + 2z \int_{\mathcal{B}_N(\Delta)} n_{h,1} f_{\mathbf{N}_h \mid \mathcal{A}_\eta, \mathbf{n}_a}(\mathbf{n}_h) d\mathbf{n}_h+ \int_{\mathcal{B}_N(\Delta)} \|\mathbf{n}_h\|_2^2 f_{\mathbf{N}_h \mid \mathcal{A}_\eta, \mathbf{n}_a}(\mathbf{n}_h) d\mathbf{n}_h \right)\nonumber\\
    &\overset{(c)}{=}\frac{1}{4} \left(z^2 + 2z \int_{\mathcal{R}_{\text{lens}}(\mathbf{n}_a)} n_{h,1} \frac{1}{V} d\mathbf{n}_h+ \int_{\mathcal{R}_{\text{lens}}(\mathbf{n}_a)} \|\mathbf{n}_h\|_2^2 \frac{1}{V} d\mathbf{n}_h \right)\nonumber\\
    &=\frac{1}{4V} \left( z^2 V + 2z I_1+ I_2  \right),
\end{align}
where $V = V_{\text{lens}}(\Delta, \eta\Delta, z)$, and
\begin{align}
    I_1 &= \int_{\mathcal{R}_{\text{lens}}(\mathbf{n}_a)} n_{h,1} \, d\mathbf{n}_h, \label{eq:integral_first_moment} \\
    I_2 &= \int_{\mathcal{R}_{\text{lens}}(\mathbf{n}_a)} \|\mathbf{n}_h\|_2^2 \, d\mathbf{n}_h. \label{eq:integral_second_moment}
\end{align}
Moreover, in (c) we replaced the conditional PDF of $\mathbf{N}_h$ from~\eqref{case2_bayes_posterior}. 

\begin{figure}[htbp]
    \centering
    \begin{tikzpicture}[scale=1.5]
        \def\rOne{2.0}      
        \def\rTwo{3}        
        \def\d{2.5}         
        
        \coordinate (O1) at (0,0);
        \coordinate (O2) at (\d,0);
        
        \pgfmathsetmacro{\uc}{(\d*\d - \rTwo*\rTwo + \rOne*\rOne)/(2*\d)}
        \pgfmathsetmacro{\yint}{sqrt(\rOne*\rOne - \uc*\uc)}
        
        \coordinate (Top) at (\uc, \yint);
        \coordinate (Bottom) at (\uc, -\yint);

        \begin{scope}
            \clip (\uc, -3) rectangle (\rOne, 3); 
            \fill[blue!20] (O1) circle (\rOne);
        \end{scope}
        
        \begin{scope}
            \clip (\d-\rTwo, -3) rectangle (\uc, 3); 
            \fill[red!20] (O2) circle (\rTwo);
        \end{scope}

        \draw[thick, dashed] (Top) -- (Bottom);

        \draw[thick, blue] (O1) circle (\rOne);
        \draw[thick, red] (O2) circle (\rTwo);

        \draw[->] (-2.5,0) -- (4.5,0) node[right] {$n_{h,1}$};
        \draw[->] (0,-2.5) -- (0,2.5) node[above] {$n_{h,2}$};

        \fill (O1) circle (1.5pt) node[below left] {$0$};
        \fill (O2) circle (1.5pt) node[below right] {$z$};
        \draw[thick] (\uc, 0.1) -- (\uc, -0.1) node[below] {$u_c$};

        \node at (\uc + 0.6, 0.5) {$\mathcal{R}_1$};
        \node at (\uc - 0.6, 0.5) {$\mathcal{R}_2$};
        
        \draw[->, blue] (\uc + 0.6, 0.2) -- (\uc + 1.2, -1.2) node[below, align=center, font=\footnotesize] {Bounded by\\Honest Ball};
        \draw[->, red] (\uc - 0.6, 0.2) -- (\uc - 1.2, -1.2) node[below, align=center, font=\footnotesize] {Bounded by\\Acceptance Ball};
    \end{tikzpicture}
    \caption{Cross-sectional decomposition of the $N$-dimensional intersection region $\mathcal{R}_{\text{lens}}(\mathbf{n}_a)$. The vertical line at $n_{h,1} = u_c$ represents the radical hyperplane, which divides the volume into standard cap $\mathcal{R}_1$ and shifted cap $\mathcal{R}_2$.}
    \label{fig:region_decomposition}
\end{figure}

Before calculating $I_1$ and $I_2$, we analyze the geometry of the integration domain. As discussed in Appendix~\ref{Sec:Partial_Overlap}, the integration domain $\mathcal{R}_{\text{lens}}(\mathbf{n}_a)$ comprises of two hyperspherical caps, obtained by cutting the balls by the radical plane. More precisely, we have  $\mathcal{R}_{\text{lens}}(\mathbf{n}_a) = \mathcal{R}_1 \cup \mathcal{R}_2$, where
    \begin{equation}
        \mathcal{R}_1 = \{ \mathbf{r} \in \mathbb{R}^N \mid \|\mathbf{r}\|_2 \le \Delta \text{ and } r_1 \ge u_c \},
    \end{equation}
    and
    \begin{equation}
        \mathcal{R}_2 = \{ \mathbf{r} \in \mathbb{R}^N \mid \|\mathbf{r} - \mathbf{n}_a\|_2 \le \eta\Delta \text{ and } r_1 \le u_c \},
    \end{equation}
    and the cutting point $u_c$ is given by
    \begin{align}
    u_c = \frac{z^2 + \Delta^2(1 - \eta^2)}{2z}.
\end{align}
Based on the definition of the hyperspherical cap volume in \eqref{eq:K_calculate}, the volume of $\mathcal{R}_1$ is given by $V_1 = K_N(\Delta, u_c)$. Moreover, $\mathcal{R}_2$ corresponds to a shifted, left-oriented cap. As established in the geometric analysis of Figure \ref{fig:shifted_cap}, the volume of $\mathcal{R}_2$ is determined by the radius $\eta\Delta$ and the distance from the center $z - u_c$. Thus, based on \eqref{eq:K_calculate}, the volume of $\mathcal{R}_2$ is $V_2 = K_N(\eta\Delta, z - u_c)$. 

\begin{itemize}
\item \textbf{Cross Term (First Moment):} Using this decomposition above, we can rewrite the moment integrals $I_1$ in \eqref{eq:integral_first_moment} as 
    \begin{align}
        I_1 &= \int_{\mathcal{R}_1} n_{h,1} \, d\mathbf{n}_h + \int_{\mathcal{R}_2} n_{h,1} \, d\mathbf{n}_h, \label{seperate_I1}
    \end{align}
Based on definition \eqref{eq:def_Q}, we have     \begin{equation}\label{Part1_I1}
            \int_{\mathcal{R}_1} n_{h,1} \, d\mathbf{n}_h = Q_N(\Delta, u_c).
        \end{equation}
        Similarly, using~\eqref{eq:shifted_Q_n}, we can write
        \begin{equation}\label{Part2_I1}
            \int_{\mathcal{R}_2} n_{h,1} \, d\mathbf{n}_h = -Q_N(\eta\Delta, z-u_c) + z V_2.
        \end{equation}
    It is important to note from \eqref{eq:result_Q_n} that $Q_N(r, d)$ depends only on the intersection height ${h = \sqrt{r^2 - d^2}}$. Since both caps share the same intersection boundary (the $(N-1)$-sphere of radius $h$), we have $Q_N(\Delta, u_c) = Q_N(\eta\Delta, z-u_c)$.
    Thus, substituting  \eqref{Part1_I1} and \eqref{Part2_I1} in \eqref{seperate_I1}, we have 
    \begin{equation}
    \label{eq:I1_result}
        I_1 = Q_N(\Delta, u_c) - Q_N(\eta\Delta, z-u_c) + z V_2 = z V_2.
    \end{equation}     
    
    \item \textbf{Second Moment of Honest Noise:} Applying the same decomposition for the integration region, we have 
    \begin{align}
        I_2 &= \int_{\mathcal{R}_1} \|\mathbf{n}_h\|_2^2 \, d\mathbf{n}_h + \int_{\mathcal{R}_2} \|\mathbf{n}_h\|_2^2 \, d\mathbf{n}_h. \label{seperate_I2}
    \end{align}
     Using the definition in~\eqref{eq:definiton_of_J}, we have
    \begin{equation}\label{part1_I2}
            \int_{\mathcal{R}_1} \|\mathbf{n}_h\|_2^2 \, d\mathbf{n}_h = J_N(\Delta, u_c).
        \end{equation}
       Similarly, using the identity in~\eqref{eq:shifted_J_N}, we get
   \begin{equation}\label{part2_I2}
            \int_{\mathcal{R}_2} \|\mathbf{n}_h\|_2^2 \, d\mathbf{n}_h = J_N(\eta\Delta, z-u_c) - 2z Q_N(\eta\Delta, z-u_c) + z^2 V_2.
        \end{equation}
    
    Therefore, plugging \eqref{part1_I2} and \eqref{part2_I2} into~\eqref{seperate_I2}, we arrive at
    \begin{equation}
    \label{eq:I2_result}
        I_2 = J_N(\Delta, u_c) + J_N(\eta\Delta, z-u_c) - 2z Q_N(\eta\Delta, z-u_c) + z^2 V_2.
    \end{equation}

    Now, we have all the terms to evaluate~\eqref{eq:expected_expanded}. Let us define 
    \begin{align}\label{eq:Psi-lens}
        \Psi_N^{\text{lens}}(z) &\triangleq z^2 V + 2zI_1 + I_2 \nonumber\\
        &\overset{(a)}{=} z^2 (V_1+V_2) + 2z^2 V_2 + J_N(\Delta, u_c) + J_N(\eta\Delta, z-u_c) - 2z Q_N(\eta\Delta, z-u_c) + z^2 V_2\nonumber\\
        &=
        \Big[ J_N(\Delta, u_c) + z^2 V_1 \Big] + \Big[ J_N(\eta\Delta, z-u_c) + 4z^2 V_2 - 2z Q_N(\eta\Delta, z-u_c) \Big],
    \end{align}
    where (a) follows from $V=V_1+V_2$ and substituting $I_1$ and $I_2$ from \eqref{eq:I1_result} and \eqref{eq:I2_result}, respectively. Using~\eqref{eq:Psi-lens} in~\eqref{eq:expected_expanded}, we get \begin{align}\label{eq:final_condition_on_accep_z}
        \mathbb{E}\left[ \|\mathbf{U} - \hat{\mathbf{U}}\|_2^2 \mid \mathcal{A}_\eta, \mathbf{N}_a=\mathbf{n}_a \right] = \frac{\Psi_N^{\text{lens}}(z)}{4 V_{\text{lens}}(\Delta, \eta\Delta, z)}.
    \end{align}
    Finally, plugging~\eqref{eq:PA_to_V_lens} and~\eqref{eq:final_condition_on_accep_z} into~\eqref{eq:surface_integral_MSE_case}, we obtain $I(z)$ for the second case as  
\begin{align}\label{I_z_case2_final_derivation}
        \mathcal{I}(z) &= \int_{\mathcal{S}_z} \left( \frac{\Psi_N^{\text{lens}}(z)}{4 V_{\text{lens}}(\Delta, \eta\Delta, z)} \right) \left( \frac{V_{\text{lens}}(\Delta, \eta\Delta, z)}{V_N(\Delta)} \right) f_{\mathbf{N}_a|Z}(\mathbf{n}_a | z) \, d\mathbf{n}_a \nonumber \\
        &= \frac{\Psi_N^{\text{lens}}(z)}{4 V_N(\Delta)} \int_{\mathcal{S}_z} f_{\mathbf{N}_a|Z}(\mathbf{n}_a | z) \, d\mathbf{n}_a \nonumber \\
        &= \frac{\Psi_N^{\text{lens}}(z)}{4 V_N(\Delta)},
    \end{align}
    where the last equality follows from the fact that the conditional density $f_{\mathbf{N}_a|Z}(\mathbf{n}_a | z)$ integrates to unity over its support $\mathcal{S}_z$. 

\end{itemize}

By substituting \eqref{I_z_case2_final_derivation} into \eqref{kernel:defeq:to_IZ} and comparing the result to \eqref{eq:Psi_piecewise}, we find that it matches the second branch of the piecewise function $\Psi_N(z)$.

\subsection{Case 3: No Intersection (\texorpdfstring{$z \ge (\eta+1)\Delta$}{z >= (eta+1)Delta})}

In this final regime, since $z \ge (\eta+1)\Delta$, the distance between the centers of honest noise ${\mathcal{R}_{\text{honest}} = \mathcal{B}_N( \Delta)}$ and the acceptance region $\mathcal{R}_{\text{acc}}(\mathbf{n}_a) = \mathcal{B}_N( \eta\Delta, \mathbf{n}_a)$ is greater than the sum of their radii. Thus, the two balls are disjoint, i.e.,  ${\mathcal{R}_{\text{honest}}\cap \mathcal{R}_{\text{acc}}(\mathbf{n}_a)= \varnothing}$. Therefore,  
\begin{align}
    \Pr(\mathcal{A}_\eta \mid Z=z) =  0.
\end{align}
Substituting this into the definition of  $\mathcal{I}(z)$ in \eqref{eq:surface_integral_MSE_case}, we obtain
\begin{align}\label{Iz_case3}
    \mathcal{I}(z)  =\int_{\mathcal{S}_z} 0 d\mathbf{n}_h =0.
\end{align}
Based on the definition of \eqref{kernel:defeq:to_IZ}, 
the result of \eqref{Iz_case3} corresponds exactly to the third branch of the piecewise function $\Psi_N(z)$ defined in \eqref{eq:Psi_piecewise}.
This completes the proof of Lemma \ref{lemma:mse_general}.

\section{Proof of Lemma \ref{lemma:optimal_support}}\label{proof:lemma:optimal_support}

To prove Lemma \ref{lemma:optimal_support}, we proceed in two steps. First, we show that any probability mass located in the region $z > (\eta+1)\Delta$ can be removed and redistributed to the other region, thereby increasing the probability of acceptance while maintaining the conditional MSE. Second, we show that any probability mass located in the region $z < (\eta-1)\Delta$ can be shifted to the point $z = (\eta-1)\Delta$ to increase the conditional MSE without affecting the probability of acceptance.

\subsection*{Step 1: Removing mass from $z > (\eta+1)\Delta$}

Consider an initial noise distribution 
$f_Z(z)$ and let 
\begin{align}
    P_{\text{tail}} \triangleq \int_{(\eta+1)\Delta}^{\infty} f_Z(z) \, dz.
\end{align}
If $P_{\text{tail}} = 0$, the support is already bounded from above\footnote{Note that if $P_{\text{tail}} = 1$, the entire probability mass of $f_Z(z)$ lies in the region $z \ge (\eta+1)\Delta$. According to Lemma~\ref{lemma:prob_acceptance_general} and~\eqref{eq:Phi_piecewise}, $\Phi_N(z) = 0$ for $z \ge (\eta+1)\Delta$. This immediately implies that $\Pr(\mathcal{A}_\eta; f_Z) = 0$, which contradicts our initial assumption in the statement of Lemma \ref{lemma:optimal_support}.}. Otherwise, as illustrated in Figure \ref{fig:noise_step0}, assume that some mass exists beyond the desired boundary $(\eta+1)\Delta$. According to \eqref{eq:Phi_piecewise} and \eqref{eq:Psi_piecewise}, we have $\Phi_N(z) = 0$ and $\Psi_N(z) = 0$ for $z \ge (\eta+1)\Delta$.

\begin{figure}[htbp]
    \centering
    \begin{tikzpicture}[scale=1.2]
        \draw[->] (-0.5,0) -- (6,0) node[right] {$z$};
        \draw[->] (0,-0.2) -- (0,2) node[above] {$f_Z(z)$};
        
        \draw (1.5,0.05) -- (1.5,-0.05) node[below, font=\small] {$(\eta-1)\Delta$};
        \draw (3.5,0.05) -- (3.5,-0.05) node[below, font=\small] {$(\eta+1)\Delta$};
        
        \fill[blue!20] (0.2,0) rectangle (1.3, 0.22);
        \draw[thick, blue] (0.2,0) rectangle (1.3, 0.22);
        \node at (0.75, 0.11) {$S_1$};
        
        \fill[green!20] (1.8,0) rectangle (3.2, 0.18);
        \draw[thick, green!60!black] (1.8,0) rectangle (3.2, 0.18);
        \node at (2.5, 0.09) {$S_2$};
        
        \fill[red!20] (3.8,0) rectangle (5.5, 0.3);
        \draw[thick, red] (3.8,0) rectangle (5.5, 0.3);
        \node at (4.65, 0.15) {$S_3$};

        \node[anchor=west, font=\small] at (4, 1.5) {$\text{Area}(S_1) = 1/4$};
        \node[anchor=west, font=\small] at (4, 1.2) {$\text{Area}(S_2) = 1/4$};
        \node[anchor=west, font=\small] at (4, 0.9) {$\text{Area}(S_3) = 1/2$};
    \end{tikzpicture}
    \caption{Visualization of an initial adversarial noise density $f_Z(z)$. The total probability mass is partitioned into three regions: $S_1$ (head), $S_2$ (support), and $S_3$ (tail), where the mass $P_{\text{tail}} = \text{Area}(S_3)$ lies in the zero-acceptance region $z > (\eta+1)\Delta$.}
    \label{fig:noise_step0}
\end{figure}
We define an intermediate distribution $f_1(z)$ by truncating and normalizing $f_Z(z)$, as shown in Figure \ref{fig:noise_step1}, where the remaining mass is scaled up to maintain a valid PDF:
\begin{align}\label{define_f_1}
    f_1(z) = \begin{cases}
        \frac{f_Z(z)}{1 - P_{\text{tail}}} & \text{if } z \le (\eta+1)\Delta, \\
        0 & \text{if } z > (\eta+1)\Delta.
    \end{cases}
\end{align}

\begin{figure}[htbp]
    \centering
    \begin{tikzpicture}[scale=1.2]
        \draw[->] (-0.5,0) -- (6,0) node[right] {$z$};
        \draw[->] (0,-0.2) -- (0,2) node[above] {$f_1(z)$};
        
        \draw (1.5,0.05) -- (1.5,-0.05) node[below, font=\small] {$(\eta-1)\Delta$};
        \draw (3.5,0.05) -- (3.5,-0.05) node[below, font=\small] {$(\eta+1)\Delta$};
        
        \fill[blue!20] (0.2,0) rectangle (1.3, 0.44);
        \draw[thick, blue] (0.2,0) rectangle (1.3, 0.44);
        \node at (0.75, 0.22) {$S'_1$};
        
        \fill[green!20] (1.8,0) rectangle (3.2, 0.36);
        \draw[thick, green!60!black] (1.8,0) rectangle (3.2, 0.36);
        \node at (2.5, 0.18) {$S'_2$};

        \draw[dashed, gray] (3.8,0) rectangle (5.5, 0.05);
        \node[gray, font=\tiny] at (4.65, 0.2) {Removed};

        \node[anchor=west, font=\small] at (4, 1.5) {$\text{Area}(S'_1) = 1/2$};
        \node[anchor=west, font=\small] at (4, 1.2) {$\text{Area}(S'_2) = 1/2$};
    \end{tikzpicture}
    \caption{Intermediate distribution $f_1(z)$ following the truncation and renormalization process defined in \eqref{define_f_1}. By removing the tail mass $P_{\text{tail}}$ and scaling the remaining density, the probability of acceptance is increased according to \eqref{new_noise_no_tail_Prob} while the conditional MSE remains constant.}
    \label{fig:noise_step1}
\end{figure}

Using Lemma \ref{lemma:prob_acceptance_general}, the acceptance probability for $f_1(z)$ is
\begin{align}
    \Pr(\mathcal{A}_\eta; f_1) &= \int_{0}^{\infty} \Phi_N(z) f_1(z) \, dz \nonumber \\
    &\overset{(a)}{=} \frac{1}{1 - P_{\text{tail}}} \int_{0}^{(\eta+1)\Delta} \Phi_N(z) f_Z(z) \, dz \nonumber \\
    &\overset{(b)}{=} \frac{1}{1 - P_{\text{tail}}} \int_{0}^{\infty} \Phi_N(z) f_Z(z) \, dz \nonumber\\
    &\overset{(c)}{=} \frac{1}{1 - P_{\text{tail}}} \Pr(\mathcal{A}_\eta; f_Z) \label{new_noise_no_tail_Prob} \\
    &\ge \Pr(\mathcal{A}_\eta; f_Z),\label{eq:Pr-Acc-ge}
\end{align}
where (a) follows from \eqref{define_f_1} and (b) is due to the fact that $\Phi_N(z)=0$ for $z > (\eta+1)\Delta$, and (c) follows from Lemma \ref{lemma:prob_acceptance_general}.
Thus, the acceptance probability increases (or stays the same if $P_{\text{tail}}=0$).

Next, we check the conditional MSE, i.e., 
$\mathbb{E}\left[ \|\mathbf{U} - \hat{\mathbf{U}}\|_2^2 \mid \mathcal{A}_\eta; f_1 \right]$. 
 Using Lemma \ref{lemma:mse_general}, we can write
\begin{align}\label{equal_mse_skipping_tail}
    \mathbb{E}\left[ \|\mathbf{U} - \hat{\mathbf{U}}\|_2^2 \mid \mathcal{A}_\eta; f_1 \right] &= \frac{1}{4 \Pr(\mathcal{A}_\eta; f_1)} \int_{0}^{\infty} \Psi_N(z) f_1(z) \, dz \nonumber \\
    &\overset{(a)}{=} \frac{1 - P_{\text{tail}}}{4 \Pr(\mathcal{A}_\eta; f_Z)} \cdot \frac{1}{1 - P_{\text{tail}}} \int_{0}^{(\eta+1)\Delta} \Psi_N(z) f_Z(z) \, dz \nonumber \\
    &\overset{(b)}{=} \frac{1}{4 \Pr(\mathcal{A}_\eta; f_Z)} \int_{0}^{\infty} \Psi_N(z) f_Z(z) \, dz \nonumber \\
    &\overset{(c)}{=} \mathbb{E}\left[ \|\mathbf{U} - \hat{\mathbf{U}}\|_2^2 \mid \mathcal{A}_\eta; f_Z \right],
\end{align}
where (a) follows from  \eqref{define_f_1} and \eqref{new_noise_no_tail_Prob}, (b) follows from fact that $\Psi_N(z)=0$ for $z > (\eta+1)\Delta$, and (c) follows from Lemma \ref{lemma:mse_general}. Thus, removing the tail does not change the conditional MSE.

\subsection*{Step 2: Shifting mass from $z < (\eta-1)\Delta$}

Now consider the distribution $f_1(z)$ from Step 1, which is supported on $[0, (\eta+1)\Delta]$. We construct the final distribution $f^*_Z(z)$ by shifting all mass from $[0, (\eta-1)\Delta)$ to a Dirac delta function at $z = (\eta-1)\Delta$, as illustrated in Figure \ref{fig:noise_step2}.
Let 
\begin{align}\label{def:P_head}
    P_{\text{head}} \triangleq \int_{0}^{(\eta-1)\Delta} f_1(z) \, dz.
\end{align}
 We define
\begin{align}\label{def:f*}
    f^*_Z(z) = P_{\text{head}} \delta(z - (\eta-1)\Delta) + f_1(z) \mathbb{I}\left( (\eta-1)\Delta \le z \le (\eta+1)\Delta \right).
\end{align}

\begin{figure}[htbp]
    \centering
    \begin{tikzpicture}[scale=1.2]
        \draw[->] (-0.5,0) -- (6,0) node[right] {$z$};
        \draw[->] (0,-0.2) -- (0,2.5) node[above] {$f^*_Z(z)$};
        
        \draw (1.5,0.05) -- (1.5,-0.05) node[below, font=\small] {$(\eta-1)\Delta$};
        \draw (3.5,0.05) -- (3.5,-0.05) node[below, font=\small] {$(\eta+1)\Delta$};
        
        \draw[ultra thick, ->, purple] (1.5,0) -- (1.5,2.0) node[above, black] {$P_{\text{head}} = 1/2$};
        
        \draw[dashed, gray] (0.2,0) rectangle (1.3, 0.05);

        \fill[green!20] (1.8,0) rectangle (3.2, 0.36);
        \draw[thick, green!60!black] (1.8,0) rectangle (3.2, 0.36);
        \node at (2.5, 0.18) {$S'_2$};

        \node[anchor=west, font=\small] at (4, 1.5) {$\text{Area}(\delta) = 1/2$};
        \node[anchor=west, font=\small] at (4, 1.2) {$\text{Area}(S'_2) = 1/2$};
    \end{tikzpicture}
    \caption{Final optimal distribution $f^*_Z(z)$ (illustrating the construction in \eqref{def:f*}). All mass from the region $z < (\eta-1)\Delta$ is concentrated at the boundary point $(\eta-1)\Delta$. This shift maximizes the conditional MSE,  without decreasing the probability of acceptance.}
    \label{fig:noise_step2}
\end{figure}

First, we analyze the acceptance probability. We have
\begin{align}\label{eq:Pr-Acc-same}
    \Pr(\mathcal{A}_\eta; f^*_Z) &= \int_0^{\infty} \Phi_N(z) f^*_Z(z) \, dz \nonumber \\
    &\overset{(a)}{=} P_{\text{head}} \Phi_N((\eta-1)\Delta) + \int_{(\eta-1)\Delta}^{(\eta+1)\Delta} \Phi_N(z) f_1(z) \, dz \nonumber \\
    &\overset{(b)}{=} P_{\text{head}} \cdot 1 + \int_{(\eta-1)\Delta}^{(\eta+1)\Delta} \Phi_N(z) f_1(z) \, dz \nonumber \\
    &\overset{(c)}{=} \int_{0}^{(\eta-1)\Delta} 1 \cdot f_1(z) \, dz + \int_{(\eta-1)\Delta}^{(\eta+1)\Delta} \Phi_N(z) f_1(z) \, dz \nonumber \\
    &\overset{(d)}{=} \int_{0}^{\infty} \Phi_N(z) f_1(z) \, dz \nonumber \\
    &= \Pr(\mathcal{A}_\eta; f_1),
\end{align}
where (a) follows from the sifting property of the Dirac delta function, $\int_0^{\infty} h(z)\delta(z-z_0)\,dz = h(z_0)$ for $z_0 \ge 0$, and the definition of $f^*_Z(z)$ in \eqref{def:f*}; (b) follows from \eqref{eq:Phi_piecewise}, which implies $\Phi_N((\eta-1)\Delta) = 1$; (c) follows from the definition $P_{\text{head}}$ in \eqref{def:P_head}; and (d) follows from the fact that $\Phi_N(z) = 1$ for $0 \le z \le (\eta-1)\Delta$, and $\Phi_N(z) = 0$ for $z > (\eta+1)\Delta$, allowing us to combine the integration domains $[0, (\eta-1)\Delta]$ and $[(\eta-1)\Delta, \infty)$ back into $[0, \infty)$. Combining \eqref{eq:Pr-Acc-ge} and \eqref{eq:Pr-Acc-same}, we arrive at ${\Pr(\mathcal{A}_\eta; f^*_Z) \geq \Pr(\mathcal{A}_\eta; f_Z)}$, as claimed in the lemma.

Next, we examine the conditional MSE. Note that based on Lemma \ref{lemma:mse_general}, we have
\begin{align}
    \mathbb{E}\left[ \|\mathbf{U} - \hat{\mathbf{U}}\|_2^2 \mid \mathcal{A}_\eta; f^*_Z \right] &= \frac{1}{4 \Pr(\mathcal{A}_\eta; f^*_Z)}\int_{0}^{\infty} \Psi_N(z) f^*_Z(z) \, dz.
\end{align}
We start with analyzing the integral, and write
\begin{align}\label{numerator_compare}
    \int_{0}^{\infty} \Psi_N(z) f^*_Z(z) \, dz &\overset{(a)}{=} P_{\text{head}} \Psi_N((\eta-1)\Delta) + \int_{(\eta-1)\Delta}^{(\eta+1)\Delta} \Psi_N(z) f_1(z) \, dz \nonumber \\
    &\overset{(b)}{=} \Psi_N((\eta-1)\Delta) \left( \int_{0}^{(\eta-1)\Delta} f_1(z) \, dz \right) + \int_{(\eta-1)\Delta}^{(\eta+1)\Delta} \Psi_N(z) f_1(z) \, dz \nonumber \\
    &\overset{(c)}{\ge} \int_{0}^{(\eta-1)\Delta} \Psi_N(z) f_1(z) \, dz + \int_{(\eta-1)\Delta}^{(\eta+1)\Delta} \Psi_N(z) f_1(z) \, dz \nonumber \\
    &\overset{(d)}{=} \int_{0}^{\infty} \Psi_N(z) f_1(z) \, dz,
\end{align}
where (a) follows from the definition of $f^*_Z$ and the sifting property of the Dirac delta function, i.e.,  ${\int_0^{\infty} h(z)\delta(z-z_0)\,dz = h(z_0)}$ for $z_0 \ge 0$, ; in (b) we substituted the definition of $P_{\text{head}}$ from~\eqref{def:P_head}; (c) follows from the fact that $\Psi_N(z) = z^2 + \frac{N}{N+2}\Delta^2$ is strictly increasing for $0\leq z \leq (\eta-1)\Delta$, which implies $\Psi_N((\eta-1)\Delta) \ge \Psi_N(z)$ in the range of the first integral; and (d) follows from combining the integrals over $[0, (\eta+1)\Delta]$ and the fact that $\Psi_N(z) = 0$ for $z > (\eta+1)\Delta$, as implied from~\eqref{eq:Psi_piecewise}.

Therefore, using Lemma \ref{lemma:mse_general}, we can write 
\begin{align}
    \mathbb{E}\left[ \|\mathbf{U} - \hat{\mathbf{U}}\|_2^2 \mid \mathcal{A}_\eta; f^*_Z \right] &= \frac{1}{4 \Pr(\mathcal{A}_\eta; f^*_Z)}\int_{0}^{\infty} \Psi_N(z) f^*_Z(z) \, dz \nonumber \\
    &\overset{(a)}{\geq} \frac{1}{4 \Pr(\mathcal{A}_\eta; f^*_Z)} \int_{0}^{\infty} \Psi_N(z) f_1(z) \, dz\nonumber \\
    &\overset{(b)}{=} \frac{1}{4 \Pr(\mathcal{A}_\eta; f_1)} \int_{0}^{\infty} \Psi_N(z) f_1(z) \, dz\nonumber \\
    &= \mathbb{E}\left[ \|\mathbf{U} - \hat{\mathbf{U}}\|_2^2 \mid \mathcal{A}_\eta; f_1 \right] \nonumber \\
    &\overset{(c)}{=} \mathbb{E}\left[ \|\mathbf{U} - \hat{\mathbf{U}}\|_2^2 \mid \mathcal{A}_\eta; f_Z \right],
\end{align}
where (a) follows from \eqref{numerator_compare}, we used~\eqref{eq:Pr-Acc-same} in (b), and (c) follows from \eqref{equal_mse_skipping_tail}.
This completes the proof.

\section{Proof of Lemma \ref{lemma:exact_acc_noise_existence}}\label{proof:lemma:exact_acc_noise_existence}

Let $f_{Z,1}(z)$ be the probability density function of the adversarial noise magnitude satisfying the support condition defined in Lemma \ref{lemma:optimal_support}, i.e, $f_{Z,1}(z)=0$ for $z\notin [(\eta-1)\Delta, (\eta+1)\Delta]$. Let its acceptance probability be $\Pr(\mathcal{A}_\eta; f_{Z,1}) = \alpha_1$, where $\alpha_1 > \alpha$. The initial state of this distribution is visualized in Figure \ref{fig:noise_initial_acc}.

\begin{figure}[htbp]
    \centering
    \begin{tikzpicture}[scale=1.2]
        \draw[->] (-0.5,0) -- (6.5,0) node[right] {$z$};
        \draw[->] (0,-0.2) -- (0,2.0) node[above] {$f_{Z,1}(z)$};
        
        \draw (1.5,0.05) -- (1.5,-0.05) node[below, font=\small] {$(\eta-1)\Delta$};
        \draw (3.5,0.05) -- (3.5,-0.05) node[below, font=\small] {$(\eta+1)\Delta$};
        
        \fill[blue!20] (1.8,0) rectangle (3.2, 1.2);
        \draw[thick, blue] (1.8,0) rectangle (3.2, 1.2);
        \node at (2.5, 0.6) {$S_1$};

        \node[anchor=west, font=\small] at (0.75, 2) {$\text{Area}(S_1) =  1$};
        \node[anchor=west, font=\small, text width=6cm] at (0.75, 1.6) {$\Pr(\mathcal{A}_\eta; f_{Z,1}) = \alpha_1 > \alpha$};
    \end{tikzpicture}
    \caption{The initial noise distribution $f_{Z,1}(z)$ supported entirely within the acceptance region. In this example, $\alpha_1$ is higher than the target acceptance probability $\alpha$.}
    \label{fig:noise_initial_acc}
\end{figure}

We construct a new noise magnitude PDF, $f_{Z,2}(z)$ as follows
\begin{align}
    f_{Z,2}(z) = 
    \frac{\alpha}{\alpha_1} f_{Z,1}(z) + \left(1 - \frac{\alpha}{\alpha_1}\right) \delta(z - z_{\text{out}}),
\end{align}
where $z_{\text{out}} = (\eta+1)\Delta$. The adjustment process is shown in Figure \ref{fig:noise_adjusted_acc}, where the valid mass is reduced and the remainder is moved to a zero-acceptance point.

\begin{figure}[htbp]
    \centering
    \begin{tikzpicture}[scale=1.2]
        \draw[->] (-0.5,0) -- (6.5,0) node[right] {$z$};
        \draw[->] (0,-0.2) -- (0,2.5) node[above] {$f_{Z,2}(z)$};
        
        \draw (1.5,0.05) -- (1.5,-0.05) node[below, font=\small] {$(\eta-1)\Delta$};
        \draw (3.5,0.05) -- (3.5,-0.05) node[below, font=\small] {$z_{\text{out}} = (\eta+1)\Delta$};
        
        \fill[blue!10] (1.8,0) rectangle (3.2, 0.8);
        \draw[thick, blue] (1.8,0) rectangle (3.2, 0.8);
        \node at (2.5, 0.4) {$S_2$};

        \draw[ultra thick, ->, red] (3.5,0) -- (3.5,1.8) node[below right, black] {Mass $1 - \alpha/\alpha_1$};
        
        \node[anchor=west, font=\small] at (0.5, 2.1) {$\text{Area}(S_2) = \alpha = 2/3$};
        \node[anchor=west, font=\small] at (0.5, 1.8) {$\text{Mass at } \delta = 1/3$};

        \draw[<->, dashed] (2.4, 1.0) -- (3.4, 1.0) node[midway, above, font=\tiny] {Shift mass};
    \end{tikzpicture}
    \caption{The adjusted distribution $f_{Z,2}(z)$. The original mass is scaled down to exactly $\alpha$ (e.g., $2/3$). The discarded mass (e.g., $1/3$) is concentrated into a Dirac delta at the boundary $z_{\text{out}} = (\eta+1)\Delta$, ensuring the total integral is 1 while the acceptance probability is exactly $\alpha$.}
    \label{fig:noise_adjusted_acc}
\end{figure}

First, we calculate the probability of acceptance for the new distribution using \eqref{eq:prob_acceptance_lemma}. Noting that $\Phi_N(\cdot)$ is a continuous function and  $\Phi_N(z_{\text{out}}) = 0$ for $z_{\text{out}} = (\eta+1)\Delta$, we have
\begin{align}\label{eq:new_prob_acc_proof}
    \Pr(\mathcal{A}_\eta; f_{Z,2}) &= \int_{0}^{\infty} \Phi_N(z) f_{Z,2}(z) \, dz \nonumber \\
    &= \int_{0}^{(\eta+1)\Delta} \left( \frac{\alpha}{\alpha_1} f_{Z,1}(z) \right) \Phi_N(z)  \, dz + 
    \int_{(\eta+1)\Delta}^\infty\left(1 - \frac{\alpha}{\alpha_1} \right)\delta(z- z_{\text{out}}) \Phi_N(z)  \, dz\nonumber\\
    &=  \frac{\alpha}{\alpha_1} \underbrace{\int_{0}^{(\eta+1)\Delta} \Phi_N(z) f_{Z,1}(z) \, dz}_{\Pr(\mathcal{A}_\eta; f_{Z,1}) = \alpha_1} + 
     \left(1 - \frac{\alpha}{\alpha_1}\right)  \Phi_N(z_{\text{out}})\nonumber \\
    &= \frac{\alpha}{\alpha_1} (\alpha_1)+0 = \alpha.
\end{align}
Thus, the new noise magnitude PDF satisfies the constraint with equality.

Next, we evaluate the conditional MSE using \eqref{eq:mse_lemma_result}. We observe that the weighting function $\Psi_N(z)$, defined in \eqref{eq:Psi_piecewise}, is also continuous and zero for $z \ge (\eta+1)\Delta$. Therefore, $\Psi_N(z_{\text{out}}) = 0$. Thus, we have
\begin{align}\label{eq:Psi-2-in-1}
    \int_{0}^{\infty} \Psi_N(z) f_{Z,2}(z) \, dz &= \int_{0}^{(\eta+1)\Delta}  \left( \frac{\alpha}{\alpha_1} f_{Z,1}(z) \right) \Psi_N(z) \, dz + \left(1 - \frac{\alpha}{\alpha_1}\right) \Psi_N(z_{\text{out}})  \nonumber \\
    &= \frac{\alpha}{\alpha_1} \int_{0}^{(\eta+1)\Delta} \Psi_N(z) f_{Z,1}(z) \, dz.
\end{align}
Substituting this into the conditional MSE formula \eqref{eq:mse_lemma_result} yields
\begin{align}
    \mathbb{E}\left[ \|\mathbf{U} - \hat{\mathbf{U}}\|_2^2 \mid \mathcal{A}_\eta; f_{Z,2} \right] 
    &= \frac{1}{4 \Pr(\mathcal{A}_\eta; f_{Z,2})} \int_{0}^{\infty} \Psi_N(z) f_{Z,2}(z) \, dz \nonumber \\
    &\overset{(a)}{=} \frac{1}{4 \alpha} \left( \frac{\alpha}{\alpha_1} \int_{0}^{(\eta+1)\Delta} \Psi_N(z) f_{Z,1}(z) \, dz \right) \nonumber \\
    &= \frac{1}{4 \alpha_1} \int_{0}^{\infty} \Psi_N(z) f_{Z,1}(z) \, dz \nonumber \\
    &=\frac{1}{4 \Pr(\mathcal{A}_\eta; f_{Z,1})} \int_{0}^{\infty} \Psi_N(z) f_{Z,1}(z) \, dz \nonumber \\
    &= \mathbb{E}\left[ \|\mathbf{U} - \hat{\mathbf{U}}\|_2^2 \mid \mathcal{A}_\eta; f_{Z,1} \right],
\end{align}
where (a) follows from~\eqref{eq:new_prob_acc_proof} and~\eqref{eq:Psi-2-in-1}. 
This completes the proof of the lemma.

\section{Evaluation of the Game of coding for the Two-Dimensional Case (\texorpdfstring{$N=2$}{N=2})}\label{sec:case_N2}

In this section, we evaluate the general results established in Theorems~\ref{theorem: equivalence_two_problem} and \ref{theorem:Main_Minimax_Result} for the specific case of $N=2$. Note that based on \eqref{def:Gamma}, we have $\Gamma(2) = 1$, and thus based on \eqref{eq:Ball_Volume}, the volume (area) of a two-dimensional ball (disk) is given by   
\begin{align} \label{eq:V2_calc}
    V_2(r) = \frac{\pi^{2/2} r^2}{\Gamma(1 + 2/2)} = \frac{\pi r^2}{\Gamma(2)} = \pi r^2.
\end{align}

Next, we evaluate the  function $K_N(r, c)$. Recalling the definition in \eqref{eq:K_cap_def}, we have
\begin{align}
    K_N(r, c) = \frac{\pi^{(N-1)/2} r^N}{\Gamma(\frac{N+1}{2})} \int_{c/r}^{1} (1 - t^2)^{\frac{N-1}{2}} \, dt.
\end{align}
Substituting $N=2$, the coefficient depends on $\Gamma(3/2)$. Using the definition of the Gamma function in \eqref{def:Gamma}, we have $\Gamma(1.5) = \frac{\sqrt{\pi}}{2}$. Consequently, the coefficient simplifies to
\begin{align}
    \frac{\pi^{(2-1)/2} r^2}{\Gamma(\frac{2+1}{2})} = \frac{\sqrt{\pi} r^2}{\sqrt{\pi}/2} = 2r^2.
\end{align}
Thus, the kernel expression becomes
\begin{align} \label{eq:K2_integral_setup}
    K_2(r, c) = 2r^2 \int_{c/r}^{1} \sqrt{1 - t^2} \, dt.
\end{align}
We compute the integral in \eqref{eq:K2_integral_setup} using the standard substitution $t = \sin \theta$, which yields 
\begin{align}
    \int \sqrt{1-t^2} \, dt = \frac{1}{2} \left( t\sqrt{1-t^2} + \arcsin(t) \right).
\end{align}
Evaluating this from $c/r$ to $1$, and using the identity $\frac{\pi}{2} - \arcsin(x) = \arccos(x)$, we obtain
\begin{align} \label{eq:K2_result}
    K_2(r, c) &= 2r^2 \left[ \frac{\pi}{4} - \frac{1}{2} \left( \frac{c}{r}\sqrt{1-\frac{c^2}{r^2}} + \arcsin\left(\frac{c}{r}\right) \right) \right] \nonumber \\
    &= r^2 \arccos\left(\frac{c}{r}\right) - c\sqrt{r^2 - c^2}.
\end{align}

With the explicit form of $K_2(r, c)$ established in \eqref{eq:K2_result}, we proceed to evaluate $\Phi_2(z)$. First, recalling the definition of the intersection volume in \eqref{eq:V_lens_def}, for $N=2$ we have
\begin{align}
    V_{\text{lens}}(\Delta, \eta\Delta, z) = K_2(\Delta, u_c(z)) + K_2(\eta\Delta, z - u_c(z)),
\end{align}
where $u_c(z)$ is defined in \eqref{eq:uc_def} as
\begin{align}
    u_c(z) =\frac{z^2 + \Delta^2(1 - \eta^2)}{2z}.
\end{align}
Next, substituting this into the definition of $\Phi_N(z)$ in \eqref{eq:Phi_n_def}, and using the volume $V_2(\Delta) = \pi \Delta^2$ derived in \eqref{eq:V2_calc}, we obtain
\begin{align} \label{eq:Phi2_result}
    \Phi_2(z) = \frac{1}{\pi \Delta^2} &\Bigg[ \left( \Delta^2 \arccos\left(\frac{u_c(z)}{\Delta}\right) - u_c(z)\sqrt{\Delta^2 - u_c(z)^2} \right) \nonumber \\
    &\quad + \left( \eta^2\Delta^2 \arccos\left(\frac{z - u_c(z)}{\eta\Delta}\right) - (z - u_c(z))\sqrt{\eta^2\Delta^2 - (z - u_c(z))^2} \right) \Bigg].
\end{align}
Next, we determine the auxiliary functions $Q_2(r, d)$ and $J_2(r, d)$. Specifically, using $V_1(\rho) = 2\rho$ in~\eqref{eq:Q_n_def} with $N=2$ yields
\begin{align} \label{eq:Q2_calc}
    Q_2(r, d) &= \frac{r^2 - d^2}{3} \cdot 2\sqrt{r^2 - d^2} = \frac{2}{3} \left( r^2 - d^2 \right)^{3/2}.
\end{align}
For $J_2(r, d)$, substituting $N=2$ into \eqref{eq:J_n_def} gives
\begin{align} \label{eq:J2_calc}
    J_2(r, d) &= \frac{1}{2} r^2 K_2(r, d) + \frac{1}{2} d Q_2(r, d).
\end{align}
We now calculate the terms required for $\Psi_2(z)$, defined in \eqref{eq:Psi_n_def}. Note that, based on \eqref{def:v1} and \eqref{def:v2}, we have
$V_1 = K_2(\Delta, u_c(z))$ and $V_2 = K_2(\eta\Delta, z - u_c(z))$. We define $T_1$ as the first term in the numerator of \eqref{eq:Psi_n_def}
\begin{align} \label{eq:Psi2_term1}
    T_1 &\triangleq J_2(\Delta, u_c(z)) + z^2 V_1 \nonumber \\
    &\overset{(a)}{=} \left( \frac{1}{2} \Delta^2 V_1 + \frac{1}{2} u_c(z) Q_2(\Delta, u_c(z)) \right) + z^2 V_1 \nonumber \\
    &= \left( \frac{\Delta^2}{2} + z^2 \right) V_1 + \frac{u_c(z)}{2} Q_2(\Delta, u_c(z)),
\end{align}
where (a) follows from substituting \eqref{eq:J2_calc} and using the definition of $V_1$. Similarly, we define $T_2$ as the second term in the numerator of \eqref{eq:Psi_n_def}
\begin{align} \label{eq:Psi2_term2_calc}
    T_2 &\triangleq J_2(\eta\Delta, z - u_c(z)) + 4z^2 V_2 - 2z Q_2(\eta\Delta, z - u_c(z)) \nonumber \\
    &\overset{(b)}{=} \left[ \frac{1}{2} \eta^2 \Delta^2 V_2 + \frac{1}{2} (z - u_c(z)) Q_2(\eta\Delta, z - u_c(z)) \right] \nonumber \\
    &\quad + 4z^2 V_2 - 2z Q_2(\eta\Delta, z - u_c(z)) \nonumber \\
    &= \left( \frac{\eta^2 \Delta^2}{2} + 4z^2 \right) V_2 - \frac{3z + u_c(z)}{2} Q_2(\eta\Delta, z - u_c(z)),
\end{align}
where (b) follows from substituting \eqref{eq:J2_calc} and using the definition of $V_2$. Finally, combining $T_1$ and $T_2$ and dividing by $V_2(\Delta) = \pi \Delta^2$ results in the complete expression
\begin{align} \label{eq:Psi2_final}
    \Psi_2(z) &= \frac{1}{\pi \Delta^2} \Bigg[ \left( \frac{\Delta^2}{2} + z^2 \right) K_2(\Delta, u_c(z)) + \left( \frac{\eta^2 \Delta^2}{2} + 4z^2 \right) K_2(\eta\Delta, z - u_c(z)) \nonumber \\
    &\quad + \frac{u_c(z)}{2} Q_2(\Delta, u_c(z)) - \frac{3z + u_c(z)}{2} Q_2(\eta\Delta, z - u_c(z)) \Bigg],
\end{align}
where $Q_2$ is given by \eqref{eq:Q2_calc} and $K_2$ by \eqref{eq:K2_result}.

With the explicit expressions for $\Phi_2(z)$ and $\Psi_2(z)$ established in \eqref{eq:Phi2_result} and \eqref{eq:Psi2_final}, we have fully characterized all the underlying functions required by Theorem \ref{theorem:Main_Minimax_Result}. While the presence of transcendental terms in $\Phi_2(z)$ prevents an analytical derivation of the inverse function, the function $c_{\eta}(\alpha)$ defined in \eqref{eq:c_eta_result} can be  evaluated by applying the numerical procedure described in Remark \ref{rem:Numerical_Calculation} to these specific 2D case.

\end{document}